\theoremstyle{plain}
\newtheorem{theorem}{Theorem}[section]
\newtheorem{lemma}[theorem]{Lemma}
\newtheorem{proposition}[theorem]{Proposition}
\newtheorem{corollary}[theorem]{Corollary}
\theoremstyle{definition}
\newtheorem{definition}[theorem]{Definition}
\newtheorem{example}[theorem]{Example}
\newtheorem{remark}[theorem]{Remark}
\numberwithin{equation}{section}
\numberwithin{figure}{section}
\newcommand{\R}{\mathbb{R}}
\newcommand{\N}{\mathbb{N}}
\newcommand\mv{\mathsf{v}} % Notation for a vertex
\newcommand\mw{\mathsf{w}} % Another vertex
\newcommand\mz{\mathsf{z}} % Another vertex
\newcommand\me{\mathsf{e}} % Notation for an edge
\newcommand{\VertexSet}{\mathsf{V}} % Set of vertices of a metric graph
\newcommand{\EdgeSet}{\mathsf{E}} % Set of edges of a metric graph
\newcommand\NeuSet{\mathsf{N}} % Neumann-Set/Set of vertices of degree one
\newcommand{\cardV}{N} % Number of vertices
\newcommand{\Graph}{\mathcal{G}} % (Metric) graph
\newcommand{\parti}{{\mathcal P}} % Notation for a partition
\newcommand{\nenergy[1]}{{\Lambda}^N_{#1}} % Notation for natural partition energy $p$
\newcommand{\denergy[1]}{{\Lambda}^D_{#1}} % Notation for Dirichlet partition energy $p$
\newcommand{\PendTwoConn}{{\mathsf{P}}_2} % Notation for the set of pendant 2-connected subgraphs
\def\:{\thinspace:\thinspace}
\begin{document}

\title{Asymptotics and estimates for spectral minimal partitions of metric graphs} 

\author[M.~Hofmann]{Matthias Hofmann}
\address{Matthias Hofmann, Grupo de F\'isica Matem\'atica, Faculdade de Ci\^encias, Universidade de Lisboa, Campo Grande, Edif\'icio C6, P-1749-016 Lisboa, Portugal}
\email{mhofmann@fc.ul.pt}

\author[J.~B.~Kennedy]{James B.~Kennedy}
\address{James B.~Kennedy, Grupo de F\'isica Matem\'atica \textit{and} Departamento de Matem\'atica, Faculdade de Ci\^encias, Universidade de Lisboa, Campo Grande, Edif\'icio C6, P-1749-016 Lisboa, Portugal}
\email{jbkennedy@fc.ul.pt}

\author[D.~Mugnolo]{Delio Mugnolo}
\address{Delio Mugnolo, Lehrgebiet Analysis, Fakult\"at Mathematik und Informatik, Fern\-Universit\"at in Hagen, D-58084 Hagen, Germany}
\email{delio.mugnolo@fernuni-hagen.de}

\author[M.~Pl\"umer]{Marvin Pl\"umer}
\address{Marvin Pl\"umer, Lehrgebiet Analysis, Fakult\"at Mathematik und Informatik, Fern\-Universit\"at in Hagen, D-58084 Hagen, Germany}
\email{marvin.pluemer@fernuni-hagen.de}

%\date{\today}

\thanks{The authors would like to thank Pavel Kurasov and Ji\v{r}\'i Lipovsk\'y for helpful comments on the Weyl asymptotics of quantum graphs, as well as the anonymous referee for their very careful and thoughtful work in helping to improve the paper. The work of M.H.\ and J.B.K.\ was supported by the  Funda\c{c}\~ao para a Ci\^encia e a Tecnologia, Portugal, via the program ``Investigador FCT'', reference IF/01461/2015 (J.B.K.), and via project PTDC/MAT-CAL/4334/2014 (M.H. and J.B.K.). The work of D.M. and M.P.\ was supported by the Deutsche Forschungsgemeinschaft (Grant 397230547). This article is based upon work from COST Action 18232 MAT-DYN-NET, supported by COST (European Cooperation in Science and Technology), \url{www.cost.eu}.} % The work of all authors was partially supported by the COST Association (Action CA18232).}

\subjclass{34B45, 35P15, 49Q10, 81Q35}

\keywords{Quantum graphs,  spectral minimal partitions, Weyl asymptotics, spectral geometry}

\begin{abstract}
We study properties of spectral minimal partitions of metric graphs within the framework recently introduced in [Kennedy {\it et al}, Calc.\ Var.\ \textbf{60} (2021), 61]. We provide sharp lower and upper estimates for minimal partition energies in different classes of partitions; while the lower bounds are reminiscent of the classic isoperimetric inequalities for metric graphs, the upper bounds are more involved and mirror the combinatorial structure of the metric graph as well. Combining them, we deduce that these spectral minimal energies also satisfy a Weyl-type asymptotic law similar to the well-known one for eigenvalues of quantum graph Laplacians with various vertex conditions. Drawing on two examples we show that in general no second term in the asymptotic expansion for minimal partition energies can exist, but show that various kinds of behaviour are possible. We also study certain aspects of the asymptotic behaviour of the minimal partitions themselves.
\end{abstract}

\maketitle

\section{Introduction}
\label{sec:intro}

Spectral minimal partitions were first introduced on planar domains in~\cite{ConTerVer05} and have been a popular topic within spectral theory ever since; we refer the interested reader to the survey~\cite{BNHe17}. Roughly speaking associated with each partition $\parti$ of a domain $\Omega$ into $k$ subdomains $\Omega_1,\ldots,\Omega_k$, one can consider the $p$-mean, $\Lambda_p (\parti)$, $p \in [1,\infty]$, of the $k$-vector of first Dirichlet Laplacian eigenvalues on each subdomain $\Omega_i$. Minimising over all suitable partitions $\parti$ yields the \emph{spectral minimal partitions}, which attain this minimal value. On the one hand, such spectral minimal partitions represent a natural way to partition an object which reflects both its geometric and its metric structure; on the other, their story is closely intertwined with that of the Laplacian eigenvalues of the whole domain $\Omega$, and they are sometimes used as proxies for the eigenvalues of $\Omega$.

For all these reasons, it is also natural to study such partitions on metric graphs. In a recent joint work with Pavel Kurasov and Corentin Léna~\cite{KenKurLen20}, two of the present authors undertook what was perhaps the first systematic such study, although partitions of metric graphs had already been considered in a rather different context in \cite{BanBerRaz12}. In this case, instead of considering subdomains, one partitions a metric graph $\Graph$ into $k$ connected subgraphs $\mathcal G_1,\ldots,\mathcal G_k$, which we will call \emph{clusters}. Minimising the $p$-mean of the $k$-vector of eigenvalues of a suitable Laplacian restricted to each such cluster $\mathcal G_i$ again leads to some number $\mathcal L_{k,p}(\mathcal G)$; as before, the \emph{spectral minimal partitions} are the partitions attaining this minimal value.

However, already here differences emerge. The well-posedness of a number of different spectral minimal partition problems was shown in~\cite{KenKurLen20}. In particular one need not restrict to just Dirichlet eigenvalues on the clusters; on metric graph partitions one may also consider the first nontrivial eigenvalue of the Laplacian with natural (a.k.a.\ standard, Neumann--Kirchhoff) vertex conditions imposed at the cut points. Just as the nodal domains of an eigenfunction on the whole graph give rise to a Dirichlet partition, so do its \emph{Neumann domains} (see in particular \cite{AloBan19,AloBanBerEgg20}) give rise to such a ``Neumann'' partition; unlike on domains, one can now study the corresponding minimisation problem.

There is, likewise, considerable freedom as to how to construct the clusters in the first place: since in principle a partition arises by \emph{cutting} the graph in a finite number of places in the right way, deciding exactly where one is allowed to cut gives rise to several distinct notions of partition. (We will recall the details in Section~\ref{sec:prelim} and an appendix.)

At any rate, several qualitative properties of such partitions were also discussed in \cite{KenKurLen20}; in particular, the close relationship between the minimal partitions and their energies on the one hand, and eigenfunctions and eigenvalues of the Laplacian on the same metric graph on the other, was considered in some detail in~\cite[\S~5 and \S~7]{KenKurLen20}. This and other work (see below) suggest that, even more than on domains, spectral minimal energies of $k$-partitions may be reasonable proxies for the eigenvalues of the Laplacian on the whole graph, in addition to their interpretation as a way to partition a graph into $k$ ``analytically similar'' pieces. It is thus natural to consider their qualitative and quantitative properties, in particular in terms of how these spectral minimal energies depend on ``geometric quantities'' of metric graphs like the total length or the number of vertices of degree one, as well as their asymptotic behaviour for large $k$, a question of some interest on domains (see the discussion in \cite[\S~10.9]{BNHe17} as well as Remark~\ref{rem:hexag} below).

The dependence of Laplacian eigenvalues on the geometric, topological and metric properties of the graph has been intensively studied in recent years; we refer, among others, to the pioneering paper~\cite{Nic87}, as well as \cite{Fri05}, the recent contributions~\cite{KenKurMal16,BanLev17,BerKenKur19}, and the references therein, which seek to estimate the eigenvalues of Laplacians on metric graphs by a combination of metric and combinatorial quantities.

The current work is thus, firstly, devoted to estimating spectral minimal energies in terms of such properties of the graph, similar to eigenvalue estimates. In fact, our lower estimates for spectral minimal energies are mostly easy consequences of known isoperimetric inequalities for eigenvalues of metric graphs. Our upper estimates, on the contrary, require different methods: they are variational in nature, in that they are based on considering suitable test partitions whose energies can be efficiently estimated, in a way reminiscent of the approach in~\cite[\S~4]{BerKenKur17}.

But, secondly, as a notable by-product of our estimates, which are both sharp for each $k$ and asymptotically sharp for each graph as $k \to \infty$, we can obtain asymptotic relations of Weyl type for the spectral minimal energies which strongly recall the eigenvalue Weyl asymptotics. More precisely, we will show that the energies $\mathcal{L}_{k,p}$ grow as
\begin{equation}\label{eq:weyl-gen}
\frac{\pi^2}{|\Graph|^2}k^2 + \mathcal{O}(k)\quad \hbox{as }k\to \infty,
\end{equation}
exactly like the eigenvalues of various realisations of the Laplacian on compact metric graphs \cite{BolEnd09,CuWa05,Nic87,OdzSce19}; this also ties in with the result that, ``generically'' (roughly speaking, for a given graph topology without loops, for ``most'' possible edge lengths), the eigenfunctions of the $k$-th Laplacian eigenvalue of the whole graph, say with natural vertex conditions, have $k + \mathcal{O}(1)$ nodal and Neumann domains (see, e.g., \cite{BanBerRaz12} and \cite{AloBanBerEgg20}, respectively). We can also obtain a description of the corresponding minimal partitions (see below).

We will also show that \eqref{eq:weyl-gen} cannot be improved, that is, that in general there is no second term in the asymptotic expansion. By way of comparison, up until now this idea does not seem to have been formalised for the eigenvalue Weyl asymptotics, although Pavel Kurasov points out (private communication) that \cite[Theorem~2]{Kur08} can be used to show that on graphs with rationally dependent edge lengths, in general such a second term will not exist.

Further parallels between the spectral minimal energies and quantum graph Laplacian eigenvalues will be investigated in a forthcoming paper \cite{HofKen21}, where the focus will also be on interpreting partitions as suitable cuts of a given graph, and the relationship between the \emph{rank} of the cut and properties of the partition. This will also help to clarify the link between the Neumann-type partition problems considered here and the Neumann domains studied in \cite{AloBan19,AloBanBerEgg20}.

Let us sketch the plan of this paper. We recall all relevant definitions in Section~\ref{sec:prelim}, including the relevant notions of partition. Our main results, namely the principal two-sided estimates on the minimal partition energies $\mathcal L_{k,p}$, the asymptotics \eqref{eq:weyl-gen}, and also a result on the asymptotic behaviour of the optimal partitions themselves, are collected in Section~\ref{sec:main}. The lower bounds on $\mathcal L_{k,p}$ require different techniques from the upper bounds. Hence we group all the proofs of the lower bounds, including extensions of the results of Section~\ref{sec:main}, in Section~\ref{sec:estimates-minimal-lower} and the proofs and extensions of the upper bounds in Section~\ref{sec:estimates-minimal-upper}; in each section, we first discuss the case of Dirichlet partitions, which is more instructive and sometimes more delicate, and then give the corresponding results in the natural case. Section~\ref{sec:asymp-behaviour} is devoted to the proof of our main result on the asymptotic behaviour of the spectral minimal partitions (Theorem~\ref{thm:asymptotic-size}), which controls the size of the maximal cluster within the partition, as well as certain consequences of this result (see Section~\ref{sec:main} for details).

Note that there is no general known inequality between the lowest non-zero eigenvalues of the Laplacian with Dirichlet and with natural vertex conditions. Hence it seems that, likewise, no immediate inequality between spectral minimal energies with Dirichlet and natural vertex conditions is available. In particular, this means as yet no interlacing techniques are available, e.g.\ for the purpose of proving asymptotics of energies of spectral minimal $k$-partitions as $k\to \infty$: all our results have to be proved separately for the cases of Dirichlet and natural conditions.

We conclude our note by discussing, in Section~\ref{sec:asymp}, two simple illustrative examples which allow us to show the rich behaviour of the correction term $\mathcal O(k)$ in~\eqref{eq:weyl-gen} and in particular show that the correction term $\mathcal O(k)$ in~\eqref{eq:weyl-gen} will not in general contain any first order term. Rather, if we write $\mathcal{L}_{k,p} = \frac{\pi^2}{L^2}k^2 + c_k k$, $c_k \in \R$, then the set of points of accumulation of the $c_k$ may be a finite set of any cardinality, as our first example of equilateral stars shows (Section~\ref{sec:equilateral}), or an interval, as our second example of two disjoint intervals of incommensurate lengths shows (Section~\ref{sec:irrational}). We also set up the former to give an explicit example for which there is no second term in the Weyl asymptotics at the same time, see Remark~\ref{rem:second-weyl}. It seems reasonable to expect these two types of behaviour to be generic for graphs with rationally dependent and independent edge lengths, respectively, and the same to hold for the Weyl asymptotics of the Laplacian eigenvalues, but it would go well beyond the scope of this note to explore the question further.

For ease of reference, we collect some useful isoperimetric-type inequalities for Laplacian eigenvalues on graphs in an appendix.

\section{Finite quantum graphs and spectral minimal partitions}
\label{sec:prelim}

As mentioned, the present paper is strongly motivated by the setting introduced in~\cite{KenKurLen20}, which we are going to recall and summarise briefly (and every now and then somewhat imprecisely) for the convenience of the reader. We start with some preliminaries on metric graphs.

A  \textit{compact metric graph} $\Graph$ is a finite disjoint union of bounded intervals \((\mathcal I_\me)_{\me\in\EdgeSet}\) connected in a network-like fashion by possibly gluing their endpoints. The set \(\VertexSet\) of glued end points will be referred to as the \emph{vertex set} of \(\Graph\); the set \(\EdgeSet\) will be refered to as the \emph{edge set} of \(\Graph\). The length of an edge \(\me\in\EdgeSet\) -- which is the length of the corresponding interval \(\mathcal I_\me\) -- will be denoted by \(|\me|\). Given a measurable subset \(\mathcal A\subset \Graph\) its Lebesgue measure will be denoted by \(|\mathcal A|\); note that this notation is in line with the notation for the edge lengths in \(\mathcal G\), as each edge \(\me\in\EdgeSet\) will be identified with the subset of \(\Graph\) corresponding to the interval \(\mathcal I_\me\). The \emph{total length} of \(\Graph\) will be denoted by
	\[L:=|\Graph|=\sum_{\me\in\EdgeSet}|\me|.\]
We say $\Graph$ is connected if it is connected as a metric space for the canonical distance function induced by Euclidean distance on each edge and the above construction. Before continuing, we introduce a number of quantities of a given graph $\mathcal{G}$ which will be important in the sequel. We refer to~\cite{Mug19} for a more rigorous definition of \(\Graph\) as a metric measure space.

\begin{definition}
\label{def:various}
Let $\mathcal{G}$ be a compact, connected metric graph.
\begin{enumerate}
\item The longest edge length of $\Graph$ will be denoted by $\ell_{\max} := \max_{\me \in \EdgeSet} |\me|$; the shortest edge length by $\ell_{\min} := \min_{\me \in \EdgeSet} |\me|$.
\item The \emph{Betti number} $\beta \geq 0$ of $\mathcal{G}$ is the number of independent cycles in $\mathcal{G}$; equivalently, $\beta = |\EdgeSet| - |\VertexSet| + 1$.
\item The \emph{girth} $\mathfrak{s} \in (0,\infty]$ is the length of the shortest cycle in $\mathcal{G}$; if $\mathcal{G}$ is a tree, then it is defined to be $\infty$.
\end{enumerate}
\end{definition}

Given such a metric graph $\Graph$, we may consider a ($k$-)partition $\parti$ of $\Graph$ as a family $\parti =\{\mathcal G_1,\ldots,\mathcal G_k\}$ of connected, distinct, metric subgraphs of $\mathcal G$ (its \textit{clusters}) with mutually disjoint interiors and whose union yields $\mathcal G$. This partition may formally be considered to be a collection of $k$ connected components of a \emph{cut} of $\Graph$; we refer to Appendix~\ref{sec:appart} or \cite[\S~2]{KenKurLen20} for the details.

Depending on which kinds of cuts are permitted, one may naturally construct different classes of partitions. In~\cite{KenKurLen20} several such classes were considered, two of which are:
\begin{itemize}
\item the most general class $\mathfrak{C}_k (\Graph)$ of all \emph{connected} exhaustive $k$-partitions of $\Graph$, where we merely insist that each cluster $\Graph_i$ be connected: in particular, we may cut through any points in the interior of $\Graph_i$ as long as it remains connected;
\item the more restrictive class $\mathfrak{R}_k (\Graph)$ of all \emph{rigid} exhaustive $k$-partitions, where we only admit cuts at \emph{separating points}, those points in $\Graph$ on the boundary between clusters, although these points may be cut arbitrarily as long as the $\Graph_i$ remain connected.
\end{itemize}
(Again, see Appendix~\ref{sec:appart}, in particular Definition~\ref{def:classification}, for the precise definitions.) These are, arguably, the most natural classes, as they are the ones that are closed with respect to the natural topology on the space of all partitions discussed in detail in~\cite[\S~3]{KenKurLen20}. The set of all rigid partitions is immediately seen to be a subset of the set of all loose partitions of $\Graph$. The difference between these notions is probably best explained by the example, given in Figures~\ref{fig:basic-example} and~\ref{fig:basic-example-loose} of a lasso graph, with two rigid partitions, and one non-rigid partition.

\begin{figure}[H]
\begin{tikzpicture}[scale=1.2]
\coordinate (a) at (0,0);
\coordinate (b) at (2,0);
\coordinate (c) at (4,0);
\draw[thick] (a) -- (b);
\draw[thick,bend left=90]  (b) edge (c);
\draw[thick,bend right=90]  (b) edge (c);
\draw[fill] (0,0) circle (1.75pt);
\draw[fill] (2,0) circle (1.75pt);
\draw[fill] (4,0) circle (1.75pt);
\node at (b) [anchor=west] {$\mv$};
\node at (a) [anchor=north] {$\mw$};
\node at (c) [anchor=east] {$\mz$};
\node at (1,0) [anchor=south] {$\me_1$};
\node at (3,0.6) [anchor=south west] {$\me_2$};
\node at (3,-0.6) [anchor=north west] {$\me_3$};
\end{tikzpicture}
\caption{The lasso $\Graph$.}\label{fig:basic-example}
\end{figure}
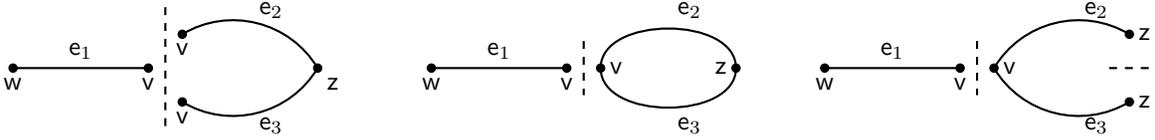
\begin{figure}[H]
\centering
\begin{minipage}{.3\textwidth}
\centering
\begin{tikzpicture}[scale=0.9]
\coordinate (e) at (6.5,0);
\coordinate (f) at (8.5,0);
\coordinate (g) at (9,0.5);
\coordinate (h) at (9,-0.5);
\coordinate (i) at (11,0);
\draw[thick] (e) -- (f);
\draw[thick,bend right=45] (i) edge (g);
\draw[thick,bend left=45] (i) edge (h);
\draw[fill] (6.5,0) circle (1.75pt);
\draw[fill] (8.5,0) circle (1.75pt);
\draw[fill] (9,0.5) circle (1.75pt);
\draw[fill] (9,-0.5) circle (1.75pt);
\draw[fill] (11,0) circle (1.75pt);
\node at (e) [anchor=north] {$\mw$};
\node at (f) [anchor=north] {$\mv$};
\node at (g) [anchor=north] {$\mv$};
\node at (h) [anchor=north] {$\mv$};
\node at (i) [anchor=north west] {$\mz$};
\node at (7.5,0) [anchor=south] {$\me_1$};
\node at (10,0.6) [anchor=south west] {$\me_2$};
\node at (10,-0.6) [anchor=north west] {$\me_3$};
\draw[thick,dashed] (8.75,0.9) -- (8.75,-0.9);
\end{tikzpicture}
\end{minipage}
\begin{minipage}{.3\textwidth}
\centering
\begin{tikzpicture}[scale=0.9]
\coordinate (a) at (-.5,0);
\coordinate (b1) at (1.5,0);
\coordinate (b) at (2,0);
\coordinate (c) at (4,0);
\draw[thick] (a) -- (b1);
\draw[thick,bend left=90]  (b) edge (c);
\draw[thick,bend right=90]  (b) edge (c);
\draw[fill] (a) circle (1.75pt);
\draw[fill] (b) circle (1.75pt);
\draw[fill] (b1) circle (1.75pt);
\draw[fill] (c) circle (1.75pt);
\node at (b) [anchor=west] {$\mv$};
\node at (b1) [anchor=north] {$\mv$};
\node at (a) [anchor=north] {$\mw$};
\node at (c) [anchor=east] {$\mz$};
\draw[thick,dashed] (1.75,0.4) -- (1.75,-0.4);
\node at (0.5,0) [anchor=south] {$\me_1$};
\node at (3,0.6) [anchor=south west] {$\me_2$};
\node at (3,-0.6) [anchor=north west] {$\me_3$};
\end{tikzpicture}
\end{minipage}
\begin{minipage}{.3\textwidth}
\centering
\begin{tikzpicture}[scale=0.9]
\coordinate (a) at (0,0);
\coordinate (b) at (2,0);
\coordinate (c) at (2.5,0);
\coordinate (d1) at (4.5,-.5);
\coordinate (d2) at (4.5,.5);
\draw[thick] (a) -- (b);
\draw[thick,bend right=45] (c) edge (d1);
\draw[thick,bend left=45] (c) edge (d2);
\draw[fill] (a) circle (1.75pt);
\draw[fill] (b) circle (1.75pt);
\draw[fill] (c) circle (1.75pt);
\draw[fill] (d1) circle (1.75pt);
\draw[fill] (d2) circle (1.75pt);
\node at (a) [anchor=north] {$\mw$};
\node at (b) [anchor=north] {$\mv$};
\node at (c) [anchor=west] {$\mv$};
\node at (d1) [anchor=west] {$\mz$};
\node at (d2) [anchor=west] {$\mz$};
\node at (1,0) [anchor=south] {$\me_1$};
\node at (3.7,0.6) [anchor=south west] {$\me_2$};
\node at (3.7,-0.6) [anchor=north west] {$\me_3$};
\draw[thick,dashed] (2.25,0.4) -- (2.25,-0.4);
\draw[thick,dashed] (4.2,0) -- (4.8,0);
\end{tikzpicture}
\end{minipage}
\caption{Left and centre: two different rigid 2-partitions of $\Graph$ (the only separating point is $\mv$); right: a connected 2-partition of $\Graph$ which is not rigid (the only separating point is $\mv$ but we are additionally cutting through $\mz$; observe that additionally cutting through $\mv$ would not be admissible, as this would yield a 3-partition).}\label{fig:basic-example-loose}
\end{figure}

Given a metric subgraph $\mathcal{H}$ of $\mathcal{G}$, in practice a cluster of a partition, we can consider the function spaces $C(\mathcal{H})$, $L^2(\mathcal{H})$ and $H^1(\mathcal{H})$ defined on it in the usual way; given a subset $\VertexSet_0$ of $\VertexSet (\mathcal{H})$, $H^1_0(\mathcal{H}; \VertexSet_0)$ is the ideal of $H^1(\mathcal{H})$ consisting of all $H^1(\mathcal G)$-functions vanishing at $\VertexSet_0$. We then define the quadratic Dirichlet form
\begin{displaymath}
	a(f) := \int_{\mathcal{H}} |f'(x)|^2\, \textrm{d}x
\end{displaymath}
on the domain $H^1 (\mathcal{H})$ or $H^1_0 (\mathcal{H}; \VertexSet_0)$. In the former case, the associated operator is the Laplacian with so-called \emph{standard} or \emph{natural} vertex conditions; the functions in its domain are in $C(\mathcal{H}) \cap L^2(\mathcal{H})$ and are edgewise $H^2$, and satisfy a Kirchhoff condition; in the second case, the functions additionally satisfy a Dirichlet (zero) condition at every vertex in $\VertexSet_0$. Such Laplacians defined on metric graphs are usually known as \textit{quantum graphs}. Since $\mathcal{H} \subseteq \mathcal{G}$ is a compact metric graph, such Laplacians are self-adjoint operators with compact resolvent and in particular have discrete, real spectrum. We will be interested in their respective smallest nontrivial eigenvalues, which may be described variationally by
\begin{equation}
\label{eq:mu2}
	\mu_2 (\mathcal{H}) = \inf \left\{ \frac{\int_{\mathcal{H}} |f'(x)|^2\, \textrm{d}x}{\int_{\mathcal{H}} |f(x)|^2\, \textrm{d}x} :
		0 \neq f \in H^1 (\mathcal{H}) \text{ and } \int_\mathcal{H} f(x)\,\textrm{d}x = 0\right\}
\end{equation}
in the case of the Laplacian with standard vertex conditions, and
\begin{equation}
\label{eq:lambda1}
	\lambda_1 (\mathcal{H}) = \inf \left\{ \frac{\int_{\mathcal{H}} |f'(x)|^2\, \textrm{d}x}{\int_{\mathcal{H}} |f(x)|^2\, \textrm{d}x} :
		0 \neq f \in H^1_0 (\mathcal{H};\VertexSet_0) \right\}
\end{equation}
for the Laplacian with at least one Dirichlet condition, i.e., if $\VertexSet_0 \neq \emptyset$. Equality in each case is achieved exactly when $f$ is a corresponding eigenfunction. These eigenvalues may be shown to be strictly positive if $\mathcal{H}$ is connected.

Now given a $k$-partition $\mathcal P=\{\mathcal G_1,\ldots,\mathcal G_k\}$, since each cluster $\mathcal G_i$ is compact and connected, $\lambda_1 (\mathcal{G}_i)$ and $\mu_2 (\mathcal{G}_i)$ are well defined and strictly positive, where in the former case the Dirichlet conditions are taken at the cut points of $\Graph_i$. We can thus consider the following partition \textit{energies}
\begin{equation}
\label{eq:nenergy}
	\nenergy[p] (\parti) = \begin{cases} \left(\frac{1}{k}\sum\limits_{i=1}^k \mu_2(\Graph_i)^p\right)^{1/p}
	\qquad &\text{if } p \in [1,\infty),\\ \max\limits_{i=1,\ldots,k} \mu_2(\Graph_i) 
	\qquad &\text{if } p = \infty, \end{cases}
\end{equation}
and
\begin{equation}
\label{eq:denergy}
	\denergy[p] (\parti) = \begin{cases} \left(\frac{1}{k}\sum\limits_{i=1}^k \lambda_1(\Graph_i)^p\right)^{1/p}
	\qquad &\text{if } p \in [1,\infty),\\ \max\limits_{i=1,\ldots,k} \lambda_1(\Graph_i) 
	\qquad &\text{if } p = \infty, \end{cases}
\end{equation}
respectively. We can finally consider 
\[
\noptenergy[k,p](\Graph) = \min_{\parti \in \mathfrak{R}_k (\Graph)} \nenergy[p](\parti)\quad\hbox{and}\quad \noptenergylax[k,p](\Graph) = \min_{\parti \in \mathfrak{C}_k (\Graph)} \nenergy[p](\parti), \]
the minimum of $\nenergy[p] (\parti)$ over all rigid/connected $k$-partitions, respectively; and
\[
\doptenergy[k,p](\Graph) = \min_{\parti \in \mathfrak{R}_k (\Graph)} \denergy[p](\parti) = \min_{\parti \in \mathfrak{C}_k (\Graph)} \denergy[p](\parti),
\]
the minimum of $\denergy[p] (\parti)$ over all rigid $k$-partitions. It was proved in \cite[Corollary~4.8]{KenKurLen20} that under our assumptions on $\Graph$ all these problems do indeed admit minima (by~\cite[Lemma~4.3]{KenKurLen20}, and that it does indeed make no difference to study the minimum $\denergy[p] (\parti)$ over all rigid or all connected $k$-partitions. We generically refer to all these minima as \textit{spectral minimal energies} of $\mathcal G$ and, as mentioned, the corresponding minimising partitions as \textit{spectral minimal partitions}.

\section{Main results: asymptotic behaviour of the optimal energies and partitions}
\label{sec:main}

We start by summarising our principal results, which give concrete two-sided bounds on the quantities $\doptenergy[k,p] (\Graph)$, $\noptenergy[k,p] (\Graph)$ and $\noptenergylax[k,p] (\Graph)$, and as a consequence describe their asymptotic behaviour. Actually, we can say more, both about the asymptotic behaviour of the clusters of the optimal partitions, and in terms of concrete two-sided bounds on these quantities for finite $k$. The compact, connected metric graph $\Graph$ will be fixed throughout, and we recall, in addition to the notation from Definition~\ref{def:various}, that $\Graph$ is taken to have $|\EdgeSet| \geq 1$ edges, total length $L$, and $|\NeuSet|$ vertices of degree one.

\begin{theorem}\label{thm:dirichlet-asymptotics-first-term}
Let $p \in [1,\infty]$. Then
\begin{displaymath}
	\frac{\pi^2}{4k L^2} \left ( k^3 + 3  (k-\beta -|\NeuSet|)^3 \right ) \leq \doptenergy[k,p] (\Graph) \leq \frac{\pi^2}{L^2}\left(k+\left(|\EdgeSet|-1-\left\lfloor\frac{|\NeuSet|}{2}\right\rfloor\right)\right)^2
\end{displaymath}
for all sufficiently large $k \geq 2$, in particular for
\begin{displaymath}
	k \geq \max \left\{\beta+|\NeuSet|, \frac{L}{\ell_{\min}} + |\EdgeSet| - 1\right\}.
\end{displaymath}
In particular,
\begin{equation}
\label{eq:dirichlet-weyl}
	\doptenergy[k,p] (\Graph) = \frac{\pi^2}{L^2}k^2 + \mathcal{O}(k) \qquad \text{as } k \to \infty.
\end{equation}
\end{theorem}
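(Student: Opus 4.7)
The plan is to prove the two-sided bound by two independent arguments --- a lower bound via isoperimetric inequalities and a counting argument, and an upper bound via an explicit test partition --- and then deduce the Weyl asymptotic~\eqref{eq:dirichlet-weyl} by Taylor-expanding both sides in $k$.

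For the lower bound, I fix an arbitrary rigid partition $\parti = \{\Graph_1,\ldots,\Graph_k\} \in \mathfrak{R}_k(\Graph)$. Every cluster $\Graph_i$ has at least one Dirichlet vertex (a cut point), so the basic Nicaise/Friedlander-type isoperimetric inequality (collected in the appendix) yields $\lambda_1(\Graph_i) \geq \pi^2/(4|\Graph_i|^2)$. For those clusters that either (a)~contain a full cycle of $\Graph$ or (b)~carry at least two Dirichlet vertices, the sharper bound $\lambda_1(\Graph_i) \geq \pi^2/|\Graph_i|^2$ applies; call the set of such clusters $I_2$. A short combinatorial argument shows $I_1 := \{1,\dots,k\}\setminus I_2$ consists of tree clusters with exactly one cut vertex, each of which must contain at least one pendant vertex of $\Graph$ or fully enclose at least one independent cycle; since the total number of pendants is $|\NeuSet|$ and $\sum_i\beta(\Graph_i)\leq \beta$, we get $|I_2|\geq k-\beta-|\NeuSet|$. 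Writing $\lambda_1(\Graph_i) \geq \frac{\pi^2}{4|\Graph_i|^2} + \frac{3\pi^2}{4|\Graph_i|^2}\mathbf{1}_{\{i\in I_2\}}$ and applying Jensen's inequality separately to the two sums $\sum_{i=1}^k |\Graph_i|^{-2}$ (total mass $L$) and $\sum_{i\in I_2}|\Graph_i|^{-2}$ (total mass $\leq L$), one obtains
\begin{displaymath}
\sum_{i=1}^k \lambda_1(\Graph_i) \;\geq\; \frac{\pi^2}{4L^2}\bigl(k^3 + 3(k-\beta-|\NeuSet|)^3\bigr).
\end{displaymath}
Dividing by $k$ and using that the $p$-mean dominates the $1$-mean for $p \geq 1$ gives the claimed lower bound on $\doptenergy[k,p]$.

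For the upper bound, I would construct an explicit rigid test partition in two stages. First, decompose $\Graph$ into a disjoint union of intervals by cutting at selected interior vertices; the combinatorial key is that pendant vertices of $\Graph$ can be paired at the endpoints of paths in a Eulerian-type trail decomposition, so that the graph can be represented as a union of exactly $|\EdgeSet|-\lfloor|\NeuSet|/2\rfloor$ intervals whose endpoints are either pendants of $\Graph$ (natural) or cut vertices (Dirichlet). Second, further subdivide each interval into sub-clusters of two calibrated types: interior doubly-Dirichlet clusters of length $L/(k+c)$ and pendant-end Dirichlet--Neumann clusters of half that length, where $c := |\EdgeSet|-1-\lfloor|\NeuSet|/2\rfloor$. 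Elementary accounting shows that the total length is $L$ and the total cluster count is exactly $k$, and every cluster has the same first Dirichlet eigenvalue $\pi^2(k+c)^2/L^2$. The hypothesis $k \geq L/\ell_{\min}+|\EdgeSet|-1$ ensures that even the shortest edge of $\Graph$ is long enough to accommodate the required integer number of sub-clusters, making the construction feasible. Since all clusters share the same $\lambda_1$, the energy of this partition equals $\pi^2(k+c)^2/L^2$ for \emph{every} $p\in[1,\infty]$, yielding the upper bound.

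Finally, combining the two bounds and expanding as $k \to \infty$,
\begin{displaymath}
\frac{\pi^2}{4kL^2}\bigl(k^3 + 3(k-\beta-|\NeuSet|)^3\bigr) = \frac{\pi^2 k^2}{L^2} + \mathcal{O}(k), \qquad \frac{\pi^2}{L^2}(k+c)^2 = \frac{\pi^2 k^2}{L^2} + \mathcal{O}(k),
\end{displaymath}
which immediately gives the Weyl asymptotic~\eqref{eq:dirichlet-weyl}. I expect the upper bound construction to be the main obstacle: organising a Eulerian-type pairing of pendants together with the integer bookkeeping that forces exactly $k$ equi-$\lambda_1$ sub-clusters requires careful case analysis, especially for graphs with many internal vertices of odd degree and with short edges close to the threshold $\ell_{\min}$. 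The lower bound, by contrast, is largely routine once the appropriate isoperimetric inequalities from the appendix and the counting argument bounding $|I_1|$ by $\beta+|\NeuSet|$ are in place.
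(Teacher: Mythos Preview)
Your lower-bound argument has a real gap. You claim that any cluster $\Graph_i$ which either (a) contains a cycle or (b) carries at least two Dirichlet vertices satisfies the sharp bound $\lambda_1(\Graph_i)\geq \pi^2/|\Graph_i|^2$. Neither condition suffices. For (b), take a $3$-star with two short edges carrying Dirichlet conditions at their leaves and one long edge with a Neumann leaf: this has two Dirichlet vertices, yet after gluing them the resulting lasso is not doubly connected, and indeed $\lambda_1$ can be made arbitrarily close to $\pi^2/(4|\Graph_i|^2)$ by lengthening the Neumann edge. For (a), a lasso with its single Dirichlet vertex at the tip of the handle is the same counterexample. The correct criterion in Proposition~\ref{thm:nicaise}(2) is double connectedness \emph{after identifying all Dirichlet vertices}, which is strictly stronger than either of your conditions. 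The paper handles this by introducing the notion of a \emph{doubly connected pendant} of $\Graph$ (Definition~\ref{def:doubly-connected-pendant}): a cluster for which the sharp bound fails must contain either a degree-one vertex of $\Graph$ or one of the at most $|\PendTwoConn|\leq\beta$ such pendants, which gives the desired count $j_k\leq |\NeuSet|+|\PendTwoConn|\leq |\NeuSet|+\beta$. Your counting argument, by contrast, bounds the wrong set.

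Your upper-bound construction also has a gap, though of a different nature. You want to produce exactly $k$ clusters all sharing the common eigenvalue $\pi^2(k+c)^2/L^2$; this forces every doubly-Dirichlet sub-cluster to have length exactly $L/(k+c)$ and every Dirichlet--Neumann one exactly $L/(2(k+c))$. But then each of your $|\EdgeSet|-\lfloor|\NeuSet|/2\rfloor$ intervals must have length an integer or half-integer multiple of $L/(k+c)$, which is a nongeneric arithmetic condition on the edge lengths and fails for almost every graph. The paper avoids this entirely: it partitions \emph{each edge separately} into $m_\me$ sub-intervals chosen via $m_\me L/n\leq|\me|<(m_\me+1)L/n$ (with a half-shift for pendant edges), obtaining some $m=\sum_\me m_\me$ clusters whose energies are all $\leq \pi^2 n^2/L^2$, and then uses the monotonicity of $\doptenergy[k,\infty]$ in $k$ together with the two-sided bound $m-\lfloor|\NeuSet|/2\rfloor\leq n\leq m+|\EdgeSet|-1-\lfloor|\NeuSet|/2\rfloor$ to pass from $m$ to the desired $k$. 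No Eulerian pairing or exact calibration is needed.
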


This theorem will be an immediate consequence of the results of Sections~\ref{sec:dirichlet-lower} and~\ref{sec:dirichlet-upper}; see in particular Theorems~\ref{thm:improveddirichletestimate} and~\ref{thm:dopt-upper-bound}. Actually, we can give slightly sharper (but often more involved) lower bounds in some cases; in addition to Theorem~\ref{thm:improveddirichletestimate} we mention Corollary~\ref{cor:improveddirichletestimate-largek}.

\begin{theorem}
\label{thm:neumann-asymptotics-first-term}
Let $p \in [1,\infty]$. Then
\begin{equation}
\label{eq:neumann-two-sided}
	\frac{\pi^2}{L^2}k^2 \leq \noptenergylax[k,p] (\Graph) \leq \noptenergy[k,p] (\Graph) \leq \frac{\pi^2}{L^2}\big(k + (|\EdgeSet|-1)\big)^2.
\end{equation}
for all $k \geq 1$ in the case of the lower bound, and for all sufficiently large $k$ in the case of the upper bound, in particular for $k \geq 5|\EdgeSet|-1$. In particular,
\begin{equation}
\label{eq:neumann-weyl}
	\noptenergylax[k,p](\Graph),\,\noptenergy[k,p] (\Graph) = \frac{\pi^2}{L^2} k^2 + \mathcal{O} (k) \qquad \text{as } k \to \infty.
\end{equation}
\end{theorem}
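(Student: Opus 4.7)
The statement decomposes into three pieces that I would handle separately: the lower bound $\pi^2 k^2/L^2 \le \noptenergylax[k,p]$, the trivial middle inequality $\noptenergylax[k,p] \le \noptenergy[k,p]$, and the upper bound for $\noptenergy[k,p]$; the Weyl-type asymptotic \eqref{eq:neumann-weyl} then follows by sandwiching and noting that $(k+|\EdgeSet|-1)^2 = k^2 + \mathcal{O}(k)$. The middle inequality is immediate from $\mathfrak{R}_k(\Graph) \subseteq \mathfrak{C}_k(\Graph)$: minimising $\nenergy[p]$ over a smaller class yields a larger minimum.

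For the lower bound, fix any connected $k$-partition $\parti = \{\Graph_1,\ldots,\Graph_k\}$ and write $\ell_i := |\Graph_i|$, so that $\sum_i \ell_i = L$. My plan is to combine two ingredients: the Nicaise-type isoperimetric estimate $\mu_2(\Graph_i) \ge \pi^2/\ell_i^2$ (the interval minimises $\mu_2$ among compact connected metric graphs of fixed length, as collected in the appendix), and Jensen's inequality applied to the convex function $t \mapsto t^{-2p}$ on $(0,\infty)$, which yields
\[
\frac{1}{k}\sum_{i=1}^k \ell_i^{-2p} \;\ge\; \Bigl(\frac{1}{k}\sum_{i=1}^k \ell_i\Bigr)^{-2p} \;=\; \Bigl(\frac{L}{k}\Bigr)^{-2p}.
\]
Taking the $1/p$ power and multiplying by $\pi^2$ gives $\nenergy[p](\parti) \ge \pi^2 k^2/L^2$ for $p<\infty$; the case $p=\infty$ reduces to the pigeonhole observation $\min_i \ell_i \le L/k$ combined with the same isoperimetric bound.

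For the upper bound, the plan is to exhibit a concrete test partition $\parti^\star \in \mathfrak{R}_k(\Graph)$ whose clusters are all sub-intervals, so that $\mu_2(\Graph_i) = \pi^2/\ell_i^2$, each of length $\ell_i \ge L/(k+|\EdgeSet|-1)$; this bound then transfers uniformly to $\nenergy[p](\parti^\star)$ for every $p\in[1,\infty]$ at once. The naive approach is to cut $\Graph$ at every vertex of degree $\ge 3$, decomposing it into its $|\EdgeSet|$ open edges, and then subdivide each edge $\me$ into $m_\me$ equal pieces chosen so that $m_\me/|\me| \approx (k+|\EdgeSet|-1)/L$ while $\sum_{\me} m_\me = k$. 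The main obstacle, and what I expect to be the only genuinely delicate step, is that this breaks down when some edges have $|\me| < L/(k+|\EdgeSet|-1)$, forcing $m_\me = 0$ there; I would rescue the construction by leaving some degree-$\ge 3$ vertices uncut and absorbing the short edges into longer path-clusters via an Eulerian-style decomposition of $\Graph$ into at most $\lceil|\EdgeSet|/2\rceil$ edge-disjoint trails. The hypothesis $k \ge 5|\EdgeSet|-1$ should be exactly what is needed to guarantee enough flexibility to then cut the long trails into $k$ arcs all meeting the required length bound.

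Combining the two estimates yields
\[
\frac{\pi^2}{L^2}k^2 \;\le\; \noptenergylax[k,p](\Graph) \;\le\; \noptenergy[k,p](\Graph) \;\le\; \frac{\pi^2}{L^2}\bigl(k+|\EdgeSet|-1\bigr)^2,
\]
from which both \eqref{eq:neumann-two-sided} and the asymptotic \eqref{eq:neumann-weyl} are immediate. The lower bound and the sandwich are routine once the right isoperimetric ingredients are assembled; only the combinatorial trail-construction underlying the upper bound should require real work, and it is precisely the mismatch between $|\EdgeSet|$ and a uniform minimal-edge-length hypothesis that accounts for both the additive constant $|\EdgeSet|-1$ in the bound and the smallness threshold on $k$.
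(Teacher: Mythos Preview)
Your treatment of the lower bound and the middle inequality is correct and matches the paper's (Theorem~\ref{th:nopt-lower-bound}); the only cosmetic difference is that the paper reduces to $p=1$ via monotonicity of the power mean and then applies Jensen to $t\mapsto t^{-2}$, whereas you apply Jensen to $t\mapsto t^{-2p}$ directly for each finite $p$.

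For the upper bound your route diverges from the paper's, and the sketch has a real gap. The paper (Theorem~\ref{thm:nopt-upper-bound} together with Remark~\ref{rem:nopt-upper-bound}) does begin with the naive edge decomposition, but it does \emph{not} attempt to produce a partition with exactly $k$ clusters. Instead it builds an $m$-partition with $m=\sum_\me m_\me$ satisfying $m\le k\le m+|\EdgeSet|-1$, bounds $\noptenergy[m,\infty](\Graph)\le \pi^2 k^2/L^2$, and then invokes the monotonicity $\noptenergy[k_1,\infty](\Graph)\le\noptenergy[k_2,\infty](\Graph)$ for $4|\EdgeSet|\le k_1\le k_2$, established in \cite[Proposition~4.14]{KenKurLen20}, to pass from $m$ down to $k-|\EdgeSet|+1$; a final substitution $k\mapsto k+|\EdgeSet|-1$ gives the stated inequality. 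The threshold $k\ge 5|\EdgeSet|-1$ comes precisely from this monotonicity requirement $m\ge 4|\EdgeSet|$, not from any trail-counting flexibility as you conjecture.

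Your Eulerian-trail rescue, meanwhile, does not deliver what you need: a segment of a trail may revisit a vertex, so as a subgraph of $\Graph$ it need not be a path graph, and you lose the identity $\mu_2(\Graph_i)=\pi^2/|\Graph_i|^2$ on which your bound rests. The paper's more general Theorem~\ref{thm:nopt-upper-bound} handles exactly this issue by imposing a girth condition guaranteeing each segment is short enough to be cycle-free, but that reintroduces a metric constraint of the sort you were trying to avoid. The ingredient actually missing from your sketch is thus not sharper combinatorics but the non-obvious (and in fact false for small $k$; see Example~\ref{ex:nomon}) monotonicity of $\noptenergy[k,\infty](\Graph)$ in $k$.
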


This theorem follows from results in Sections~\ref{sec:neumann-lower} and~\ref{sec:neumann-upper}, in particular Theorems~\ref{th:nopt-lower-bound} and~\ref{thm:nopt-upper-bound} (the latter in conjunction with Remark~\ref{rem:nopt-upper-bound}). In this case, it is possible to say a fair amount about when there is equality in the lower bound in \eqref{eq:neumann-two-sided}; see Propositions~\ref{prop:nopt-lower-bound-equality} and~\ref{prop:nopt-subsequence}.

We can also give a description of the asymptotic behaviour of the minimal partitions realising $\doptenergy[k,p]$, $\noptenergy[k,p]$ etc. It is perhaps not surprising that for $k$ large enough all clusters become either intervals or stars, just as is the case for both the nodal and the Neumann domains of the $k$-th eigenfunction of the Laplacian on the whole graph, see \cite[Proposition~7.4]{AloBanBerEgg20}. Our main result states that in fact, for any $p \in [1,\infty]$, asymptotically all clusters are of length of order $1/k$: no clusters can remain too ``large''.

\begin{theorem}
\label{thm:asymptotic-size}
Fix $p \in [1,\infty]$ and, for each $k \geq 2$, let $\parti_k^N$, $\widetilde\parti_k^N$ and $\parti_k^D$ be any admissible partitions realising $\noptenergy[k,p](\Graph)$, $\noptenergylax[k,p] (\Graph)$ and $\doptenergy[k,p] (\Graph)$, respectively. Denote the size of the largest cluster of each by $L_{\max}^{N,r}(k)$, ${L}_{\max}^{N,c}(k)$ and $L_{\max}^D(k)$, respectively. Then
\begin{equation}
\label{eq:asymptotic-size}
	L_{\max}^{N,r}(k),\, {L}_{\max}^{N,c}(k),\, L_{\max}^D(k) = \mathcal{O}(k^{-1}) \qquad \text{as } k \to \infty.
\end{equation}
\end{theorem}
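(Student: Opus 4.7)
The plan is to combine the upper bounds on the optimal energies from Theorems~\ref{thm:dirichlet-asymptotics-first-term} and~\ref{thm:neumann-asymptotics-first-term} with cluster-wise isoperimetric lower bounds on the eigenvalues of each cluster, closing the argument with Jensen's inequality. \emph{First}, I would reduce all three cases to $p=1$: by the power-mean inequality, $\Lambda_1(\parti) \leq \Lambda_p(\parti)$ for any admissible partition $\parti$ and any $p \in [1,\infty]$, under either choice of vertex condition on the clusters. Applied to the minimising partitions $\parti_k^N$, $\widetilde\parti_k^N$, and $\parti_k^D$, the cited upper bounds therefore yield
\[
\Lambda_1(\parti_k) \;\leq\; \frac{\pi^2 (k + c)^2}{L^2}
\]
for some constant $c = c(|\EdgeSet|, |\NeuSet|, \beta)$ depending only on $\Graph$.

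\emph{Second}, writing $\ell_i := |\Graph_i|$ for each cluster, I would appeal to the isoperimetric inequalities collected in the appendix: $\mu_2(\Graph_i) \geq \pi^2/\ell_i^2$ for every cluster in a Neumann partition, and in the Dirichlet case $\lambda_1(\Graph_i) \geq \pi^2/\ell_i^2$ whenever $\Graph_i$ carries at least two Dirichlet vertices, with the weaker fallback $\lambda_1(\Graph_i) \geq \pi^2/(4\ell_i^2)$ otherwise. A topological counting shows that the number $W$ of \emph{weak} clusters (those with only a single Dirichlet vertex) in any Dirichlet $k$-partition is bounded by a constant $W_0(\Graph)$ independent of $k$; call the remaining $S = k - W$ clusters \emph{strong}, with the convention that in the two Neumann cases every cluster counts as strong. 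Summing the bounds across strong clusters and combining with the first step yields
\[
\sum_{\Graph_i \text{ strong}} \frac{1}{\ell_i^2} \;\leq\; \frac{k (k+c)^2}{L^2} \;=\; \frac{k^3}{L^2} + O\!\left(\frac{k^2}{L^2}\right).
\]

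\emph{Third}, let $\Graph_*$ be a largest cluster, with $\ell_* := L_{\max}(k) = |\Graph_*|$, and set $L_s := \sum_{\Graph_i\,\text{strong}} \ell_i$. Applying Jensen's inequality to the convex function $x \mapsto 1/x^2$ first on all $S$ strong clusters gives $S^3/L_s^2 \leq \sum_{\Graph_i\,\text{strong}} 1/\ell_i^2$; combined with the previous display and $S = k - O(1)$, this forces $L_s \geq L(1 - O(1/k))$, so in particular the total length of weak clusters is $L - L_s = O(L/k)$. If $\Graph_*$ happens to be weak, then $\ell_* \leq L - L_s = O(L/k)$ already. If $\Graph_*$ is strong, a second application of Jensen to the remaining $S - 1$ strong clusters gives
\[
\frac{(S-1)^3}{(L_s - \ell_*)^2} \;\leq\; \sum_{\substack{\Graph_i\,\text{strong}\\ i \neq *}} \frac{1}{\ell_i^2} \;\leq\; \frac{k^3}{L^2} + O\!\left(\frac{k^2}{L^2}\right),
\]
which rearranges asymptotically to $(L - \ell_*)^2 \geq L^2 (1 - O(1/k))$, hence $\ell_* = O(L/k) = O(1/k)$, uniformly across all three partition problems.

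The principal obstacle is the Dirichlet case: applying the weaker bound $\pi^2/(4\ell_i^2)$ uniformly to all clusters would cost a factor of four and only yield $\ell_* \leq L/2 + O(L/k)$, far too weak. Recovering the sharp $O(1/k)$ rate therefore depends crucially on the structural observation that $W$ is controlled by a topological invariant of $\Graph$ rather than by $k$, so that the Jensen step can be restricted to the strong clusters and the leading constant in the isoperimetric inequality is preserved.
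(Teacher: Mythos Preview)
Your overall strategy---combine the upper Weyl-type bound on $\Lambda_1(\parti_k)$ with the cluster-wise isoperimetric inequalities and close via Jensen---is exactly the paper's, and in the Neumann cases it goes through as written. The organisation via two direct Jensen applications (first to bound the total weak length, then to isolate $\ell_*$) is in fact somewhat cleaner than the paper's contradiction argument, and yields the same conclusion.

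There is, however, a genuine gap in the Dirichlet case. The claim that $\lambda_1(\Graph_i)\ge \pi^2/\ell_i^2$ holds \emph{whenever $\Graph_i$ carries at least two Dirichlet vertices} is false. The improved inequality in Proposition~\ref{thm:nicaise}(2) requires that, upon identifying all Dirichlet vertices, the resulting graph be doubly connected; merely having two Dirichlet vertices is not enough. For a concrete counterexample, take an equilateral $3$-star of total length $3$ with Dirichlet conditions at two leaves and Neumann at the third: one computes $\sqrt{\lambda_1}=\arctan\sqrt{2}\approx 0.955$, so $\lambda_1\approx 0.913<\pi^2/9\approx 1.097$. Such a cluster arises naturally in a partition whenever a degree-one vertex of $\Graph$ sits in its interior.

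The fix is exactly what the paper does: declare a cluster \emph{weak} if it contains a degree-one vertex of $\Graph$ or a doubly connected pendant of $\Graph$. There are at most $|\NeuSet|+|\PendTwoConn|$ such clusters, and for every remaining (\emph{strong}) cluster one checks that identifying all its Dirichlet vertices yields a doubly connected graph, so Proposition~\ref{thm:nicaise}(2) genuinely applies and gives $\lambda_1(\Graph_i)\ge\pi^2/\ell_i^2$. With this corrected definition of weak/strong, both of your requirements---$W$ bounded independently of $k$, and the sharp isoperimetric bound on strong clusters---are satisfied, and the rest of your argument goes through unchanged.
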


\begin{remark}\label{rem:hexag}
One of the main open problems in the theory of spectral minimal partitions for planar domains $\Omega$ is the so-called \textit{hexagonal conjecture} that seems to go back to Caffarelli and Lin, see~\cite[\S~10.9.1]{BNHe17}, which postulates that
\begin{equation}
\label{eq:hex-con}
\lim_{k\to \infty}\frac{\doptenergy[k,p]}{k}=\frac{\lambda}{|\Omega|},
\end{equation}
where $\lambda$ is the lowest eigenvalue of the Dirichlet Laplacian on a regular hexagon of unit area (regular hexagons being the tesselating planar domains with minimal first Dirichlet eigenvalue). Of course, on graphs, the geometric side of this question disappears: the correct counterparts of hexagons are just intervals. However, Theorems \ref{thm:dirichlet-asymptotics-first-term} and \ref{thm:asymptotic-size} still cover the natural \emph{analytic} counterpart of \eqref{eq:hex-con}, that
\[
\lim_{k\to \infty}\frac{\doptenergy[k,p]}{k^2}=\frac{\pi^2}{L^2},
\]
including the ``balancing'' statement that in the limit the size of the clusters in the optimal partitions becomes uniform, for every fixed $p \in [1,\infty]$.
\end{remark}

Due to parallels between the respective proofs in the Dirichlet and natural cases, we will group the lower bounds together in Section~\ref{sec:estimates-minimal-lower} and the upper bounds in Section~\ref{sec:estimates-minimal-upper}; the proof of Theorem~\ref{thm:asymptotic-size} will be given in Section~\ref{sec:asymp-behaviour}, where we also collect a couple of results (improved bounds, Corollary~\ref{cor:improveddirichletestimate-largek}, and a monotonicity statement for $\noptenergy[k,p]$ as a function of $k$ for $k$ sufficiently large, Theorem~\ref{thm:neumann-monotonicity}) which follow from Theorem~\ref{thm:asymptotic-size}. We also show that this monotonicity result does not necessarily hold for all $k$, see Example~\ref{ex:nomon}. Finally, we recall that Section~\ref{sec:asymp} is devoted to the non-existence of a second term (i.e., term of first order) in the asymptotic expansions \eqref{eq:dirichlet-weyl} and \eqref{eq:neumann-weyl}. We also set up one of our examples to give an example that there need not be any second term in the Weyl asymptotics for $\mu_k$ (see Remark~\ref{rem:second-weyl}).

%\section{Bounds on spectral energies}
\section{Lower bounds}
\label{sec:estimates-minimal-lower}

\subsection{Dirichlet partitions} 
\label{sec:dirichlet-lower}
We first consider lower bounds on the optimal Dirichlet partition energy $\doptenergy[k,p] (\Graph)$.

\begin{theorem}\label{th:dopt-general-lower-bound}
Let $\Graph$ be a compact and connected metric graph with total length $L>0$. For any $p \in [1,\infty]$ and any $k\geq 2$, we have
\begin{equation}
\label{eq:dopt-general-lower-bound}
	\doptenergy[k,p] (\Graph) \geq \frac{\pi^2k^2}{4L^2}.
\end{equation}
Equality implies that $\Graph$ is an equilateral $k$-star $\mathcal{S}_k$.
\end{theorem}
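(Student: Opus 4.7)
The strategy is to reduce the partition energy to a sum over clusters via a sharp cluster-wise isoperimetric-type bound, and then to combine via Jensen's inequality. For any rigid $k$-partition $\parti = \{\Graph_1, \ldots, \Graph_k\}$ of $\Graph$ with $k \geq 2$, each cluster $\Graph_i$ carries at least one Dirichlet condition, namely at any cut point adjacent to it. The relevant Nicaise-type isoperimetric inequality collected in the appendix then yields
\[
\lambda_1(\Graph_i) \;\geq\; \frac{\pi^2}{4\,|\Graph_i|^2},
\]
with equality if and only if $\Graph_i$ is isometric to an interval of length $|\Graph_i|$ carrying a Dirichlet condition at exactly one of its two endpoints.

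For $p \in [1,\infty)$, I would exploit the convexity of the function $x \mapsto x^{-2p}$ on $(0,\infty)$ together with the length constraint $\sum_{i=1}^k |\Graph_i| = L$. Jensen's inequality gives
\[
\denergy[p](\parti)^p \;\geq\; \frac{\pi^{2p}}{4^p\,k}\sum_{i=1}^k |\Graph_i|^{-2p} \;\geq\; \frac{\pi^{2p}}{4^p}\left(\frac{L}{k}\right)^{-2p},
\]
whence $\denergy[p](\parti) \geq \pi^2 k^2/(4L^2)$. For $p = \infty$, the same conclusion follows directly: since $\min_i |\Graph_i| \leq L/k$, one has $\max_i \lambda_1(\Graph_i) \geq \pi^2/(4\min_i |\Graph_i|^2) \geq \pi^2 k^2/(4L^2)$. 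Taking the infimum over admissible $\parti$ establishes \eqref{eq:dopt-general-lower-bound}.

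The main obstacle is the rigidity analysis for equality, which mixes an analytic with a combinatorial step. Suppose $\doptenergy[k,p](\Graph)$ equals $\pi^2k^2/(4L^2)$ and fix a minimising partition. Both the Nicaise bound and Jensen's inequality must be saturated on this partition; the latter forces $|\Graph_i| = L/k$ for every $i$, while the former forces each $\Graph_i$ to be isometric to an interval of length $L/k$ with \emph{exactly} one Dirichlet endpoint (a second cut endpoint would raise $\lambda_1$ to $\pi^2/|\Graph_i|^2 > \pi^2/(4|\Graph_i|^2)$). Hence each cluster is a path-shaped subgraph carrying one cut point at one end, and its opposite endpoint is necessarily a degree-one vertex of $\Graph$. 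The combinatorial step is then to show that all $k$ cut endpoints must in fact collapse onto a single vertex of $\Graph$: from the unique cut point of any cluster $\Graph_i$ one cannot reach the cut point of a distinct cluster $\Graph_j$ through the interior of $\Graph$ without either re-entering $\Graph_i$ (impossible, as its other end is a pendant vertex of $\Graph$) or else crossing a cluster with a second cut point (impossible, by what we just showed). Connectedness of $\Graph$ therefore forces all $k$ cuts to coincide at a common vertex $v$, which makes $\Graph$ the equilateral $k$-star $\mathcal{S}_k$ centred at $v$ with $k$ pendant edges of length $L/k$, as claimed.
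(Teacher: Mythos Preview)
Your proof is correct and follows essentially the same route as the paper: Nicaise' inequality on each cluster combined with Jensen, then the equality analysis forcing intervals of equal length with a single Dirichlet endpoint, which assembles into the equilateral $k$-star. The only cosmetic differences are that the paper reduces to $p=1$ via monotonicity of $\doptenergy[k,p]$ in $p$ rather than treating general $p$ directly, and that your equality discussion is spelled out in more combinatorial detail (note only that your sentence ``both the Nicaise bound and Jensen's inequality must be saturated'' does not literally apply to your $p=\infty$ argument, though the conclusion there follows just as easily from $\lambda_1(\Graph_i)\le \pi^2k^2/(4L^2)$ forcing $|\Graph_i|\ge L/k$ for all $i$).
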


Observe that the special case of $p=\infty$ can also be obtained from combining~\cite[Prop.~5.5]{KenKurLen20} and~\cite[Thm.~1]{Fri05}.

\begin{proof}
Since $\doptenergy[k,p] (\Graph)$ is monotonically increasing in $p \in [1,\infty]$ (see \cite[Prop.~7.1]{KenKurLen20}), it suffices to prove \eqref{eq:nopt-lower-bound} for $p=1$ only.  We suppose that $\Graph_1,\ldots,\Graph_k$ are the clusters of an optimal partition associated with $\doptenergy[k,1] (\Graph)$; then since each has at least one Dirichlet vertex, we may apply the version of Nicaise' inequality for Dirichlet problems cf.\ Proposition~\ref{thm:nicaise} to obtain $\lambda_1 (\Graph_i) \geq \pi^2/(4|\mathcal{G}_i|^2)$, $i=1,\ldots,k$. Thus, by
Jensen's inequality in discrete form applied to the convex map $x \mapsto x^{-2}$, $x>0$, we find
\begin{displaymath}
	\doptenergy[k,1] (\Graph) = \frac{1}{k}\sum_{i=1}^k \lambda_1 (\Graph_i) \geq \frac{\pi^2}{4}\left( \frac{1}{k}\sum_{i=1}^k |\Graph_i|^{-2}
	\right) \geq \frac{\pi^2 k^2}{4L^2}.
\end{displaymath}
This proves \eqref{eq:dopt-general-lower-bound}. For the case of equality, first note that there is equality in Proposition~\ref{thm:nicaise}.(1) if and only if $\Graph_i$ is an interval of length $|\Graph_i|$, with one Dirichlet and one Neumann endpoint (i.e., vertex); this is an immediate consequence of \cite[Lemma~3]{Fri05} together with the variational characterisation of $\lambda_1$. Moreover, equality in Jensen's inequality implies that $|\Graph_1| = \ldots = |\Graph_k| = L/k$. Hence equality in \eqref{eq:dopt-general-lower-bound} (for any $p\geq 1$ and any $k\geq 2$) is only possible if all the $\Graph_i$ are intervals of length $L/k$ with one Dirichlet and one Neumann endpoint. Since the boundary between neighbouring clusters is always marked by a Dirichlet vertex, the only possible connected metric graph that can have these graphs as partition clusters is $\mathcal{S}_k$.
\end{proof}

\begin{remark}\label{rem:dopt-general-lower-bound}
The theorem contains the statement that the optimal $k$-partition of an equilateral $k$-star $\mathcal{S}_k$, for any $p \in [1,\infty]$, is the expected one, i.e., where each edge is a cluster. More interestingly, this partition reflects the nodal pattern of $\lambda_k (\mathcal{S}_k)$; and $\mathcal{S}_k$ is also the (unique) minimiser of $\lambda_k (\Graph)$ among all graphs of fixed total length, as proved by Friedlander \cite{Fri05}. As with Friedlander's inequality, Theorem~\ref{th:dopt-general-lower-bound} implies in particular that the minimal possible values for $\doptenergy[k,p] (\Graph)$ (among all possible graphs $\Graph$ of given length $L$) do not exhibit the asymptotic behaviour $\pi^2k^2/L^2$ which would be consistent with the Weyl asymptotics of each fixed graph.
\end{remark}

In both cases, the divergence from the Weyl asymptotics is due to the factor of 1/4 appearing in Nicaise' inequality for $\lambda_1$, which reflects the case of the interval with only one Dirichlet endpoint. To recover the asymptotically correct value, there needs to be a reasonable ``distribution'' of Dirichlet vertices in the graph; in particular, an improved inequality can only be valid for sufficiently large $k$ or for special classes of graphs. Before stating our improved estimates, we recall that a connected metric graph is called \emph{doubly connected} if it is not simply connected as a metric space, i.e., if at least two edges need to be deleted in order to make it disconnected. We refer to Section \ref{sec:equilateral} for a detailed discussion of the asymptotics for equilateral stars.

\begin{definition}
\label{def:doubly-connected-pendant}
Let \(\mathcal G\) be a compact and connected metric graph. We will call a metric subgraph \(\mathcal G'\subset \mathcal G\) a \emph{doubly connected pendant} of \(\mathcal G\) if \(\mathcal G'\) has non-empty interior, \(\mathcal G'\) is doubly connected and there is exactly one edge \(\me\in\mathcal E\) of strictly positive length connecting \(\mathcal G'\) with its complement $\Graph \setminus \Graph'$. The set of all doubly connected pendants of \(\mathcal G\) will be denoted by \(\PendTwoConn\).
\end{definition}

\begin{example}\label{exa:pc2}
Note that Definition~\ref{def:doubly-connected-pendant} explicitly requires the existence of a \emph{bridge} (of positive length) as a precondition for the existence of any doubly connected pendants. A dumbbell graph (with non-degenerate handle) has two doubly connected pendants, consisting of its two loops. More generally, an {$(m,1,m)$}-pumpkin chain (see~\cite[\S~5]{BerKenKur19}) has, for $m>1$, two doubly connected pendants (the two $m$-pumpkins) but Betti number $2(m-1)$. However, figure-eight graphs and, more generally, flower graphs -- indeed, all doubly connected graphs -- have \emph{none}.
\end{example}

Note that any two distinct doubly connected pendants are disjoint, and that necessarily the Betti number satisfies $\beta \geq |\PendTwoConn|$, as any cycles belonging to different doubly connected pendants are necessarily independent.

\begin{theorem}\label{thm:improveddirichletestimate}
Let $\Graph$ be a compact and connected metric graph with total length $L>0$, $|\NeuSet|$ vertices of degree one and \(|\PendTwoConn|\) doubly connected pendants. Fix $k\ge 2$ and $p\in [1,\infty]$. Then for any $k\ge |\NeuSet|+|\PendTwoConn|$
 we have
\begin{equation}
\label{eq:improveddirichletestimate}
	\doptenergy[k,p] (\Graph) \ge \frac{\pi^2}{4k L^2} \left ( k^3 + 3  (k-|\NeuSet|-|\PendTwoConn|)^3 \right ).
\end{equation}
\end{theorem}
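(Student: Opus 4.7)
The plan is to refine the proof of Theorem~\ref{th:dopt-general-lower-bound} by upgrading the weak Nicaise bound $\lambda_1(\Graph_i) \geq \pi^2/(4|\Graph_i|^2)$ to the sharper Friedlander-type estimate $\lambda_1(\Graph_i) \geq \pi^2/|\Graph_i|^2$ on those clusters whose Dirichlet boundary is ``rich enough'', and then to combine both bounds by a double application of Jensen's inequality. By the monotonicity of $\doptenergy[k,p]$ in $p \in [1,\infty]$ (\cite[Prop.~7.1]{KenKurLen20}), it suffices to handle $p = 1$. Fix a minimiser $\parti = \{\Graph_1,\ldots,\Graph_k\}$, set $\ell_i := |\Graph_i|$, and classify a cluster as \emph{good} if the sharper bound $\lambda_1(\Graph_i) \geq \pi^2/\ell_i^2$ holds (via the appropriate inequality from the appendix) and \emph{bad} otherwise.

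The crux is a combinatorial claim: at most $|\NeuSet| + |\PendTwoConn|$ clusters can be bad. The underlying mechanism is that the sharper Friedlander-type bound applies whenever $\Graph_i$, after contracting all its Dirichlet cut vertices to a single point, is doubly connected. A cluster fails this condition only if it hides a ``pendant defect'' relative to its Dirichlet set, and such a defect can only arise in one of two ways: either $\Graph_i$ contains one of the $|\NeuSet|$ degree-one vertices of $\Graph$, which produces a genuine Neumann pendant inside $\Graph_i$; or the cut points of $\Graph_i$ lie strictly in the interior of the bridge attaching some doubly connected pendant $\Graph' \in \PendTwoConn$ of $\Graph$, in which case the resulting sliver of the bridge acts as a simply connected pendant inside $\Graph_i$. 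Since distinct doubly connected pendants of $\Graph$ are disjoint and each degree-one vertex of $\Graph$ belongs to a unique cluster, every bad cluster can be charged to a distinct element of $\NeuSet \cup \PendTwoConn$, giving the bound $b \leq |\NeuSet| + |\PendTwoConn|$ on the number $b$ of bad clusters. Here the hypothesis $k \geq |\NeuSet|+|\PendTwoConn|$ enters only to make the resulting quantity $k-b$ nonnegative.

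Writing $G \subset \{1,\ldots,k\}$ for the set of good indices, so $|G| = k - b \geq k - |\NeuSet| - |\PendTwoConn|$, the two bounds combine to give
\begin{equation*}
\sum_{i=1}^k \lambda_1(\Graph_i) \;\geq\; \frac{\pi^2}{4}\sum_{i=1}^k \frac{1}{\ell_i^2} \;+\; \frac{3\pi^2}{4}\sum_{i \in G}\frac{1}{\ell_i^2}.
\end{equation*}
Jensen's inequality applied to the convex function $x \mapsto x^{-2}$, together with the constraints $\sum_i \ell_i = L$ and $\sum_{i \in G} \ell_i \leq L$, yields both $\sum_i \ell_i^{-2} \geq k^3/L^2$ and $\sum_{i \in G}\ell_i^{-2} \geq |G|^3/L^2 \geq (k - |\NeuSet| - |\PendTwoConn|)^3/L^2$. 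Substituting these estimates and dividing by $k$ produces~\eqref{eq:improveddirichletestimate}. The main obstacle is the combinatorial charging argument above: one must verify precisely that the only geometric obstructions to the Friedlander-type bound for a cluster of a partition are the two listed ``defects'', which in turn requires a careful reading of the structural hypotheses behind the improved isoperimetric inequality collected in the appendix.
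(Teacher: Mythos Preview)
Your proposal is correct and follows essentially the same route as the paper: reduce to $p=1$ by monotonicity, observe that any cluster containing neither a degree-one vertex of $\Graph$ nor a doubly connected pendant of $\Graph$ becomes doubly connected upon identifying its Dirichlet vertices (so at most $|\NeuSet|+|\PendTwoConn|$ clusters fail the sharp bound $\lambda_1\geq\pi^2/\ell_i^2$), then split the energy and apply Jensen exactly as you do. Your description of the second ``defect'' is slightly too narrow---the cut points of a bad cluster need not lie on the bridge itself, only somewhere outside the pendant---but the paper handles this simply by counting clusters that \emph{contain} a doubly connected pendant, and your charging argument (bad cluster $\mapsto$ the pendant it contains) is injective for the same reason and yields the same bound.
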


The lower bound in Theorem~\ref{thm:dirichlet-asymptotics-first-term} is an immediate consequence of \eqref{eq:improveddirichletestimate} and the fact that $\beta \leq |\PendTwoConn|$. Also observe that if $\Graph$ is itself doubly connected, then $|\NeuSet|=|\PendTwoConn|=0$, whence \eqref{eq:improveddirichletestimate} reduces to
\begin{equation}\label{eq:dopt-lower-bound-edgeconnectivity}
	\doptenergy[k,p] (\Graph) \geq \frac{\pi^2 k^2}{L^2}
\end{equation}
for all $p \in [1,\infty]$ and $k\geq 2$.

The estimate \eqref{eq:improveddirichletestimate} is asymptotically sharp, in the sense that for any value of $p,|\NeuSet|,|\PendTwoConn|$ there exists a value of $k$ and a family of graphs $\Graph_\varepsilon$ such that, for these values of $p,|\NeuSet|,|\PendTwoConn|,k$ there is equality in \eqref{eq:improveddirichletestimate} as $\varepsilon \to 0$; see Remark~\ref{rem:dirichlet-sharp}. The somewhat complicated case of equality in \eqref{eq:improveddirichletestimate} is discussed in Remark~\ref{rem:all-pendants-are-created-equal}.

\begin{proof}[Proof of Theorem~\ref{thm:improveddirichletestimate}]
Without loss of generality, we may assume that \(k>|\NeuSet|+|\PendTwoConn|\), since \eqref{eq:improveddirichletestimate} reduces to \eqref{eq:dopt-general-lower-bound} for \(k=|\NeuSet|+|\PendTwoConn|\). Firstly, as before, by monotonicity it is sufficient to show \eqref{eq:improveddirichletestimate} for $p=1$. So suppose that $\parti=\{\Graph_1, \ldots, \Graph_k\}$ is an optimal $k$-partition of $\Graph$ for $\doptenergy[k,1] (\Graph)$; then at most $|\NeuSet|$ clusters of $\parti$ can contain a vertex of degree $1$ and at most $|\PendTwoConn|$ clusters can contain a doubly connected pendant of \(\Graph\). Suppose
\begin{displaymath}\label{eq:pendtwoestimatehere}
	j_k \le |\NeuSet|+|\PendTwoConn|<k
\end{displaymath}
of the clusters admit at least one vertex of degree $1$ or contain a doubly connected pendant; then after a renumbering if necessary we may assume that $\Graph_{j_k+1}, \ldots, \Graph_k$ contain neither a vertex of degree $1$ of $\Graph$ nor a doubly connected pendant of \(\mathcal G\): in particular, each \(\Graph_i\) for \(i>j_k\) has at least two boundary vertices that are thus equipped with a Dirichlet condition, and the graph obtained by merging all these vertices of degree \(1\) is doubly connected. Therefore, Proposition~\ref{thm:nicaise}.(2) is applicable to these clusters, yielding \(\lambda_1(\Graph_i)\geq \pi^2/|\mathcal G_i|^2\) for \(i>j_k\). Now, define
\begin{equation}
L_k:= \sum_{i=1}^{j_k} |\mathcal G_i|
\end{equation}
and note that \(L_k<L\) holds, since \(j_k<k\). Then, applying Proposition~\ref{thm:nicaise}.(1) to the other clusters and using Jensen's inequality as in the proof of Theorem~\ref{th:dopt-general-lower-bound}, we see that
\begin{equation}\label{eq:estimates-to-prove-improved-lower-dopt-bound}
\begin{split}
	\doptenergy[k,1] (\Graph) = \denergy[1](\parti) 
	&= \frac{\sum_{i=1}^{j_k} 4\lambda_1(\Graph_i)+ \sum_{i=j_k+1}^k \lambda_1(\Graph_i)}{4 k}
		\\
		&\qquad + \frac{3(k-j_k)}{4k}\frac{1}{k-j_k}\sum_{i=j_k+1}^{k} \lambda_1(\Graph_i)\\
	&\ge \frac{1}{4k}\sum_{i=1}^k \frac{\pi^2}{|\Graph_i|^2} + \frac{3(k-j_k)}{4k} \frac{1}{k-j_k} \sum_{i=j_k+1}^k \frac{\pi^2}{|\Graph_i|^2}\\
	&\ge \frac{1}{4} \frac{\pi^2 k^2}{L^2} + \frac{3(k-j_k)}{4k} \frac{\pi^2 (k-j_k)^2}{{(L-L_k)}^2}\\
	&\ge \frac{1}{4} \frac{\pi^2 k^2}{L^2} + \frac{3(k-j_k)}{4k} \frac{\pi^2 (k-j_k)^2}{{L}^2}\\
	&= \frac{\pi^2}{4k L^2} \left ( k^3 + 3  (k-j_k)^3 \right )\\
	&\ge  \frac{\pi^2}{4k L^2} \left ( k^3 + 3  (k-|\NeuSet|-|\PendTwoConn|)^3 \right ).
\end{split}
\end{equation}
This proves the claim.
\end{proof}

\begin{remark}
\label{rem:all-pendants-are-created-equal}
Let us briefly discuss the cases of equality in \eqref{eq:improveddirichletestimate}. We have already seen in Theorem \ref{th:dopt-general-lower-bound} that equality holds for \(k=|\NeuSet|+|\PendTwoConn|\) if and only if \(\mathcal G\) is the equilateral \(k\)-star. In the case \(k>|\NeuSet|+|\PendTwoConn|\) we need to analyse the estimates in \eqref{eq:estimates-to-prove-improved-lower-dopt-bound}. First of all, note that in this case $\doptenergy[k,p](\Graph) = \doptenergy[k,1](\Graph)$. Now the equalities in the fourth and sixth steps of \eqref{eq:estimates-to-prove-improved-lower-dopt-bound} imply \(L_k=0\) and \(|\NeuSet|+|\PendTwoConn|=j_k=0\). Moreover, equality in Jensen's inequality in the third step yields \(|\Graph_i|=\frac{L}{k}\) for \(i=1,\ldots,k\). Finally, equality in the second step, i.e., in Proposition~\ref{thm:nicaise}.(2), implies that every cluster \(\mathcal G_i\) of an optimal \(k\)-partition \(\mathcal P\) is a \textit{caterpillar graph}, i.e.\ a \(2\)-regular pumpkin chain of length \(\frac{L}{k}\) where one of the two end points (of degree two) is equipped with Dirichlet conditions, see also Figure~\ref{fig:caterpillar-graph}. Therefore, equality in \eqref{eq:improveddirichletestimate} holds for \(k>|\NeuSet|+|\PendTwoConn|\) if and only if \(\Graph\) is obtained by arbitrarily gluing a collection of caterpillar graphs at their Dirichlet vertices so that \(\Graph\) has no vertices of degree one -- in particular \(\mathcal G\) has to be doubly connected.
\end{remark}

\begin{remark}
\label{rem:dirichlet-sharp}
Also note that \eqref{eq:improveddirichletestimate} is asymptotically sharp if \(|\PendTwoConn|>0\) and \(k=|\PendTwoConn|+|\NeuSet|\), in the sense that there exists a family of graphs $\Graph_\varepsilon$ differing only by their edge lengths, for which there is equality in the limit as $\varepsilon \to 0$. To see this consider, for \(m\geq 1\) and \(n\geq 0\), an equilateral \(m+n\)-star graph where \(m\) of the degree one vertices are replaced with a loop of sufficiently small length \(\varepsilon>0\); when \(n=0\) these are the graphs considered in~\cite{KurSer18}. The graph $\mathcal W_{m,n}$ thus obtained has $|\NeuSet|=n$ vertices of degree one and $|\PendTwoConn|=m$ doubly connected pendants. One can show that for \(k=m+n\) an optimal \(k\)-partition for \(\doptenergy[k,p](\mathcal W_{m,n})\) is obtained by cutting through the centre vertex, i.e., it consists of \(m\) lasso graphs and \(n\) intervals with one Neumann and one Dirichlet vertex. For these graphs and $k=m+n$, the right-hand side of \eqref{eq:improveddirichletestimate} is just $\frac{\pi^2k^2}{4L^2}$, corresponding to the optimal energy $\doptenergy[m+n,p]$ of the equilateral $m+n$-star of total length $L$. If in $\mathcal{W}_{m,n}$ we let the length of the loops tend to zero, then stability of $\lambda_1$ with respect to this operation (see \cite{BeLaSu18}) implies that \(\doptenergy[k,p](\mathcal W_{m,n})\) indeed converges to the right-hand side of \eqref{eq:improveddirichletestimate}.
\end{remark}
\begin{figure}[h]
\begin{minipage}{0.49\textwidth}
\centering
\begin{tikzpicture}
\coordinate (a) at (1,0);
\coordinate (b) at (2,0);
\coordinate (c) at (4,0);
\coordinate (d) at (5.5,0);
\coordinate (e1) at (6.7,0.5);
\coordinate (e2) at (6.7,-0.5);
\draw[thick,bend left=70]  (a) edge (b);
\draw[thick,bend right=70]  (a) edge (b);
\draw[thick,bend left=70]  (b) edge (c);
\draw[thick,bend right=70]  (b) edge (c);
\draw[thick,bend left=70]  (c) edge (d);
\draw[thick,bend right=70]  (c) edge (d);
\draw[thick] (d)--(e1);
\draw[thick] (d)--(e2);
\draw[fill] (a) circle (1.75pt);
\draw[fill] (b) circle (1.75pt);
\draw[fill] (c) circle (1.75pt);
\draw[fill] (d) circle (1.75pt);
\draw[fill=white] (e1) circle (1.75pt);
\draw[fill=white] (e2) circle (1.75pt);
\end{tikzpicture}
\caption{A caterpillar graph with Dirichlet vertices marked in white.}\label{fig:caterpillar-graph}
\end{minipage}
\begin{minipage}{0.49\textwidth}
\centering
\begin{tikzpicture}
	\foreach \i in {1,2,3,4,5,6} {
		\coordinate (w\i) at (60*\i+30:1);
		\draw[thick] (0,0) -- (w\i);
		}
	\foreach \i in {5,6} {
		\coordinate (v\i) at (60*\i+30:1.4);
		\draw[thick] (w\i) to [out=60*\i,in=60*\i+300] (v\i);
		\draw[thick] (w\i) to [out=60*\i+60,in=60*\i+120] (v\i);
	}
		\draw[fill] (0,0) circle (1.75pt);
		\foreach \i in {1,2,3,4,5,6} {
		\coordinate (w\i) at (60*\i+30:1);
		\draw[fill] (w\i) circle (1.75pt);
		}
\end{tikzpicture}
\caption{The graph \(\mathcal W_{m,n}\) with \(m=2\) and \(n=4\).}\label{fig:mixed-star-and-windmill-graph}
\end{minipage}
\end{figure}

\subsection{Neumann partitions}
\label{sec:neumann-lower}
We start with an analogue of Theorem~\ref{th:dopt-general-lower-bound} for Neumann partitions. In comparison with the Dirichlet case, providing a complete description of the graphs for which there is equality seems to be a rather difficult problem.

\begin{theorem}\label{th:nopt-lower-bound}
Let $\Graph$ be a compact and connected metric graph with total length $L>0$. For any $p \in [1,\infty]$ and any $k\geq 1$, we have
\begin{equation}
\label{eq:nopt-lower-bound}
	\noptenergy[k,p] (\Graph) \geq  \noptenergylax[k,p] (\Graph) \geq \frac{\pi^2k^2}{L^2}.
\end{equation}
If $\Graph$ is not a loop or if $k\geq 2$, then there is equality if and only if there exists a rigid (respectively, a connected) $k$-partition whose every cluster is an interval of length $L/k$.
\end{theorem}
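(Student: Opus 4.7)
The plan is to mirror the proof of Theorem~\ref{th:dopt-general-lower-bound}, with the standard-condition Nicaise inequality replacing the Dirichlet one. The inequality $\noptenergy[k,p](\Graph) \geq \noptenergylax[k,p](\Graph)$ is immediate from the inclusion $\mathfrak{R}_k(\Graph) \subseteq \mathfrak{C}_k(\Graph)$, since minimising over a larger admissible class cannot increase the optimal value.

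For the main estimate, the monotonicity of $\nenergy[p]$ in $p$ (\cite[Prop.~7.1]{KenKurLen20}) reduces the claim to $p=1$. Given any connected $k$-partition $\parti = \{\Graph_1, \ldots, \Graph_k\}$, the Nicaise-type inequality for $\mu_2$ recorded in the appendix yields $\mu_2(\Graph_i) \geq \pi^2/|\Graph_i|^2$ for every $i$. Jensen's inequality applied to the convex function $x \mapsto x^{-2}$ on $(0,\infty)$ then gives
\[
\nenergy[1](\parti) = \frac{1}{k}\sum_{i=1}^k \mu_2(\Graph_i) \geq \frac{\pi^2}{k}\sum_{i=1}^k \frac{1}{|\Graph_i|^2} \geq \frac{\pi^2 k^2}{L^2},
\]
and taking the infimum over $\parti$ finishes the bound.

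For the equality characterisation, I would trace back through the chain. If $\noptenergylax[k,p](\Graph) = \pi^2 k^2/L^2$ for some $p \in [1,\infty]$, realised by an optimal partition $\parti$, then the monotonicity $\nenergy[1](\parti) \leq \nenergy[p](\parti)$ combined with the lower bound forces $\nenergy[1](\parti) = \pi^2 k^2/L^2$, so equality must hold at every step above. Jensen's equality requires $|\Graph_i| = L/k$ for all $i$, and the equality case of the Nicaise inequality for $\mu_2$ requires each $\Graph_i$ to be an interval. Conversely, any $k$-partition into intervals of length $L/k$ realises $\nenergy[p](\parti) = \pi^2 k^2/L^2$, since the first nontrivial standard eigenvalue of an interval of length $\ell$ is exactly $\pi^2/\ell^2$.

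The main obstacle is handling the lone exceptional case ``$\Graph$ is a loop and $k=1$'': then the only admissible partition is $\{\Graph\}$, which is not an interval, and $\mu_2$ of a loop of length $L$ equals $4\pi^2/L^2 > \pi^2/L^2$, so equality is unattainable. In every other situation covered by the theorem -- either $\Graph$ is not a loop, or $\Graph$ is a loop with $k \geq 2$ (in which case cutting the loop at $k$ equally spaced points produces a rigid interval partition) -- the characterisation is consistent with the above analysis. Aside from this caveat, the argument is a direct adaptation of the proof of Theorem~\ref{th:dopt-general-lower-bound}.
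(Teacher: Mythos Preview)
Your proof is correct and essentially identical to the paper's: both reduce to $p=1$ via monotonicity, apply Nicaise' inequality $\mu_2(\Graph_i)\geq \pi^2/|\Graph_i|^2$ together with Jensen, and trace back through the equality cases. You even address the loop/$k=1$ caveat and the converse direction of the equality characterisation more explicitly than the paper does; the only small point worth noting is that (as the paper remarks) the equality analysis must be carried out separately for the rigid and connected classes, since equality in $\noptenergylax[k,p]$ alone does not automatically furnish a \emph{rigid} equipartition into intervals---but your trace-back argument applies verbatim to an optimal rigid partition as well.
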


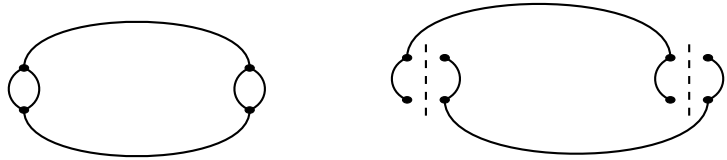
\begin{figure}[H]
\begin{tikzpicture}[yscale=0.7]
\coordinate (a) at (0,0.4);
\coordinate (b) at (3,0.4);
\coordinate (c) at (0,-0.4);
\coordinate (d) at (3,-0.4);
\draw[thick,bend left=90]  (a) edge (b);
\draw[thick,bend right=90]  (c) edge (d);
%\draw[thick] (a) -- (c);
%\draw[thick] (b) -- (d);
\draw[thick,bend left=60] (a) edge (c);
\draw[thick,bend left=60] (b) edge (d);
\draw[thick,bend right=60] (a) edge (c);
\draw[thick,bend right=60] (b) edge (d);
\draw[fill] (0,0.4) circle (1.75pt);
\draw[fill] (3,0.4) circle (1.75pt);
\draw[fill] (0,-0.4) circle (1.75pt);
\draw[fill] (3,-0.4) circle (1.75pt);
\end{tikzpicture}
\qquad\qquad
\begin{tikzpicture}[yscale=0.7]
%Neumann partition first half
\draw[fill] (5,-0.4) circle (1.75pt);
\draw[thick,bend left=60] (5,-0.4) edge (5,0.4);
\draw[fill] (5,0.4) circle (1.75pt);
\draw[thick,bend left=90] (5,0.4) edge (8.5,0.4);
\draw[fill] (8.5,0.4) circle (1.75pt);
\draw[thick,bend right=60] (8.5,0.4) edge (8.5,-0.4);
\draw[fill] (8.5,-0.4) circle (1.75pt);
%Neumann partition second half
\draw[fill] (5.5,-0.4) circle (1.75pt);
\draw[thick,bend right=60] (5.5,-0.4) edge (5.5,0.4);
\draw[fill] (5.5,0.4) circle (1.75pt);
\draw[thick,bend right=90] (5.5,-0.4) edge (9,-0.4);
\draw[fill] (9,-0.4) circle (1.75pt);
\draw[thick,bend right=60] (9,-0.4) edge (9,0.4);
\draw[fill] (9,0.4) circle (1.75pt);
\draw[thick,dashed] (5.25,-0.7) -- (5.25,0.7);
\draw[thick,dashed] (8.75,-0.7) -- (8.75,0.7);
\end{tikzpicture}
\caption{The graph on the left admits a rigid two-partition into equal intervals (right); thus there is equality in \eqref{eq:nopt-lower-bound}. We will return to this graph in Example~\ref{ex:nomon}.}
\label{fig:nomon}
\end{figure}

\begin{figure}[H]
\begin{tikzpicture}
\coordinate (a) at (0,0);
\coordinate (b) at (1,0);
\coordinate (c) at (2.5,0);
\coordinate (d) at (3.5,0);
\draw[fill] (b) circle (1.75pt);
\draw[fill] (c) circle (1.75pt);
\draw[thick,bend left=90] (a) edge (b);
\draw[thick,bend right=90] (a) edge (b);
\draw[thick,bend left=90] (c) edge (d);
\draw[thick,bend right=90] (c) edge (d);
\draw[thick] (b) -- (c);
\end{tikzpicture}
\qquad\qquad
\begin{tikzpicture}
\coordinate (a) at (0,0);
\coordinate (b1) at (1,0);
\coordinate (b2) at (0.67,0.33);
\coordinate (c1) at (1.75,0);
\coordinate (c2) at (2.25,0);
\coordinate (d1) at (3,0);
\coordinate (d2) at (3.33,0.33);
\coordinate (e) at (4,0);
\draw[fill] (b1) circle (1.75pt);
\draw[fill] (b2) circle (1.75pt);
\draw[fill] (c1) circle (1.75pt);
\draw[fill] (c2) circle (1.75pt);
\draw[fill] (d1) circle (1.75pt);
\draw[fill] (d2) circle (1.75pt);
\draw[thick,bend left=75] (a) edge (b2);
\draw[thick,bend right=90] (a) edge (b1);
\draw[thick] (b1) edge (c1);
\draw[thick] (c2) edge (d1);
\draw[thick,bend left=90] (e) edge (d1);
\draw[thick,bend right=75] (e) edge (d2);
\draw[thick,dashed] (2,0.5) -- (2,-0.5);
\draw[thick,dashed] (0.5,0) -- (1.25,0.35);
\draw[thick,dashed] (3.5,0) -- (2.75,0.35);
\end{tikzpicture}
\caption{The dumbbell graph on the left admits a (non-rigid only) two-partition into equal intervals (right); thus there is equality in the second inequality in \eqref{eq:nopt-lower-bound}, but the first inequality is strict. Observe that this graph contains an Eulerian path.}
\label{fig:dumbdumb}
\end{figure}
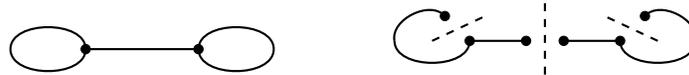

See also \cite[Section~7]{KenKurLen20}, where the graphs of Figures \ref{fig:nomon} and \ref{fig:dumbdumb} are considered. Lemma 7.1 of \cite{KenKurLen20} provides a complement to Theorem~\ref{th:nopt-lower-bound}: if, for $p=\infty$, there is a $k$-partition $\parti$ of a graph $\Graph$ whose energy $\nenergy[\infty] (\parti)$ equals $\pi^2k^2/L^2$, then this partition is a minimiser realising $\noptenergylax[k,\infty] (\Graph)$, and in particular the minimal energy also equals $\pi^2k^2/L^2$.

\begin{proof}[Proof of Theorem~\ref{th:nopt-lower-bound}]
Fix $k\geq 1$. We give the proof for $\noptenergy[k,p]$, since the argument for $\noptenergylax[k,p]$ is identical (note that due to the statement about equality the statement for $\noptenergylax[k,p] (\Graph)$ does not imply the full statement for $\noptenergy[k,p] (\Graph)$). 
As in the proof of Theorem~\ref{th:dopt-general-lower-bound}, by monotonicity in $p$ it suffices to prove the inequality for $p=1$.
 To this end, we suppose that $\Graph_1,\ldots,\Graph_k$ are the clusters of an optimal partition associated with $\noptenergy[k,1] (\Graph)$, then
\begin{equation}
\label{eq:cluster-length-sum}
	|\Graph_1|+\ldots +|\Graph_k| = L.
\end{equation}
Applying Proposition~\ref{thm:nicaise}.(1) to each cluster, we have $\mu_2 (\Graph_i) \geq \pi^2/|\Graph_i|^2$ for all $i=1,\ldots,k$ and so
\begin{displaymath}
	\noptenergy[k,1](\Graph)=\frac{1}{k}\sum_{i=1}^k \mu_2 (\Graph_i)\geq \pi^2\left(\frac{1}{k}\sum_{i=1}^k \frac{1}{|\Graph_i|^2}\right)
	\geq \pi^2\left(\frac{1}{k}\sum_{i=1}^k |\Graph_i|\right)^{-2} = \frac{\pi^2 k^2}{L^2},
\end{displaymath}
where we have applied \eqref{eq:cluster-length-sum} and, as usual, Jensen's inequality.

Equality in \eqref{eq:nopt-lower-bound} implies in particular that there is an optimising partition $\{\Graph_1,\ldots,\Graph_k\}$ yielding equality in the application of Proposition~\ref{thm:nicaise}.(1) and Jensen's inequality. This, in turn, requires that the cluster $\Graph_i$ is an interval of length $L/k$, for every $i=1,\ldots,k$.
\end{proof}

\begin{remark}
Unlike in the Dirichlet case, the condition for equality in the lower bound~\eqref{eq:nopt-lower-bound} does not prevent the graph from being doubly connected. In other words, we cannot expect an improved version of~\eqref{eq:nopt-lower-bound} for general doubly connected $\Graph$. A simple example is given by the loop, for which $\noptenergy[k,p] (\Graph)=\noptenergylax[k,p] (\Graph)=\doptenergy[k,p] (\Graph)=\frac{\pi^2 k^2}{L^2}$ for all $k$ and all $p$.
\end{remark}

We complement Theorem~\ref{th:nopt-lower-bound} with some sufficient conditions for equality which are easy to check.

\begin{proposition}
\label{prop:nopt-lower-bound-equality}
Suppose that the compact and connected graph $\Graph$ has an Eulerian path.
\begin{enumerate}
\item For all $p \in [1,\infty]$ and all $k \geq 1$ there is equality $\noptenergylax[k,p] (\Graph) = \frac{\pi^2k^2}{L^2}$ in \eqref{eq:nopt-lower-bound}.
\item If, in addition, for given $k\geq 2$ the girth $\mathfrak{s} \in (0,\infty]$ of $\Graph$ satisfies $\mathfrak{s} \geq L/k$, then also $\noptenergy[k,p] (\Graph) = \frac{\pi^2k^2}{L^2}$ for all $p \in [1,\infty]$.
\end{enumerate}
\end{proposition}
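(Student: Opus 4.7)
The plan is to invoke the equality characterisation in Theorem~\ref{th:nopt-lower-bound}: it suffices to exhibit, for each $k \geq 1$, a connected $k$-partition of $\Graph$ (respectively a rigid one under the extra hypothesis in (2)) whose every cluster is isometric to an interval of length $L/k$. Such a partition will be constructed by ``unfolding'' $\Graph$ along the Eulerian path.

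Let $\gamma \colon [0,L] \to \Graph$ be an arc-length parameterisation of an Eulerian path, so that $\gamma$ traverses every edge of $\Graph$ exactly once. Partition $[0,L]$ into the consecutive subintervals $I_j := [(j-1)L/k,\, jL/k]$, $j = 1, \dots, k$, and define the $j$-th cluster $\Graph_j$ to be the image $\gamma(I_j)$, equipped with the metric-graph structure obtained from $\gamma|_{I_j}$ by the following splitting rule: each time $\gamma$ visits a vertex $v \in \VertexSet$ in the interior of $I_j$, the two incident half-edges are grouped together at a fresh pseudo-vertex of degree~$2$, while the boundary visits at $\gamma((j-1)L/k)$ and $\gamma(jL/k)$ are split off as pseudo-vertices of degree~$1$. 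With this bookkeeping, $\gamma|_{I_j}$ becomes an isometry from $I_j$ onto $\Graph_j$, so every $\Graph_j$ is an interval of length $L/k$. The family $\{\Graph_j\}_{j=1}^k$ is a valid connected $k$-partition of $\Graph$ belonging to $\mathfrak{C}_k(\Graph)$: each cluster is connected as the continuous image of an interval, the union covers $\Graph$ since $\gamma$ does, and the interiors are pairwise disjoint because $\gamma$ traverses every edge exactly once. Hence $\noptenergylax[k,p](\Graph) \leq \pi^2 k^2/L^2$, which combined with Theorem~\ref{th:nopt-lower-bound} proves (1).

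For (2), the point is that, under the assumption $\mathfrak{s} \geq L/k$, the map $\gamma$ cannot revisit any vertex strictly inside a single $I_j$. Indeed, if $\gamma(t_1) = \gamma(t_2) = v$ for some $t_1 < t_2$ both in the interior of $I_j$, then $\gamma|_{[t_1,t_2]}$ is a closed trail in $\Graph$ of length $t_2 - t_1 < L/k$; such a closed trail necessarily contains a simple cycle of length at most $t_2 - t_1 < L/k$, contradicting $\mathfrak{s} \geq L/k$. Consequently the construction from (1) requires no interior vertex-splittings, so that the only cuts performed are at the boundary points $\gamma(jL/k)$ for $j = 1, \dots, k-1$ (together with the identification $\gamma(0) = \gamma(L)$ in case $\gamma$ is an Eulerian circuit). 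These are precisely the separation points between pairs of adjacent clusters, so the partition belongs to $\mathfrak{R}_k(\Graph)$, and Theorem~\ref{th:nopt-lower-bound} then yields (2).

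The main subtlety is formal: one needs to verify that the vertex-splittings described above really produce admissible partitions in the sense of Appendix~\ref{sec:appart}, that is, that the underlying ``cut'' of $\Graph$ is of the right type in each case. Connectedness and exhaustiveness are immediate from the construction, while for (2) the nontrivial check --- namely, that no cuts are performed at interior points of a single cluster --- is precisely what the closed-trail-to-cycle estimate above guarantees.
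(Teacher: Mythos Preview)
Your proof is correct and follows essentially the same line as the paper's: divide the Eulerian path into $k$ equal arcs to obtain a test partition consisting of intervals of length $L/k$, then for (2) invoke the girth hypothesis to rule out self-intersections within any single arc. One minor caveat in (2): the claim that ``the only cuts performed are at the boundary points $\gamma(jL/k)$'' is not quite accurate, since a vertex visited at interior points of two \emph{different} arcs $I_j$ and $I_{j'}$ must also be split---but such a vertex lies on $\partial\Omega_j \cap \partial\Omega_{j'}$ and is therefore a separating point anyway, so rigidity is unaffected and your conclusion stands.
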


For graphs without an Eulerian path, it is still possible for there to be equality for at least some values of $k$, as the next proposition shows (the graph of Figure~\ref{fig:nomon} also provides an example). It seems reasonable to expect that the equality $\noptenergylax[k,p] (\Graph) = \frac{\pi^2k^2}{L^2}$ or $\noptenergy[k,p] (\Graph) = \frac{\pi^2k^2}{L^2}$ for \emph{all} $k\geq 1$ implies that the graph $\Graph$ has an Eulerian path, but we will not explore this question here.

\begin{proof}
Suppose that $\Graph$ has an Eulerian path. In light of \eqref{eq:nopt-lower-bound} and the monotonicity of the optimal energies in $p$, it suffices to show that under the respective claimed conditions
\begin{displaymath}
	\noptenergylax[k,\infty](\Graph),\,\noptenergy[k,\infty](\Graph) \leq \frac{\pi^2k^2}{L^2}.
\end{displaymath}
To this end, for $\noptenergylax[k,\infty](\Graph)$ we may easily construct a test $k$-partition of $\Graph$ having energy exactly $\pi^2k^2/L^2$ by cutting the graph along its Eulerian path to create $k$ intervals of length $L/k$ each. For $\noptenergy[k,\infty](\Graph)$, we observe that this resulting partition is rigid if $L/k \leq \mathfrak{s}$, since then each cluster may self-intersect at most at its endpoint, which since $k\geq 2$ and $\Graph$ is connected is necessarily a boundary point.
\end{proof}

We finish this section with a complement to the previous proposition, which states that for every graph $\Graph$ with rationally dependent edge lengths there is a sequence of values $k$ for which there is equality $\noptenergylax[k,p] (\Graph) = \noptenergy[k,p] (\Graph) = \frac{\pi^2k^2}{L^2}$.

\begin{proposition}\label{prop:nopt-subsequence}
	Assume that the edge lengths in \(\Graph\) are pairwise rationally dependent, that is, for every pair of edges \(\me_1, \me_2\in\EdgeSet\) the quotient \({|\me_1|}/{|\me_2|}\) is rational. Then there exists some positive integer \(m\geq 1\) such that
\begin{equation}
\label{eq:nopt-subsequence}
	\noptenergylax[jm,p] (\Graph) = \noptenergy[jm,p] (\Graph) = \frac{\pi^2(jm)^2}{L^2}
\end{equation}
for any integer \(j\geq 1\) and any \(p\in [1,\infty]\).
\end{proposition}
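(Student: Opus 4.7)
The plan is to leverage the rational dependence of the edge lengths to construct, for a suitable integer $m$, an explicit rigid $jm$-partition of $\Graph$ into intervals of equal length $L/(jm)$. Its energy will then match the universal lower bound from Theorem~\ref{th:nopt-lower-bound}, forcing equality throughout \eqref{eq:nopt-subsequence}.

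First I would use the hypothesis to pick $\ell > 0$ and positive integers $(n_\me)_{\me \in \EdgeSet}$ with $|\me| = n_\me \ell$ for every edge. Halving $\ell$ once if necessary, I may further assume $n_\me \ge 2$ whenever $\me$ is a loop attached at a vertex of $\Graph$ of degree two; this small refinement is used only to resolve a rigidity obstruction at otherwise isolated loops. Setting $m := L/\ell = \sum_{\me \in \EdgeSet} n_\me$, the target cluster length at level $k = jm$ becomes $\ell/j$.

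Next, for each edge $\me$ I would introduce $jn_\me$ equally spaced division points along $\me$ (i.e.\ $jn_\me - 1$ fresh interior points if $\me$ is not a loop, and $jn_\me$ fresh points distributed around $\me$ if it is), carving $\me$ into $jn_\me$ sub-arcs of length $\ell/j$. Declaring each sub-arc to be a cluster produces $\sum_\me jn_\me = jm = k$ intervals of length $L/k$; on each, $\mu_2 = \pi^2/(L/k)^2 = \pi^2 k^2/L^2$, so $\nenergy[p]$ of the whole partition equals the constant $\pi^2 k^2/L^2$ for every $p \in [1,\infty]$.

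Lastly I would verify rigidity, i.e.\ that every cut point lies on the boundary of at least two clusters. The interior division points just introduced are clearly separating points (each borders two adjacent sub-arcs), and at each original vertex $\Dv$ of $\Graph$ the sub-clusters meeting there are mutually distinct---this is where the loop refinement above enters, to guarantee that at a degree-two vertex carrying a single loop, the two sub-arcs meeting $\Dv$ are really distinct clusters. Hence the test partition lies in $\mathfrak{R}_k(\Graph)$, yielding $\noptenergy[k,p](\Graph) \le \pi^2 k^2/L^2$; combined with the chain $\pi^2 k^2/L^2 \le \noptenergylax[k,p](\Graph) \le \noptenergy[k,p](\Graph)$ from Theorem~\ref{th:nopt-lower-bound}, this delivers \eqref{eq:nopt-subsequence}. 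The principal obstacle to anticipate is this rigidity check at loop vertices, which is precisely what forces the preliminary halving of $\ell$; everything else is routine bookkeeping.
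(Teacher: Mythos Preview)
Your proposal is correct and follows essentially the same route as the paper: pick a common length unit dividing all edge lengths, set $m$ to be the total number of such units, and exhibit the rigid $jm$-equipartition obtained by chopping every edge into sub-intervals of length $L/(jm)$, whose Neumann energy then matches the lower bound from Theorem~\ref{th:nopt-lower-bound}. Your halving step to guarantee rigidity when $\Graph$ is itself a single loop is a sensible precaution that the paper's proof leaves implicit (it simply asserts the equipartition is rigid without singling out this edge case).
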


\begin{proof}
As \(\noptenergy[k,p] (\Graph)\geq \noptenergylax[k,p] (\Graph)\) both satisfy \eqref{eq:nopt-lower-bound} and are monotonically decreasing in \(p\in [1,\infty]\) for any \(k\geq 1\), it suffices to prove existence of some integer \(m\geq 1\) with
\begin{displaymath}
	\noptenergy[jm,\infty] (\Graph) \leq \frac{\pi^2(jm)^2}{L^2}
\end{displaymath}
for all \(j\geq 1\). 
First, we observe that the edge lengths are pairwise rationally dependent if and only if there is some positive real number \(r>0\) such that \(m_\me:={|\me|}/{s}\) is an integer for all edges \(\me\in\EdgeSet\). We set
\begin{displaymath}
	m:=\sum_{e\in\EdgeSet} m_e =\frac{L}{r}.
\end{displaymath}
For \(j\geq 1\) let \(\parti\) be the rigid \(jm\)-partition obtained after cutting through every vertex of \(\Graph\) and then dividing each edge \(e\in \EdgeSet\) into \(jm_e\) intervals of equal length \({s}/{j}\), so \(\parti\) is an equipartition with
\begin{displaymath}
	\noptenergy[jm,\infty] (\mathcal{G})\leq \nenergy[\infty] (\parti)=\frac{\pi^2j^2}{r^2}=\frac{\pi^2(jm)^2}{L^2}.
\end{displaymath}
This proves the claim.
\end{proof}

\begin{remark}\label{rem:nopt-subsequence}
	In particular, the previous proposition holds for equilateral graphs, and the proof shows that in this case we may choose \(m\) as the cardinality of the edge set in that case.
\end{remark}

\section{Upper bounds}
\label{sec:estimates-minimal-upper}

\subsection{Dirichlet partitions}
\label{sec:dirichlet-upper}
We next consider upper bounds on $\doptenergy[k,p](\Graph)$.
\begin{theorem}
\label{thm:dopt-upper-bound}
	Suppose $\Graph$ is a compact and connected metric graph. Then we have
	\begin{equation}
	\label{eq:dopt-upper-bound}
		\doptenergy[k,p](\Graph)\leq \frac{\pi^2}{L^2}\left(k+\left(|\EdgeSet|-1-\left\lfloor\frac{|\NeuSet|}{2}\right\rfloor\right)\right)^2
	\end{equation}
	for all sufficiently large integers \(k\geq 2\) and all \(p\in [1,\infty]\), where \(|\NeuSet|\) denotes the number vertices in \(\Graph\) of degree \(1\). In particular, \eqref{eq:dopt-upper-bound} holds whenever
	\begin{displaymath}
		k \geq \frac{L}{\ell_{\min}} + |\EdgeSet| - 1,
	\end{displaymath}
where we recall that $\ell_{\min} = \min_{\me \in \EdgeSet} |\me|$ is the minimal edge length.
\end{theorem}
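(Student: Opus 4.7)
The plan is to construct, for sufficiently large $k$, an explicit test $k$-partition $\parti$ of $\Graph$ whose $\infty$-energy $\denergy[\infty](\parti)$ is at most $\pi^2 M^2/L^2$, where $M := k + |\EdgeSet|-1-\lfloor|\NeuSet|/2\rfloor$; monotonicity of $\doptenergy[k,p]$ in $p$ (\cite[Prop.~7.1]{KenKurLen20}) then yields the stated bound for every $p\in[1,\infty]$ simultaneously. Write $\ell^*:=L/M$; the hypothesis $k\geq L/\ell_{\min}+|\EdgeSet|-1$ guarantees $\ell^*\leq\ell_{\min}$, which is the compatibility condition needed for the construction to fit on every edge.

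For the construction, cut $\Graph$ at every vertex of degree at least $2$, producing $|\EdgeSet|$ disjoint intervals. On each edge $\me$ carrying $r(\me)\in\{0,1,2\}$ leaves of $\Graph$, I allocate one \emph{leaf piece} of length $\ell^*/2$ at each leaf end (Neumann at the leaf, Dirichlet at the cut, so $\lambda_1=\pi^2/\ell^{*2}$) and partition the remainder of $\me$ into $n_\me-r(\me)$ equal \emph{standard pieces} of length $\geq\ell^*$ (Dirichlet at both endpoints, so $\lambda_1\leq\pi^2/\ell^{*2}$), taking $n_\me:=\lfloor|\me|/\ell^*+r(\me)/2\rfloor$ as the largest integer consistent with these length constraints. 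The bound $\ell^*\leq\ell_{\min}$ ensures $n_\me\geq1$ on every edge, and every cluster then satisfies $\lambda_1\leq\pi^2/\ell^{*2}$ by construction.

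The crux is then the inequality $\sum_\me n_\me\geq k$: once it holds, the cluster count is brought down to exactly $k$ by partially undoing cuts at suitably chosen interior vertices, using the flexibility in the definition of rigid partitions to decide how many incident edges at each vertex are joined; each such merge replaces several adjacent standard pieces by a star cluster whose first eigenvalue is still at most $\pi^2/\ell^{*2}$, via a Rayleigh quotient with a sine test function supported on a single arm of length $\geq\ell^*$. Summing the defining identities gives
\[
\sum_\me\Bigl(\frac{|\me|}{\ell^*}+\frac{r(\me)}{2}\Bigr) = M+\frac{|\NeuSet|}{2},
\]
which is the integer $k+|\EdgeSet|-1$ when $|\NeuSet|$ is even, and the elementary fact that $\sum_j\lfloor a_j\rfloor\geq\sum_j a_j-(|J|-1)$ whenever $\sum_j a_j\in\Z$ (since the fractional parts sum to an integer strictly less than $|J|$) then yields $\sum_\me n_\me\geq k$. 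When $|\NeuSet|$ is odd, one first absorbs a single leaf into an adjacent standard piece (setting its edge's $r$-value to $0$), which only decreases that cluster's $\lambda_1$ and restores integrality.

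The main obstacle is precisely this integer bookkeeping, which is what produces the improvement of $-1$ in $M$ beyond the naive per-edge rounding; the parity split and the boundary case $\Graph=$ interval (where $r=2$ on the unique edge, so the pieces fit exactly into $k$) require only minor adjustments without new ideas.
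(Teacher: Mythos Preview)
Your approach is essentially the paper's: build a test partition by subdividing each edge into intervals, with half-length Neumann--Dirichlet pieces at the leaves, and use an integer-counting argument to control the gap between the number of clusters produced and the target $k$. Your fractional-part bound $\sum_j\lfloor a_j\rfloor\geq\sum_j a_j-(|J|-1)$ when $\sum a_j\in\Z$ is exactly what the paper obtains by summing its defining inequalities for the $m_\me$. The main difference is in how the excess clusters are removed: the paper simply invokes the known monotonicity of $\doptenergy[k,\infty](\Graph)$ in $k$ \cite[Remark~4.11]{KenKurLen20}, so an $m$-partition with $m\geq k$ and energy $\leq\pi^2 M^2/L^2$ immediately gives $\doptenergy[k,\infty]\leq\doptenergy[m,\infty]\leq\pi^2 M^2/L^2$, whereas you merge explicitly at interior vertices. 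Your merging step is sound (any merged cluster still contains a Dirichlet--Dirichlet interval of length $\geq\ell^*$, on which the sine test function lives), but the monotonicity shortcut avoids the combinatorial bookkeeping altogether.

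There is one small glitch in your construction. On a pendant edge with $\ell^*\leq|\me|<\tfrac{3}{2}\ell^*$ you get $n_\me=\lfloor|\me|/\ell^*+\tfrac{1}{2}\rfloor=1$ and $r(\me)=1$, so your recipe allocates a single leaf piece of fixed length $\ell^*/2<|\me|$ and \emph{zero} standard pieces, leaving part of the edge uncovered. The obvious repair---let that one cluster be the whole edge, a Neumann--Dirichlet interval of length $|\me|\geq\ell^*$ with $\lambda_1=\pi^2/(4|\me|^2)\leq\pi^2/\ell^{*2}$---works and does not disturb the counting. The paper's construction sidesteps this by letting the leaf-piece length vary with the edge (the whole pendant edge is divided in the ratio $1:2:\cdots:2$), so the $m_\me=1$ case is automatically the whole edge.
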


\begin{proof}
By monotonicity, it suffices to prove the theorem for $p=\infty$. The proof consists of constructing a ``test partition'' formed by dividing each edge into a given number of intervals in accordance with its length, where the lengths are suitably chosen.

Without loss of generality, we may assume that \(\Graph\) has at least two edges, otherwise \(\mathcal G\) would be a cycle or an interval and in both cases \eqref{eq:dopt-upper-bound} is obviously satisfied. Let \(\EdgeSet_\NeuSet\) denote the set of pendant edges in \(\EdgeSet\), i.e\, those edges containing a vertex of degree one. Note that, since \(\Graph\) has at least two edges and \(\Graph\) is connected, each edge contains at most one vertex of degree one, and thus \(|\EdgeSet_\NeuSet|=|\NeuSet|\) holds. Fix an integer $n \geq 1$ large enough, so that $\frac{L}{n} \leq |\me|$ for all $\me\in\EdgeSet$. Now for each $\me\in\EdgeSet$ there exists an integer $m_\me$ such that
\begin{equation}\label{eq:dopt-choice-mj1}
	m_\me \cdot \frac{L}{n} \leq |\me| < (m_\me+1) \frac{L}{n},
\end{equation}
if \(\me\in\EdgeSet\setminus\EdgeSet_\NeuSet\) and
\begin{equation}\label{eq:dopt-choice-mj2}
	\frac{2m_\me-1}{2} \cdot \frac{L}{n} \leq |\me| < \frac{2m_\me+1}{2}\cdot \frac{L}{n}
\end{equation}
if \(\me\in\EdgeSet_\NeuSet\).  For \(\me\in\EdgeSet\setminus\EdgeSet_\NeuSet\) we then partition \(\me\) into \(m_\me\) intervals of equal length \(\frac{|\me|}{m_\me}\), and for \(\me\in\EdgeSet_\NeuSet\) we partition \(\me\) into \(m_\me\) intervals, so that the interval containing the vertex of degree one has length \(\frac{|\me|}{2m_\me+1}\) and the remaining intervals have length \(\frac{2|\me|}{2m_\me+1}\). Note that the interval lengths here are chosen so that the first Dirichlet eigenvalue of the longer intervals and the first mixed Dirichlet--Neumann eigenvalue of the shorter intervals are both equal to \(\frac{\pi^2(2m_\me+1)^2}{4|\me|^2}\). Let \(\parti\) be the \(m\)-partition thus obtained, where
\begin{displaymath}
	m:= \sum_{\me\in\EdgeSet} m_\me.
\end{displaymath}
Summing up \eqref{eq:dopt-choice-mj1} and \eqref{eq:dopt-choice-mj2} and using \(m=\sum_{\me\in\EdgeSet} m_\me\) and \(L=\sum_{\me\in\EdgeSet} |\me|\), we immediately obtain
\begin{equation}
	m-\Big\lfloor\frac{|\NeuSet|}{2}\Big\rfloor \leq n \leq m + |\EdgeSet| - 1-\Big\lfloor\frac{|\NeuSet|}{2}\Big\rfloor .
\end{equation}
By choice of the interval lengths we have
\begin{equation*}
	\denergy[\infty](\parti) \leq \max\left(\max_{1\le j\le |\NeuSet|} \frac{\pi^2(2m_j+1)^2}{4L_j^2},\max_{|\NeuSet|+1\le j\le |\EdgeSet|} \frac{\pi^2 m_j^2}{L_j^2}\right) \leq \frac{\pi^2 n^2}{L^2},
\end{equation*}
and thus $\doptenergy[m,\infty] (\Graph) \leq \frac{\pi^2 n^2}{L^2}$. Since $m \geq n - |\EdgeSet| + 1+\big\lfloor\frac{|\NeuSet|}{2}\big\rfloor$ and $\doptenergy[k,\infty](\Graph)$ is monotonically increasing in $k$ by \cite[Remark~4.11]{KenKurLen20}, we thus have
\begin{displaymath}
	\doptenergy[n-|\EdgeSet|+1+\lfloor\frac{|\NeuSet|}{2}\rfloor,\infty](\Graph) \leq \frac{\pi^2 n^2}{L^2}.
\end{displaymath}
Setting $k:= n + |\EdgeSet| - 1-\lfloor\frac{|\NeuSet|}{2}\rfloor$ in the above inequality yields \eqref{eq:dopt-upper-bound}.
\end{proof}

\begin{remark}
\label{rem:mu-k-bounds-via-ld}
It is known that  $\doptenergy[k,p](\Graph)$ dominates the $k$-th lowest eigenvalue $\mu_k$ of the Lapacian with natural vertex conditions, cf.~\cite[Prop.~5.5]{KenKurLen20}. Hence, in particular, Theorem~\ref{thm:dopt-upper-bound} yields, for sufficiently large $k$,
\[
		\mu_k\leq \frac{\pi^2}{L^2}\left(k-1+|\EdgeSet|-\left\lfloor\frac{|\NeuSet|}{2}\right\rfloor\right)^2.
	\]
	This estimate can be compared with the upper bound obtained in~\cite[Thm.~4.9]{BerKenKur17}, which in the present case of Laplacians with no Dirichlet boundary conditions reads
\[
		\mu_k\leq \frac{\pi^2}{L^2}\left(k-\frac12 +\frac{3}{2}|\EdgeSet|-\frac{3}{2}|\VertexSet|+\frac{|\NeuSet|}{2}\right)^2;
	\]
studying the class of graphs $\mathcal W_{m,n}$ (see Example~\ref{exa:pc2}), the latter bound was shown to be asymptotically sharp in~\cite[Theorem~2]{KurSer18}.
\end{remark}

\subsection{Neumann partitions}
\label{sec:neumann-upper}
Our main upper bound in this case reads as follows.

\begin{theorem}\label{thm:nopt-upper-bound}
	Suppose there exists an \(n\)-partition of \(\Graph\) such that every associated cluster \(\Graph_j\) has an Eulerian path,
then we have
	\begin{equation}\label{eq:nopt-upper-bound}	
		\noptenergylax[k,p](\Graph) \leq \noptenergy[k,p](\Graph) \leq \frac{\pi^2}{L^2}\big(k+(n-1)\big)^2
	\end{equation}
	for all sufficiently large integers \(k\geq 1\) and all \(p\in [1,\infty]\). Concretely, we may take $k \geq \max\{4|\EdgeSet|+n-1, \frac{3L}{\mathfrak{2s}} \}$, where $|\EdgeSet|$ is the number of edges of $\Graph$ and $\mathfrak{s} \in (0,\infty]$ its girth.
\end{theorem}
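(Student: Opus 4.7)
The plan is to reduce to $p=\infty$ via monotonicity of $\noptenergy[k,p]$ in $p$ (cf.~\cite[Prop.~7.1]{KenKurLen20}), and then to construct, for each $k$ in the stated range, an explicit rigid $k$-partition $\parti$ of $\Graph$ satisfying $\nenergy[\infty](\parti)\le\pi^2(k+n-1)^2/L^2$. The idea is to take the hypothesised Eulerian $n$-partition $\{\Graph_1,\ldots,\Graph_n\}$ and refine it by cutting the Eulerian path $\gamma_j\colon[0,L_j]\to\Graph_j$ of each cluster into sub-arcs whose images will serve as interval clusters under natural conditions.

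The key geometric observation is that any sub-arc $\gamma_j|_{[a,b]}$ of length $b-a<\mathfrak{s}$ must be injective: were $\gamma_j(s)=\gamma_j(s')$ for some $a\le s<s'\le b$, then $\gamma_j|_{[s,s']}$ would be a closed walk and hence would contain a cycle of length at least the girth $\mathfrak{s}$. Consequently, the image of such a sub-arc, viewed as a metric subgraph of $\Graph_j$, is isometric to a standard interval of length $b-a$ -- its interior vertices have degree two in the sub-cluster and are spectrally transparent under natural vertex conditions -- so its $\mu_2$ equals $\pi^2/(b-a)^2$. Setting $\underline\ell:=L/(k+n-1)$, the task thus becomes: choose positive integers $m_1,\ldots,m_n$ with $\sum_j m_j=k$ and within each $\Graph_j$ select $m_j-1$ cut times producing $m_j$ sub-arcs of length $L_j/m_j\in[\underline\ell,\mathfrak{s})$. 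The resulting partition has $k$ clusters, each an interval of length at least $\underline\ell$ with $\mu_2\le\pi^2(k+n-1)^2/L^2$, and it is rigid because every new cut point is by construction a separating point between adjacent sub-clusters, while the original boundaries of the $n$-partition remain separating.

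The principal obstacle is then showing that such integers $m_j$ exist under the hypotheses on $k$. The requirement $L_j/m_j\in[\underline\ell,\mathfrak{s})$ reads $m_j\in\bigl(L_j/\mathfrak{s},\,L_j(k+n-1)/L\bigr]$; summing the endpoints over $j$ shows the feasible range for $\sum_j m_j$ is $\bigl(L/\mathfrak{s},\,k+n-1\bigr]$, which contains $k$ exactly when $k>L/\mathfrak{s}$. The girth condition $k\ge 3L/(2\mathfrak{s})$ provides the strict slack in each individual range needed for an integer choice to exist, while the combinatorial condition $k\ge 4|\EdgeSet|+n-1$ supplies the additional room needed to match the sum $\sum_j m_j=k$ exactly and to absorb the corner cases: a cluster with $L_j<\mathfrak{s}$ is already itself an interval by the injectivity argument above, so $m_j=1$ is admissible there, while a cluster with $L_j<\underline\ell$ can first be merged with a neighbour along their shared degree-one boundary vertex (which preserves the Eulerian property of the glued cluster and lowers the effective $n$, only tightening the bound).
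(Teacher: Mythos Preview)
Your core construction coincides with the paper's: refine the Eulerian $n$-partition by chopping each Eulerian path $\gamma_j$ into $m_j$ equal sub-arcs, short enough (relative to the girth) to be genuine intervals, and long enough that each has $\mu_2\le \pi^2(k+n-1)^2/L^2$. Where you diverge is in insisting on $\sum_j m_j=k$ exactly, and this is where the argument is incomplete.

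The paper does \emph{not} try to produce a $k$-partition directly. It takes $m_j=\lfloor L_j k/L\rfloor$, obtains an $m$-partition with $m\le k\le m+n-1$, bounds $\noptenergy[m,\infty](\Graph)\le\pi^2 k^2/L^2$ via Lemma~\ref{lem:nopt-subpartitions} and Proposition~\ref{prop:nopt-lower-bound-equality}(2), and then invokes the monotonicity of $m\mapsto\noptenergy[m,\infty](\Graph)$ for $m\ge 4|\EdgeSet|$ (from \cite[Proposition~4.14]{KenKurLen20}) to pass to $\noptenergy[k-n+1,\infty](\Graph)$. The hypothesis $k\ge 4|\EdgeSet|+n-1$ is there precisely to make this monotonicity applicable; it is not a ``room'' condition for an integer selection problem.

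Your claim that one can choose integers $m_j\in(L_j/\mathfrak{s},\,L_j(k+n-1)/L]$ with $\sum_j m_j=k$ is asserted rather than proved. Summing the interval endpoints tells you the target $k$ lies in the aggregate range, but this does not by itself guarantee an integer point with the prescribed coordinate sum exists when the individual intervals can have length below $1$. Your two patches do not close the gap: the case $L_j<\mathfrak{s}$ only permits $m_j=1$ when additionally $L_j\ge\underline\ell$, and the merging step for $L_j<\underline\ell$ is not justified---two Eulerian clusters glued at a shared boundary vertex need not yield an Eulerian cluster unless the local degrees cooperate, and ``shared degree-one boundary vertex'' is not a property the hypothesis provides. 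The cleanest repair is exactly the paper's: drop the requirement $\sum_j m_j=k$, accept an $m$-partition with $m\le k$, and use monotonicity in the number of clusters to shift the index.
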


\begin{remark}
\label{rem:nopt-upper-bound}
	Obviously we may always choose \(n\) to be the number of edges of \(\Graph\) in {Theorem \ref{thm:nopt-upper-bound}}, leading to the bound
\begin{displaymath}
	\noptenergylax[k,p](\Graph) \leq \noptenergy[k,p](\Graph) \leq \frac{\pi^2}{L^2}\big(k+(|\EdgeSet|-1)\big)^2.
\end{displaymath}
This is valid for all $k \geq 5|E|-1$, as an inspection of the proof shows that $\mathfrak{s}$ may be replaced by the quantity $\max \mathfrak{s} (\Graph_j)$, where \(\mathfrak{s}(\Graph_j)\) is the girth of \(\Graph_j\), which in the case of each $\Graph_j$ being an edge is simply $\infty$. (We still expect this bound on $k$, like the one in Theorem~\ref{thm:nopt-upper-bound}, to be far from optimal in general.)
\end{remark}

\begin{remark}
Theorem~\ref{thm:nopt-upper-bound} can also be used to obtain a different bound on $\mu_k$ (and $\doptenergy[k,\infty]$), cf.\ Remark~\ref{rem:mu-k-bounds-via-ld}, when combined with the interlacing inequalities obtained in \cite{HofKen21}: there it is shown, using Theorem~\ref{thm:nopt-upper-bound}, that in fact
\begin{displaymath}
	\mu_k (\Graph) \leq \doptenergy[k,\infty](\Graph) \leq \frac{\pi^2}{L^2}(k+n+\beta-2)^2
\end{displaymath}
for all $k \geq \max \{n+1-\beta,1\}$.
\end{remark}

\begin{lemma}\label{lem:nopt-subpartitions}
	Given an \(n\)-partition of \(\Graph\) with associated clusters \(\Graph_1,\ldots,\Graph_n\) we have
	\begin{equation}
		\noptenergy[m,p] (\Graph) \leq\left\{ \begin{array}{ll} \left({\displaystyle \sum_{j=1}^n\frac{m_j}{m}}\noptenergy[m_j,p] (\Graph_j)^p\right)^{1/p} & \text{if } 1\leq p<\infty,\\[.3cm]
		{\displaystyle \max_{j=1,\ldots,k}}\noptenergy[m_j,\infty] (\Graph_j) & \text{if } p=\infty\end{array},\right.
	\end{equation}
	for integers \(m_j\geq 1\) and \(m=\sum_{j=1}^n m_j\). An analogous statement holds for $\noptenergylax[m,p](\Graph)$.
\end{lemma}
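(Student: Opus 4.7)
The plan is to construct, from a putative $n$-partition $\{\Graph_1,\ldots,\Graph_n\}$ of $\Graph$ together with optimal $m_j$-partitions of each cluster, a new $m$-partition of $\Graph$ whose energy is easy to bound. For each $j=1,\ldots,n$, fix a rigid (respectively, connected) $m_j$-partition $\parti_j = \{\Graph_{j,1},\ldots,\Graph_{j,m_j}\}$ of $\Graph_j$ realising $\noptenergy[m_j,p](\Graph_j)$ (respectively, $\noptenergylax[m_j,p](\Graph_j)$); the existence of such minimisers follows from \cite[Corollary~4.8]{KenKurLen20}. Then the union $\parti := \bigcup_{j=1}^n \parti_j$ forms a partition of $\Graph$ into $m=\sum_{j=1}^n m_j$ connected subgraphs.

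First I would check that $\parti$ is admissible as a rigid (respectively, connected) $m$-partition of $\Graph$, so that $\noptenergy[m,p](\Graph) \leq \nenergy[p](\parti)$ (respectively, $\noptenergylax[m,p](\Graph) \leq \nenergy[p](\parti)$). This is essentially a bookkeeping step: the set of cuts producing $\parti$ is the union of the cuts producing the outer partition $\{\Graph_1,\ldots,\Graph_n\}$ and the cuts producing each $\parti_j$ inside $\Graph_j$, and each of these cuts is, by assumption, of the permitted kind; since the clusters $\Graph_{j,i}$ are connected, we recover a valid partition in the sense of Appendix~\ref{sec:appart}. (Here we are relying on the convention, implicit in the statement, that the starting $n$-partition is itself rigid, respectively connected, as appropriate.)

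Next I would compute $\nenergy[p](\parti)$ directly from the definitions \eqref{eq:nenergy}. For $p\in [1,\infty)$,
\begin{equation*}
\nenergy[p](\parti)^p = \frac{1}{m}\sum_{j=1}^n \sum_{i=1}^{m_j} \mu_2(\Graph_{j,i})^p = \sum_{j=1}^n \frac{m_j}{m}\cdot \frac{1}{m_j}\sum_{i=1}^{m_j}\mu_2(\Graph_{j,i})^p = \sum_{j=1}^n \frac{m_j}{m}\,\noptenergy[m_j,p](\Graph_j)^p,
\end{equation*}
by the choice of each $\parti_j$; for $p=\infty$, the corresponding identity $\nenergy[\infty](\parti) = \max_{j=1,\ldots,n}\noptenergy[m_j,\infty](\Graph_j)$ follows in the same way by exchanging the two maxima. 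Combining these identities with the bound $\noptenergy[m,p](\Graph) \leq \nenergy[p](\parti)$ (and its counterpart for $\noptenergylax$) yields the claim.

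I do not expect any serious obstacle: the estimate is essentially a repackaging identity, and the only mildly delicate point is the admissibility check in the first step, which reduces to the observation that the class of rigid (respectively, connected) partitions is closed under refinement by rigid (respectively, connected) sub-partitions of the individual clusters.
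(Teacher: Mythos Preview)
Your proposal is correct and follows essentially the same route as the paper: choose optimal $m_j$-partitions of each $\Graph_j$, take their union as a test $m$-partition of $\Graph$, and compute its energy directly. The paper's proof is slightly terser (it just asserts that the union is rigid and omits the $p=\infty$ case as analogous), but the argument is identical.
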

\begin{proof}
	We restrict ourselves to the case \(1\leq p<\infty\) and rigid partitions, since the other cases can be dealt with analogously. For each \(j\) we choose an optimal rigid \(m_j\)-partition \(\parti_j\) of \(\Graph_j\) associated with \(\noptenergy[m_j,p](\Graph_j)\) with clusters \(\Graph_j^i\) for \(i=1,\ldots,m_j\). We consider the induced rigid \(m\)-partition \(\parti\) of \(\Graph\) given by
	\begin{displaymath}
		\parti:=\bigcup_{j=1}^n\parti_j.
	\end{displaymath}
	 By optimality of \(\parti_j\) we have
	 \begin{displaymath}
	 	m_j\noptenergy[m_j,p](\Graph_j)^p=\sum_{i=1}^{m_j}\mu_2(\Graph_j^i)^p.
	 \end{displaymath}
	 Thus, we obtain
	 \begin{displaymath}
	 	\noptenergy[m,p] (\Graph) \leq \nenergy[p](\parti)
	 	=\left(\frac{1}{m}\sum_{j=1}^n\sum_{i=1}^{m_j}\mu_2(\Graph_j^i)^p\right)^{1/p}
	 	=\left(\sum_{j=1}^n\frac{m_j}{m}\noptenergy[m_j,p] (\Graph_j)^p\right)^{1/p}.
	 \end{displaymath}
	 This concludes the proof.
\end{proof}

\begin{proof}[Proof of Theorem~\ref{thm:nopt-upper-bound}]
	Again, we may restrict ourselves to $\noptenergy[k,p](\Graph)$ and the case \(p=\infty\). Similarly to the proof of Theorem \ref{thm:dopt-upper-bound}, we construct a test partition dividing each Eulerian path into intervals of equal length.  Let \(k\geq n\) be an arbitrary, sufficiently large integer with \(\frac{L}{k}\leq |\mathcal G_j|\) for \(j=1,\ldots,n\). For \(j=1,\ldots,n\) there exists an integer  \(m_j\geq 2\), so that
	\begin{equation}\label{eq:choice-mj}
		m_j\cdot \frac{L}{k}\leq |\mathcal G_j|<(m_j+1)\frac{L}{k}.
	\end{equation}
	We set \(m:=\sum_{j=1}^n m_j\). As in the proof of Theorem~\ref{thm:dopt-upper-bound}, it is immediate that
	\begin{equation}\label{eq:comparing-m-and-n}
		m\leq k\leq m+n-1.
	\end{equation}
	Since \(\Graph_j\) has an Eulerian path and every cycle in $\Graph_j$ has length at least
	\begin{displaymath}
		\mathfrak{s} \geq \frac{3L}{2k} \geq \frac{m_j+1}{m_j}\cdot\frac{L}{k} \geq \frac{|\mathcal G_j|}{m_j}
	\end{displaymath}
	(if it has any cycles at all), 
	we may apply the result of Proposition~\ref{prop:nopt-lower-bound-equality} to obtain
	\begin{equation*}
		\noptenergy[m_j,\infty](\Graph_j)=\frac{\pi^2m_j^2}{|\mathcal G_j|^2}.
	\end{equation*}
	Thus, Lemma \ref{lem:nopt-subpartitions}, the previous equality and {\eqref{eq:choice-mj}} yield
	\begin{displaymath}
		\noptenergy[m,\infty](\Graph)
		\leq \max_{j=1,\ldots,k} \noptenergy[m_j,\infty] (\Graph_j)
		= \max_{j=1,\ldots,k} \frac{\pi^2 m_j^2}{|\mathcal G_j|^2}
		\leq \frac{\pi^2k^2}{L^2}.
	\end{displaymath}
	Since \(\noptenergy[m,\infty](\Graph)\) is monotonically increasing in \(m\) for sufficiently large \(m\), in particular for $m \geq 4|\EdgeSet|$ (see \cite[Proposition~4.14]{KenKurLen20} and its proof, and note that under the assumption $k\geq 4|\EdgeSet|+n-1$, by \eqref{eq:comparing-m-and-n} we also have $m \geq 4|\EdgeSet|$), we may use \eqref{eq:comparing-m-and-n} to conclude
	\begin{displaymath}
		\noptenergy[k-n+1,\infty](\Graph)\leq \noptenergy[m,\infty](\Graph)\leq\frac{\pi^2k^2}{L^2}.
	\end{displaymath}
	Finally, replacing \(k\) by \(k+n-1\) we obtain
	\begin{displaymath}
		\noptenergy[k,\infty](\Graph)\leq\frac{\pi^2(k+n-1)^2}{L^2}=\frac{\pi^2k^2}{L^2}+\frac{2\pi^2(n-1)k}{L^2}+\frac{\pi^2(n-1)^2}{L^2}.
	\end{displaymath}
	This concludes the proof.
\end{proof}

\section{Asymptotic behaviour of the optimal partitions}
\label{sec:asymp-behaviour}

In this section we give the proof of Theorem~\ref{thm:asymptotic-size}, which establishes that the maximal cluster size of any optimal partition tends to zero as $k \to \infty$; this relies on the asymptotic behaviour of the optimal energies obtained in the previous sections. We will also give a couple of consequences of this result, as it in turn allows us to refine and sharpen certain statements from the previous sections.

\begin{proof}[Proof of Theorem~\ref{thm:asymptotic-size}]
We first give the proof in the Dirichlet case. Notationally, for any $k\geq 1$ and any $p \in [1,\infty]$ we suppose $\parti^\ast_{k,p} = \{\Graph_1, \ldots, \Graph_k \}$ to be any admissible $k$-partition realising $\doptenergy[k,p](\Graph)$. Fix $p \in [1,\infty]$. As noted in the proof of Theorem~\ref{thm:improveddirichletestimate}, there are at most $|\NeuSet|+|\PendTwoConn|$ clusters of $\parti^\ast_{k,p}$ which contain either a vertex of degree $1$ or a doubly connected pendant of $\Graph$. Denote by $j_k \leq |\NeuSet| + |\PendTwoConn| + 1$ the number of such clusters of $\parti^\ast_{k,p}$, \emph{plus} any cluster of maximal size if there is not already at least one such cluster among them, and suppose without loss of generality that these clusters are numbered $1,\ldots,j_k$. Finally, denote by $L_k$ the total length of these $j_k$ clusters; then by construction $L_{\max}^D(k) \leq L_k$. We will prove that in fact $L_k = \mathcal{O} (k^{-1})$ as $k \to \infty$.

Firstly, observe that
\begin{equation}
\label{eq:dopt-squeeze}
	\denergy[1] (\parti^\ast_{k,p}) = \frac{\pi^2}{L^2} k^2 + \mathcal{O} (k) \qquad \text{as } k \to \infty,
\end{equation}
since by monotonicity in $p$
\begin{displaymath}
	\doptenergy[k,p] (\Graph) = \denergy[p] (\parti^\ast_{k,p}) \geq \denergy[1] (\parti^\ast_{k,p}) \geq \doptenergy[k,1] (\Graph)
\end{displaymath}
and both $\doptenergy[k,p] (\Graph)$ and $\doptenergy[k,1] (\Graph)$ behave like $\frac{\pi^2}{L^2}k^2 + \mathcal{O} (k)$ as $k \to \infty$, by Theorem~\ref{thm:dirichlet-asymptotics-first-term}. Now, with the notation described above, for $k > j_k$, using that $\lambda_1 (\Graph_i) \geq \frac{\pi^2}{4|\Graph_i|^2}$ for all $i=1,\ldots,j_k$ and $\lambda_1 (\Graph_i) \geq \frac{\pi^2}{|\Graph_i|^2}$ for all $i=j_k+1,\ldots,k$, the usual argument (see \eqref{eq:estimates-to-prove-improved-lower-dopt-bound}) yields
\begin{equation*}
	\denergy[1] (\parti^\ast_{k,p}) \geq \frac{\pi^2}{4}\,\frac{\pi^2k^2}{L^2} + \frac{3\pi^2}{4}\, \frac{(k-j_k)^3}{k(L-L_k)^2}
\end{equation*}
for all $k > j_k$. Suppose now that $L_k \neq \mathcal{O}(k^{-1})$, so that, possibly up to a subsequence, $\lim_{k\to\infty} kL_k = \infty$. We consider the asymptotic behaviour of this subsequence of $k$; our goal is to show that in the asymptotic limit this expression must be larger than allowed by \eqref{eq:dopt-squeeze}. Since $j_k$ remains bounded, the first term in the above estimate converges to zero, and so is certainly of order $\mathcal{O}(1)$, while
\begin{equation*}
	\frac{(k-j_k)^3}{k(L-L_k)^2} = \frac{k^2}{(L-L_k)^2} + \mathcal{O}(k) \quad \text{as } k \to \infty.
\end{equation*}
But since
\begin{equation*}
	\frac{k^2}{(L-L_k)^2} = \frac{k^2}{L^2}\frac{1}{(1-\frac{L_k}{L})^2}
	= \frac{k^2}{L^2}\left(1+\frac{2}{L}L_k + \mathcal{O}(L_k^2)\right) \quad \text{as } k \to \infty
\end{equation*}
and $\lim_{k\to\infty} kL_k = \infty$ by assumption, this means that
\begin{displaymath}
	\denergy[1] (\parti^\ast_{k,p}) \neq \frac{\pi^2}{L^2}k^2 + \mathcal{O}(k) \quad \text{as } k \to \infty,
\end{displaymath}
a contradiction to \eqref{eq:dopt-squeeze}.

In the natural cases, the argument is similar but simpler owing to the better estimate $\mu_2(\Graph_i) \geq \frac{\pi^2}{|\Graph_i|^2}$ for all $i$. We consider $L_k := L_{\max}^{N,r}(k)$; the case ${L}_{\max}^{N,c} (k)$ is identical. We fix $p \in [1,\infty]$ and take $\parti^\ast_{k,p} = \{\Graph_1, \ldots, \Graph_k\}$ to be an optimal $k$-partition realising $\noptenergy[k,p](\Graph)$ and suppose that the cluster $\Graph_1$ has size $|\Graph_1|=L_{\max}^{N,r}(k)$. As in the Dirichlet case, due to the asymptotics \eqref{eq:neumann-weyl} of Theorem~\ref{thm:neumann-asymptotics-first-term} we have
\begin{equation}
\label{eq:neumann-squeeze}
	\nenergy[1] (\parti^\ast_{k,p}) = \frac{\pi^2}{L^2} k^2 + \mathcal{O} (k) \qquad \text{as } k \to \infty.
\end{equation}
On the other hand, for $k \geq 2$,
\begin{displaymath}
\begin{split}
	\nenergy[1] (\parti^\ast_{k,p}) &\geq \pi^2 \left( \frac{1}{k}|\Graph_1|^2 + \frac{k-1}{k}\left(\frac{1}{k-1}\sum_{i=2}^k |\Graph_i|^{-2}\right)\right)\\
	&\geq \frac{\pi^2}{kL_k} + \pi^2 \frac{(k-1)^3}{k(L-L_k)^2}.
\end{split}
\end{displaymath}
Under the assumption that $L_k \neq \mathcal{O}(k^{-1})$, the same argument as in the Dirichlet case now yields that, possibly up to a subsequence, $\nenergy[1](\parti^\ast_{k,p}) \neq \frac{\pi^2}{L^2}k^2 + \mathcal{O}(k)$ as $k \to \infty$, contradicting \eqref{eq:neumann-squeeze}.
\end{proof}

As a first corollary of Theorem~\ref{thm:asymptotic-size} we obtain an improved version of the lower bound in Theorem~\ref{thm:dirichlet-asymptotics-first-term} for sufficiently large $k$; namely, we can drop the term $\beta$ appearing there.

\begin{corollary}\label{cor:improveddirichletestimate-largek}
	Let $\Graph$ be a compact and connected metric graph with total length $L>0$ and $|\NeuSet|$ vertices of degree one. Fix $p\in [1,\infty]$. Then there exists \(k_0\geq 2\) such that for all $k\ge k_0$ we have
	\begin{equation}
		\label{eq:improveddirichletestimate-largek}
	\doptenergy[k,p] (\Graph) \ge \frac{\pi^2}{4k L^2} \left ( k^3 + 3  (k-|\NeuSet|)^3 \right ).
	\end{equation}
\end{corollary}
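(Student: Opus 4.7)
The plan is to bootstrap Theorem~\ref{thm:improveddirichletestimate} using the asymptotic size control provided by Theorem~\ref{thm:asymptotic-size}: once $k$ is large enough that every cluster of every optimal Dirichlet partition has diameter smaller than the smallest doubly connected pendant of $\Graph$, no cluster can contain such a pendant as a subset, and the term $|\PendTwoConn|$ in the count of ``bad'' clusters in the proof of Theorem~\ref{thm:improveddirichletestimate} drops out.

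If $|\PendTwoConn| = 0$ the statement is already contained in Theorem~\ref{thm:improveddirichletestimate}, so I may assume $|\PendTwoConn| \geq 1$. I would set $\ell_0 := \min\{|\Graph'| : \Graph' \in \PendTwoConn\}$, which is strictly positive since $\PendTwoConn$ is finite and each of its members has non-empty interior by definition. Theorem~\ref{thm:asymptotic-size} (whose statement is uniform over the choice of optimiser) then supplies a threshold $k_0 \geq 2$ with the property that for every $k \geq k_0$ and every optimal partition $\parti^\ast = \{\Graph_1,\ldots,\Graph_k\}$ realising $\doptenergy[k,p](\Graph)$ one has $\max_{i} |\Graph_i| < \ell_0$. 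In particular no cluster of $\parti^\ast$ can then contain a doubly connected pendant of $\Graph$.

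With this preparation, I would rerun the proof of Theorem~\ref{thm:improveddirichletestimate} essentially verbatim: reduce to $p = 1$ by monotonicity, reorder the clusters of $\parti^\ast$ so that $\Graph_1,\ldots,\Graph_{j_k}$ are those containing either a vertex of degree one or a doubly connected pendant of $\Graph$, and invoke the choice of $k_0$ to conclude that in fact $j_k \leq |\NeuSet|$. Applying Proposition~\ref{thm:nicaise}.(1) to these first $j_k$ clusters and Proposition~\ref{thm:nicaise}.(2) to the remaining $k - j_k$ clusters, and chaining Jensen's inequality exactly as in \eqref{eq:estimates-to-prove-improved-lower-dopt-bound} but with $L_k$ again allowed to be at most $L$ and $j_k$ now bounded by $|\NeuSet|$, one arrives at \eqref{eq:improveddirichletestimate-largek}.

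The only step requiring any thought is the uniformity question: one needs the conclusion $L^D_{\max}(k) = \mathcal{O}(k^{-1})$ to hold for \emph{every} choice of optimal partition realising $\doptenergy[k,p](\Graph)$, rather than merely for some chosen sequence of such partitions. This is however precisely the content of Theorem~\ref{thm:asymptotic-size} as formulated, so no additional argument is needed and the corollary follows.
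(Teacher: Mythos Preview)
Your argument is correct and follows the same overall strategy as the paper: use Theorem~\ref{thm:asymptotic-size} to force the clusters of an optimal partition to be small, and then apply the eigenvalue estimate of Proposition~\ref{thm:nicaise}(2) to all but at most $|\NeuSet|$ of them. The difference lies only in the threshold chosen. You take the clusters smaller than $\ell_0 = \min\{|\Graph'| : \Graph' \in \PendTwoConn\}$, which guarantees that no cluster contains a doubly connected pendant and hence lets you rerun the proof of Theorem~\ref{thm:improveddirichletestimate} verbatim with $j_k \leq |\NeuSet|$. The paper instead pushes further and takes the clusters smaller than $\ell_{\min}$, so that every cluster is an interval or a star with all non-centre vertices at cut points; the required eigenvalue bounds are then read off directly for these explicit shapes rather than by re-invoking the argument of Theorem~\ref{thm:improveddirichletestimate}. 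Your route is a cleaner black-box reduction, the paper's yields slightly more structural information about the optimal partitions along the way; both give the same conclusion.

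Two cosmetic points: the quantity controlled by Theorem~\ref{thm:asymptotic-size} is the total length $|\Graph_i|$ of a cluster, not its diameter, so your opening sentence should refer to length; and since you first reduce to $p=1$ by monotonicity, it is cleanest to invoke Theorem~\ref{thm:asymptotic-size} for $p=1$ and work with an optimiser of $\doptenergy[k,1](\Graph)$ throughout, which also makes the resulting $k_0$ manifestly independent of the original $p$.
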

\begin{proof}
	By monotonicity it is sufficient to prove the assertion for \(p=1\). For \(k\geq 2\), we suppose that \(\parti_k^D\) is an admissible \(k\)-partition  realising \(\doptenergy[k,1](\Graph)\) and \(L_{\max}^D(k)\) is the maximum length of the clusters in \(\parti_k^D\). By Theorem \ref{eq:asymptotic-size} we find some \(k_0\geq 2\) such that
		\[L_{\max}^D(k)< \ell_{\min} \] %\min_{\me\in\EdgeSet}|\me|\]
	holds for all \(k\geq k_0\). In particular, the clusters appearing in \(\parti_k^D\) are either intervals or stars, where all non-centre vertices are cut points. Let \(\mathcal G_1,\ldots,\mathcal G_{|\NeuSet|}\) be the clusters of \(\parti_k^D\) that contain the vertices of \(\mathcal G\) of degree one and let \(\mathcal G_{|\NeuSet|+1},\ldots,\mathcal G_k\) be the remaining clusters. We then have \(\lambda_1(\mathcal G_j)=\frac{\pi^2}{4|\mathcal G_j|^2}\) for \(j=1,\ldots,|\NeuSet|\) and \(\lambda_1(\mathcal G_j)\geq \frac{\pi^2}{|\mathcal G_j|^2}\) for \(j=|\NeuSet|+1,\ldots,k\) by \eqref{eq:nicaise-2}. Adapting the arguments in \eqref{eq:estimates-to-prove-improved-lower-dopt-bound} we obtain
		\[\doptenergy[k,1](\Graph)=\denergy[1](\parti_k^D)\geq \frac{\pi^2}{4k L^2} \left ( k^3 + 3  (k-|\NeuSet|)^3 \right ).\]
\end{proof}

As a second consequence of Theorem~\ref{thm:asymptotic-size} we will prove that, for fixed $p \in [1,\infty]$, $\noptenergy[k,p]$ is a monotonically increasing function of $k$, at least for $k$ sufficiently large.

\begin{theorem}
\label{thm:neumann-monotonicity}
Let $\Graph$ be a compact and connected graph, and fix $p \in [1,\infty]$. Then there exists $k_0\geq 2$ depending only on $\Graph$ and $p$ such that
\begin{displaymath}
	\noptenergy[k_2,p] (\Graph) \geq \noptenergy[k_1,p] (\Graph) \qquad \text{for all } k_2 \geq k_1 \geq k_0.
\end{displaymath}
\end{theorem}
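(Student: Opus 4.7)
The plan is to show that, for $k$ sufficiently large, $\noptenergy[k-1,p](\Graph) \leq \noptenergy[k,p](\Graph)$; iterating this inequality will yield the full statement. To this end, given any optimal $k$-partition $\parti_k = \{\Graph_1,\ldots,\Graph_k\}$ realising $\noptenergy[k,p](\Graph)$, I will construct an admissible $(k-1)$-partition $\parti_{k-1}$ satisfying $\nenergy[p](\parti_{k-1}) \leq \nenergy[p](\parti_k) = \noptenergy[k,p](\Graph)$, by merging two adjacent clusters of $\parti_k$.

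By Theorem~\ref{thm:asymptotic-size}, one may fix $k_0$ so that whenever $k \geq k_0$, every cluster of every optimal $\parti_k$ has total length at most $\ell_{\min}/4$. In particular, at most $|\VertexSet|$ clusters contain a vertex of $\Graph$, and on each edge of $\Graph$ the portion not covered by vertex-containing clusters has length at least $\ell_{\min}/2$ and is therefore split into at least two \emph{sub-interval} clusters lying entirely in the relative interior of that edge. Consequently, every sub-interval cluster has at least one sub-interval neighbour across a cut point on the same edge, and merging any two such adjacent sub-intervals $\Graph_j,\Graph_{j'}$ yields another interval cluster $\Graph_j \cup \Graph_{j'}$ of length $|\Graph_j|+|\Graph_{j'}|$, giving a rigid $(k-1)$-partition $\parti_{k-1}$; moreover,
\[
\mu_2(\Graph_j \cup \Graph_{j'}) = \frac{\pi^2}{(|\Graph_j|+|\Graph_{j'}|)^2} \leq \min\bigl(\mu_2(\Graph_j),\mu_2(\Graph_{j'})\bigr).
\]

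A direct computation now gives, for $p \in [1,\infty)$,
\[
(k-1)\bigl(\nenergy[p](\parti_{k-1})^p - \nenergy[p](\parti_k)^p\bigr) = \nenergy[p](\parti_k)^p + \mu_2(\Graph_j \cup \Graph_{j'})^p - \mu_2(\Graph_j)^p - \mu_2(\Graph_{j'})^p,
\]
so that $\nenergy[p](\parti_{k-1}) \leq \nenergy[p](\parti_k)$ reduces, via the eigenvalue inequality above, to the pointwise condition $\max\bigl(\mu_2(\Graph_j),\mu_2(\Graph_{j'})\bigr) \geq \nenergy[p](\parti_k)$. The case $p = \infty$ is immediate: any such merge satisfies $\mu_2(\Graph_j\cup\Graph_{j'}) \leq \nenergy[\infty](\parti_k)$ trivially, so no additional selection is needed.

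The main obstacle is, for finite $p$, to guarantee the existence of a sub-interval cluster $\Graph_j$ with $\mu_2(\Graph_j) \geq \nenergy[p](\parti_k)$; by the previous paragraph this $\Graph_j$ then has a sub-interval neighbour completing the pair. The problematic scenario is that every sub-interval cluster has $\mu_2 < \nenergy[p](\parti_k)$, which by pigeonhole forces the at most $|\VertexSet|$ vertex-containing clusters alone to drive the $p$-mean of all cluster eigenvalues above the maximal sub-interval eigenvalue. I expect to rule out this regime for $k$ sufficiently large by combining Theorem~\ref{thm:asymptotic-size} with the asymptotics $\noptenergy[k,p](\Graph) = \pi^2 k^2/L^2 + \mathcal{O}(k)$ from Theorem~\ref{thm:neumann-asymptotics-first-term}, and a first-order optimality analysis at the cut points between vertex-clusters and their sub-interval neighbours, which forces the eigenvalues in an optimal partition to be asymptotically balanced around $\pi^2 k^2/L^2$.
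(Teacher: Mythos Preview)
Your proposal has a genuine gap for finite $p$: you never actually carry out the ``first-order optimality analysis'' needed to guarantee a sub-interval cluster with $\mu_2 \geq \nenergy[p](\parti_k)$, and this step is not routine. The paper avoids this obstacle entirely by a much simpler choice: rather than restricting to sub-interval clusters, one takes the cluster $\Graph_j$ with \emph{maximal} $\mu_2$ among all clusters of $\parti_k$. Then $\mu_2(\Graph_j) \geq \nenergy[p](\parti_k)$ holds trivially. The point you are missing is that the restriction to sub-intervals is unnecessary: once $k$ is large enough that every cluster is a tree of length less than $\ell_{\min}$, any two neighbouring clusters meet at a single point, and the general surgery principle (see \cite[Theorem~3.10(1)]{BerKenKur19}) gives $\mu_2(\Graph_j \cup \Graph_{j'}) \leq \mu_2(\Graph_{j'})$ for \emph{any} neighbour $\Graph_{j'}$, not just for intervals. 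With this, your displayed identity for $(k-1)\bigl(\nenergy[p](\parti_{k-1})^p - \nenergy[p](\parti_k)^p\bigr)$ immediately yields the result.

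In short, the structure of your argument (merge two neighbours, compare the $p$-means) is exactly right and matches the paper's proof, but the detour through sub-interval clusters creates an artificial difficulty. Drop that restriction, pick the maximal-$\mu_2$ cluster and any neighbour, and the proof is complete.
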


We recall that the monotonicity in the connected case, $\noptenergylax[k_2,p] (\Graph) \geq \noptenergylax[k_1,p] (\Graph)$ for all $k_2 \geq k_1 \geq 1$, was already established in Remark~4.11 of \cite{KenKurLen20}, as was Theorem~\ref{thm:neumann-monotonicity} in the special case $p=\infty$ in \cite[Proposition~4.14]{KenKurLen20} (which was also required in one of the above proofs). In general we cannot necessarily expect $k_0 = 1$, see Example~\ref{ex:nomon}.

\begin{proof}
Since the case $p=\infty$ was treated in \cite{KenKurLen20}, we give the proof for $p \in [1,\infty)$. So fix $p \in [1,\infty)$ and for $k\geq 1$ denote by $\parti^\ast_{k,p} = \{\Graph_1,\ldots,\Graph_k\}$ any rigid $k$-partition achieving $\noptenergy[k,p](\Graph)$. By Theorem~\ref{thm:asymptotic-size} there exists some $k_0 = k_0 (\Graph,p)$ such that for every $k \geq k_0$ every cluster of $\parti_{k,p}^\ast$ has length strictly shorter than the shortest edge length of $\Graph$, and in particular every cluster is a tree, which meets any neighbouring cluster of $\parti_{k,p}^\ast$ at a single vertex.

It clearly suffices to prove the theorem for $k_2=k_1+1$. Fix $k \geq k_0+1$ and consider $\parti^\ast_{k,p}$; we suppose without loss of generality that
\begin{equation}
\label{eq:maxed-out}
	\mu_2 (\Graph_k) = \max_{i=1,\ldots,k} \mu_2 (\Graph_i)
\end{equation}
and that $\Graph_{k-1}$ is a neighbour of $\Graph_k$. We now set $\widetilde \Graph_{k-1} := \Graph_{k-1} \cup \Graph_k$; then since $\Graph_{k-1}$ and $\Graph_k$ necessarily meet at a single point, by \cite[Theorem~3.10(1)]{BerKenKur19}, we have $\mu_2 (\widetilde \Graph_{k-1}) \leq \mu_2 (\Graph_{k-1})$. We construct a test $k-1$-partition $\widetilde\parti := \{ \Graph_1, \ldots, \Graph_{k-2}, \widetilde \Graph_{k-1} \}$ of $\Graph$; then, again using the fact that $\Graph_{k-1}$ and $\Graph_k$ meet at a single point and $\parti_{k,p}^\ast$ was assumed rigid, $\widetilde\parti$ is a rigid $k-1$-partition of $\Graph$.

We claim that $\nenergy[p] (\parti^\ast_{k,p}) \geq \nenergy [p] (\widetilde\parti)$, from which the conclusion of the theorem in the case $p \in [1,\infty)$ will immediately follow. In fact, this is an elementary calculation using \eqref{eq:maxed-out}: it follows from \eqref{eq:maxed-out} that
\begin{displaymath}
	\mu_2 (\Graph_k)^p \geq \frac{1}{k-1}\sum_{i=1}^{k-1} \mu_2 (\Graph_i)^p,
\end{displaymath}
and hence
\begin{displaymath}
\begin{aligned}
	\nenergy[p](\parti^\ast_{k,p})^p - \nenergy[p](\widetilde{\parti})^p &= \frac{1}{k} \sum_{i=1}^k \mu_2 (\Graph_i)^p - \frac{1}{k-1}\left(\sum_{i=1}^{k-2}\mu_2 (\Graph_i)^p
		+ \mu_2 (\widetilde{\Graph}_{k-1})^p\right)\\
	&=\frac{1}{k}\mu_2(\Graph_k)^p - \frac{1}{k(k-1)}\sum_{i=1}^{k-2} \mu_2 (\Graph_i)^p + \frac{1}{k}\mu_2(\Graph_{k-1})^p - \frac{1}{k-1}\mu_2(\widetilde{\Graph}_{k-1})^p\\
	&\geq \frac{1}{k} \mu_2 (\Graph_k)^p - \frac{1}{k(k-1)} \sum_{i=1}^{k-1} \mu_2 (\Graph_i)^p
\end{aligned}
\end{displaymath}
since $\mu_2 (\Graph_{k-1}) \geq \mu_2 (\widetilde{\Graph}_{k-1})$. By \eqref{eq:maxed-out}, this latter expression is nonnegative, and so we conclude that $\nenergy[p] (\parti^\ast_{k,p}) \geq \nenergy [p] (\widetilde\parti)$, as desired.
\end{proof}

\begin{example}
\label{ex:nomon}
We consider the graph $\Graph$ depicted in Figure~\ref{fig:nomon}, which in turn was taken from \cite[Example~8.2]{KenKurLen20}; we claim that for this graph $\noptenergy[2,p] (\Graph) < \noptenergy[1,p] (\Graph)$ for all $p \in [1,\infty]$, that is, monotonicity in Theorem~\ref{thm:neumann-monotonicity} fails when $k_1 = 1$ and $k_2 = 2$.

Suppose that $\Graph$ has total length $L$ and fix $p \in [1,\infty]$. It was already shown in \cite[Example~8.2]{KenKurLen20} that $\noptenergy[2,p] (\Graph) = \frac{4\pi^2}{L^2}$. Next, we note that by definition $\noptenergy[1,p] (\Graph) = \mu_2 (\Graph)$. Now by the Band--L\'evy inequality, Proposition~\ref{thm:nicaise}(2), since $\Graph$ is not a $2$-regular pumpkin chain, we have $\mu_2 (\Graph) > \frac{4\pi^2}{L^2}$. This proves the claimed reverse monotonicity.
\end{example}

\section{Asymptotics on two simple graphs}
\label{sec:asymp}

In the previous sections, we proved that the minimal energies \(\ndoptenergy[k,p](\mathcal G)\) satisfy the Weyl-type asymptotic law
\[
	\ndoptenergy[k,p](\mathcal G)=\frac{\pi^2}{L^2}k^2+\mathcal{O}(k) \quad \text{as } k\rightarrow \infty.
\]
In this section we are going to discuss the behaviour of the first order term \(\mathcal O(k)\) in this expansion. A natural question to ask is if there exists some \(c\in \mathbb R\) such that
\[
	\ndoptenergy[k,p](\mathcal G)=\frac{\pi^2}{L^2}k^2+ck + \mathcal{O}(1) \quad \text{as } k\rightarrow \infty
\]
holds. We are going to show that in general such \(c\) does not exist. More precisely, we study the sequence given by
	\[c_k:=\frac{\ndoptenergy[k,p](\mathcal G) - \frac{\pi^2 k^2}{L^2}}{k}, \quad k\in\mathbb N\]
and give examples where \((c_k)_k\) has \(a\) limit points for some given \(a\in\mathbb N\) (equilateral star graphs with \(2a\) edges) or uncountably many limit points (two disjoint path graphs with rationally independent lengths). For simplicity of our discussion, we restrict ourselves to the case \(p=\infty\), but note that our techniques may easily be adapted to the case \(p\in [1,\infty)\).

\subsection{Equilateral stars}
\label{sec:equilateral}
For \(m\geq 3\), we consider the equilateral \(m\)-star \(\mathcal \mathcal{S}_m\) of total length \(L\).
\begin{lemma}\label{lem:doptenergy-star}
	For \(j\in\mathbb N_0\) we have
	\begin{align*}
		\doptenergy[jm+1,\infty](\mathcal{S}_m) & = \mu_{jm+1}(\mathcal{S}_m) = \frac{\pi^2m^2j^2}{L^2}, \\
		\doptenergy[jm+r,\infty](\mathcal{S}_m) & = \mu_{jm+r}(\mathcal{S}_m) = \frac{\pi^2m^2(j+\frac{1}{2})^2}{L^2},\quad r=2,\ldots,m.
	\end{align*}		
\end{lemma}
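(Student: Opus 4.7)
The plan is to handle the two equalities in each formula (the identification of $\mu_k(\mathcal{S}_m)$ and of $\doptenergy[k,\infty](\mathcal{S}_m)$) separately and combine them through the general inequality $\mu_k \leq \doptenergy[k,\infty]$ recalled in Remark~\ref{rem:mu-k-bounds-via-ld}.

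First I would compute $\mu_k(\mathcal{S}_m)$ by standard separation of variables. Parametrising every arm as $[0,\ell]$ with $\ell := L/m$ and the pendant at $0$, the Neumann condition at the pendant forces any standard Laplacian eigenfunction to be of the form $\psi_i(x) = a_i\cos(\omega x)$ on arm $i$, where $\omega^2 = \mu$. Imposing the Kirchhoff conditions at the centre cleanly splits the eigenvalue problem into a symmetric family in which $\sin(\omega\ell)=0$ (all $a_i$ coincide; simple eigenvalues $\omega^2 = n^2\pi^2/\ell^2$, $n \in \mathbb{N}$), and an antisymmetric family in which $\cos(\omega\ell)=0$ (the $a_i$ satisfy $\sum_i a_i = 0$; eigenvalues $\omega^2 = (n+\tfrac12)^2\pi^2/\ell^2$ of multiplicity $m-1$). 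Interleaving these in increasing order yields at once $\mu_{jm+1} = j^2\pi^2 m^2/L^2$ and $\mu_{jm+r} = (j+\tfrac12)^2\pi^2 m^2/L^2$ for $r\in\{2,\ldots,m\}$, as claimed.

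The lower bound $\doptenergy[k,\infty](\mathcal{S}_m) \geq \mu_k(\mathcal{S}_m)$ is then free. For the matching upper bound when $k = jm+1$, the plan is to use the nodal partition of the symmetric $\mu_{jm+1}$-eigenfunction $x\mapsto \cos(j\pi x/\ell)$: because it does not vanish at the centre ($\cos(j\pi)=(-1)^j$) and carries the same sign there on every arm, its $jm$ interior zeros cut out $jm$ interval nodal domains, and the final sub-intervals on all arms glue together through the centre into one "small star" nodal domain. A direct calculation -- applying the same symmetric/antisymmetric decomposition to this small star -- shows that the first Dirichlet eigenvalue on each of the resulting $jm+1$ clusters equals $\mu_{jm+1}$.

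For $k = jm + r$ with $r \in \{2, \ldots, m\}$ I would instead construct the partition by hand to match the target $\pi^2(2j+1)^2/(4\ell^2)$. On each arm place one Neumann--Dirichlet interval of length $\ell/(2j+1)$ at the pendant and $j$ Dirichlet--Dirichlet intervals of length $2\ell/(2j+1)$ filling the rest; each then has first Dirichlet eigenvalue exactly $\pi^2(2j+1)^2/(4\ell^2)$. At the centre, rather than cutting into all $m$ copies, cut into only $r$ copies by grouping the last sub-intervals from a distinguished block of $m-r+1$ arms into a single cluster -- a small star whose new central vertex carries a Dirichlet condition and which therefore decouples into $m-r+1$ independent Dirichlet--Dirichlet intervals with the same first eigenvalue. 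Counting gives exactly $jm+r$ clusters of uniform first Dirichlet eigenvalue $\mu_{jm+r}$; the case $j=0$ degenerates to simply grouping the $m$ arms into $r$ stars rooted with Dirichlet data at the centre, giving $\lambda_1 = \pi^2/(4\ell^2)$ on each. The most delicate technical point will be verifying admissibility of this uneven centre-cut as a rigid partition in the sense of \cite{KenKurLen20}; everything else reduces to routine eigenvalue calculations on intervals and small stars.
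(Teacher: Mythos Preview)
Your proposal is correct and follows the same overall architecture as the paper: identify $\mu_k(\mathcal{S}_m)$, invoke the general inequality $\mu_k \leq \doptenergy[k,\infty]$ for the lower bound, and exhibit explicit test partitions for the upper bound. For $k=jm+1$ your nodal partition of the symmetric eigenfunction is precisely the partition the paper writes down explicitly (central $m$-star with edge length $L/(2mj)$, $m$ Neumann--Dirichlet intervals, and $m(j-1)$ Dirichlet--Dirichlet intervals), so there is no real difference there.

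For $k=jm+r$ with $2\le r\le m$ your test partition genuinely differs from the paper's. The paper cuts completely through the centre and treats the arms asymmetrically: $r$ arms are divided into $j+1$ pieces with energy $\pi^2 m^2(j+\tfrac12)^2/L^2$, and the remaining $m-r$ arms into $j$ pieces with the smaller energy $\pi^2 m^2(j-\tfrac12)^2/L^2$, so only the first group of clusters attains the max. You instead divide every arm identically into $j+1$ pieces and then recombine $m-r+1$ of the centre-adjacent intervals into a single star carrying a Dirichlet vertex at the centre, obtaining an equipartition in which every cluster has energy exactly $\mu_{jm+r}$. Both are valid rigid partitions and give the same $\denergy[\infty]$; the paper's construction is slightly more elementary since every cluster is a plain interval, while yours has the aesthetic advantage of being a genuine spectral equipartition (and in fact works uniformly for all $p$ without appealing to monotonicity). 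Your admissibility concern is harmless: the centre is a separating point, so splitting it into one copy of degree $m-r+1$ and $r-1$ copies of degree one is a legitimate rigid cut.
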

\begin{proof}
	The ordered eigenvalues \(\mu_k(\mathcal{S}_m)\) of the equilateral \(m\)-star \(\mathcal{S}_m\) are
\begin{align}\label{eq:eigenvalues-star}
	\mu_{jm+1}(\mathcal{S}_m)&=\frac{\pi^2m^2j^2}{L^2} & \mu_{jm+r}(\mathcal{S}_m)&=\frac{\pi^2m^2(j+\frac{1}{2})^2}{L^2},\quad r=2,\ldots,m
\end{align}
for \(j\in\mathbb N_0\) (cf.~\cite[Example 3]{Fri05}). By \cite[Proposition 5.5]{KenKurLen20} we have \(\mu_k(\mathcal{S}_m)\leq \doptenergy[k,\infty](\mathcal{S}_m)\) for \(k\in \mathbb N\).   Therefore it will be sufficient to find respective partitions of \(\mathcal{S}_m\) whose energies coincides with the eigenvalues in \eqref{eq:eigenvalues-star} and these partitions will be optimal.\\
For \(k=jm+1\) we consider the partition \(\mathcal P\) consisting of an equilateral \(m\)-star with edge length \(\frac{L}{2mj}\), \(m\) intervals of length \(\frac{L}{2mj}\) each having one Dirichlet and one Neumann vertex and \(m(j-1)\) intervals of length \(\frac{L}{mj}\) each having two Dirichlet vertices. Then each cluster of \(\mathcal P\) has the same Dirichlet energy \(\frac{\pi^2m^2j^2}{L^2}\) and we conclude
	\[\doptenergy[k,\infty](\mathcal{S}_m)=\denergy[p](\mathcal P)=\frac{\pi^2m^2j^2}{L^2}.\]
For \(k=mj+r\) with \(1<r\leq m\) we consider a partition \(\mathcal P\) obtained after cutting through the center vertex of the star, where the first \(r\) edges \(\me_1,\ldots,\me_r\) are divided into \(j+1\) intervals -- one of length \(\frac{L}{m(2j+1)}\) with one Neumann and one Dirichlet vertex and the other \(j\) of length \(\frac{2L}{m(2j+1)}\) with two Dirichlet vertices -- and the remaining \(m-r\) edges \(\me_{r+1},\ldots,\me_{m}\) are divided into \(j\) intervals -- one of length \(\frac{L}{m(2j-1)}\) with one Neumann and one Dirichlet vertex and the other \(j\) of length \(\frac{2L}{m(2j-1)}\) with two Dirichlet vertices. The Dirichlet energy of the clusters in \(\me_1,\ldots,\me_r\) is \(\frac{\pi^2m^2(j+\frac{1}{2})^2}{L^2}\) whereas the Dirichlet energy of the clusters in \(\me_{r+1},\ldots,\me_{m}\) is \(\frac{\pi^2m^2(j-\frac{1}{2})^2}{L^2}\). We obtain
	\[\doptenergy[k,\infty](\mathcal{S}_m)=\denergy[\infty](\mathcal P)=\frac{\pi^2m^2(j+\frac{1}{2})^2}{L^2}.\]
This concludes the proof.
\end{proof}

\begin{center}
\begin{figure}[ht]
\begin{tikzpicture}[scale=1.2]
\coordinate (c) at (0,0);
\foreach \i in {1,2,3} {
\coordinate (v\i) at (120*\i:0.5);
\coordinate (u\i) at (120*\i:0.7);
\coordinate (w\i) at (120*\i:1.7);
\coordinate (x\i) at (120*\i:1.9);
\coordinate (y\i) at (120*\i:2.4);
%\coordinate (h\i) at ({60+120*\i}:.4);
\coordinate (f\i) at ($(120*\i:.6)+({120*\i+90}:.3)$);
\coordinate (g\i) at ($(120*\i:.6)+({120*\i-90}:.3)$);
\coordinate (d\i) at ($(120*\i:1.8)+({120*\i+90}:.3)$);
\coordinate (e\i) at ($(120*\i:1.8)+({120*\i-90}:.3)$);
\draw[thick] (c) -- (v\i) (u\i)--(w\i) (x\i)--(y\i);
\draw[dashed, thick] (f\i) -- (g\i) (d\i) -- (e\i);
\draw[fill] (y\i) circle (1.75pt);
\draw[fill=white] (v\i) circle (1.75pt) (u\i) circle (1.75pt) (w\i) circle (1.75pt) (x\i) circle (1.75pt);}
\draw[fill] (c) circle (1.75pt);
\end{tikzpicture}
\begin{tikzpicture}[scale=1.2]
\coordinate (c) at (0,0);
\foreach \i in {1,2} {
\coordinate (v\i) at (120*\i:.15);
\coordinate (w\i) at (120*\i:.95);
\coordinate (u\i) at (120*\i:1.15);
\coordinate (x\i) at (120*\i:1.95);
\coordinate (y\i) at (120*\i:2.15);
\coordinate (z\i) at (120*\i:2.55);
\coordinate (h\i) at ({60+120*\i}:.4);
\coordinate (f\i) at ($(120*\i:1.05)+({120*\i+90}:.3)$);
\coordinate (g\i) at ($(120*\i:1.05)+({120*\i-90}:.3)$);
\coordinate (d\i) at ($(120*\i:2.05)+({120*\i+90}:.3)$);
\coordinate (e\i) at ($(120*\i:2.05)+({120*\i-90}:.3)$);
\draw[thick] (v\i) -- (w\i) (u\i) -- (x\i) (y\i) -- (z\i);
\draw[dashed, thick] (f\i) -- (g\i) (d\i) -- (e\i);
\draw[thick,dashed] (c) -- (h\i);
\draw[fill] (z\i) circle (1.75pt);
\draw[fill=white] (v\i) circle (1.75pt) (u\i) circle (1.75pt) (w\i) circle (1.75pt) (x\i) circle (1.75pt) (y\i) circle (1.75pt);
}
\foreach \i in {3} {
\coordinate (v\i) at (120*\i:.15);
\coordinate (w\i) at (120*\i:1.5);
\coordinate (u\i) at (120*\i:1.7);
\coordinate (x\i) at (120*\i:2.35);
%\coordinate (y\i) at (120*\i:2.15);
%\coordinate (z\i) at (120*\i:2.55);
\coordinate (h\i) at ({60+120*\i}:.4);
\coordinate (f\i) at ($(120*\i:1.6)+({120*\i+90}:.3)$);
\coordinate (g\i) at ($(120*\i:1.6)+({120*\i-90}:.3)$);
%\coordinate (d\i) at ($(120*\i:2.05)+({120*\i+90}:.3)$);
%\coordinate (e\i) at ($(120*\i:2.05)+({120*\i-90}:.3)$);
\draw[thick] (v\i) -- (w\i) (u\i) -- (x\i);
\draw[dashed, thick] (f\i) -- (g\i);
\draw[thick,dashed] (c) -- (h\i);
\draw[fill] (x\i) circle (1.75pt);
\draw[fill=white] (v\i) circle (1.75pt) (u\i) circle (1.75pt) (w\i) circle (1.75pt);
}
\end{tikzpicture}
\begin{tikzpicture}[scale=1.2]
\coordinate (c) at (0,0);
\foreach \i in {1,2,3} {
\coordinate (v\i) at (120*\i:.15);
\coordinate (w\i) at (120*\i:.95);
\coordinate (u\i) at (120*\i:1.15);
\coordinate (x\i) at (120*\i:1.95);
\coordinate (y\i) at (120*\i:2.15);
\coordinate (z\i) at (120*\i:2.55);
\coordinate (h\i) at ({60+120*\i}:.4);
\coordinate (f\i) at ($(120*\i:1.05)+({120*\i+90}:.3)$);
\coordinate (g\i) at ($(120*\i:1.05)+({120*\i-90}:.3)$);
\coordinate (d\i) at ($(120*\i:2.05)+({120*\i+90}:.3)$);
\coordinate (e\i) at ($(120*\i:2.05)+({120*\i-90}:.3)$);
\draw[thick] (v\i) -- (w\i) (u\i) -- (x\i) (y\i) -- (z\i);
\draw[dashed, thick] (f\i) -- (g\i) (d\i) -- (e\i);
\draw[thick,dashed] (c) -- (h\i);
\draw[fill] (z\i) circle (1.75pt);
\draw[fill=white] (v\i) circle (1.75pt) (u\i) circle (1.75pt) (w\i) circle (1.75pt) (x\i) circle (1.75pt) (y\i) circle (1.75pt);
}

\end{tikzpicture}
\vspace{.5em}
\caption{The optimal \(7\)-, \(8\)- and \(9\)-partitions of the \(3\)-star in the proof of Lemma~\ref{lem:doptenergy-star}. White vertices denote vertices with Dirichlet conditions.}\label{fig:doptenergy-3star}
\end{figure}
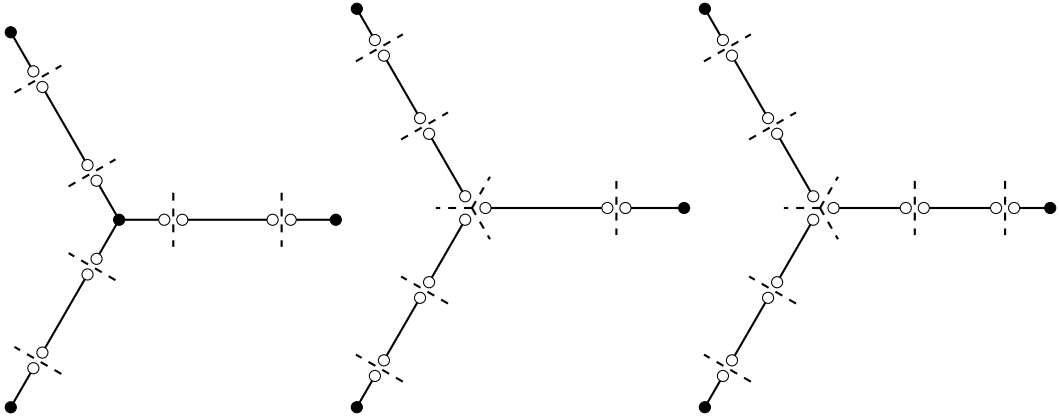
\end{center}

\begin{proposition}
\label{prop:ck-1}
	The limit set of the sequence \((c_k)_{k\in\mathbb N}\) with
			\[c_k:=\frac{\doptenergy[k,\infty](\mathcal{S}_m)-\frac{\pi^2k^2}{L^2}}{k},\quad k\in\mathbb N,\]
	is
	\begin{equation}
	\label{eq:ck-1}
		\left\{-\frac{2\pi^2}{L^2}\right\}\cup \left\{\frac{2\pi^2(s-1-\frac{m}{2})}{L^2}~\big|~s=1,\ldots,m-1\right\}.
	\end{equation}
	In particular, \((c_k)_{k\in\mathbb N}\) has \(m-1\) limit points if \(m\) is even and \(m\) limit points if \(m\) is odd.
\end{proposition}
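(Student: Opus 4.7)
The plan is to exploit Lemma~\ref{lem:doptenergy-star}, which gives $\doptenergy[k,\infty](\mathcal{S}_m)$ as an explicit function of the residue of $k$ modulo $m$. Writing every $k \in \mathbb{N}$ uniquely as $k = jm + r$ with $j \in \mathbb{N}_0$ and $r \in \{1,\ldots,m\}$, the sequence $(c_k)$ decomposes into $m$ arithmetic-progression subsequences. I will compute the limit of each one and then observe, by a pigeonhole argument, that any convergent subsequence of $(c_k)$ must contain an infinite tail inside a single residue class, so that the limit set of $(c_k)$ coincides exactly with the (multi)set of these $m$ subsequential limits, duplicates removed.

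For the subsequence $r=1$, substituting $\doptenergy[jm+1,\infty](\mathcal{S}_m) = \pi^2 m^2 j^2/L^2$ and expanding $(jm+1)^2 = m^2 j^2 + 2mj + 1$ gives
\[
c_{jm+1} = \frac{\pi^2}{L^2}\cdot\frac{-2mj - 1}{jm + 1} \xrightarrow[j\to\infty]{} -\frac{2\pi^2}{L^2},
\]
which yields the isolated element of the limit set in~\eqref{eq:ck-1}. For $r \in \{2,\ldots,m\}$, using $\doptenergy[jm+r,\infty](\mathcal{S}_m) = \pi^2 m^2 (j+\tfrac12)^2/L^2$ and expanding the numerator as
\[
m^2\left(j+\tfrac12\right)^2 - (jm+r)^2 = mj(m-2r) + \tfrac{m^2}{4} - r^2
\]
leads to
\[
c_{jm+r} \xrightarrow[j\to\infty]{} \frac{\pi^2(m-2r)}{L^2}.
\]
Reparametrising by $s := m+1-r$, so that $s$ runs over $\{1,\ldots,m-1\}$ as $r$ runs over $\{2,\ldots,m\}$, this limit becomes $\frac{2\pi^2}{L^2}\bigl(s - 1 - \tfrac{m}{2}\bigr)$, recovering the second set in~\eqref{eq:ck-1}.

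It remains only to count the distinct elements of the resulting set. The $m-1$ limits coming from $r \in \{2,\ldots,m\}$ form an arithmetic progression of step $-2\pi^2/L^2$, hence are pairwise distinct. The isolated limit $-2\pi^2/L^2$ from $r=1$ agrees with one of them precisely when $m - 2r = -2$, i.e.\ $r = (m+2)/2$; this is an admissible integer in $\{2,\ldots,m\}$ exactly when $m$ is even. Consequently, the limit set has $m-1$ elements for even $m$ and $m$ elements for odd $m$, as claimed.

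I do not anticipate a serious obstacle: the argument is a bookkeeping exercise once the formulas of Lemma~\ref{lem:doptenergy-star} are in hand. The only minor care needed is in justifying that every accumulation point of $(c_k)$ arises as one of the $m$ progression-limits, but this is a standard consequence of the pigeonhole principle applied to the finite partition of $\mathbb{N}$ by residues mod $m$.
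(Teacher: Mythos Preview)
Your proof is correct and follows essentially the same approach as the paper: decompose $(c_k)$ into the $m$ residue-class subsequences $k_j = jm + r$, use Lemma~\ref{lem:doptenergy-star} to compute each limit, and then count coincidences. Your version is in fact slightly more complete, since you explicitly carry out the reparametrisation $s = m+1-r$ to match the form~\eqref{eq:ck-1} and justify via pigeonhole that no further accumulation points can arise, both of which the paper leaves implicit.
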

\begin{proof}
	The assertion immediately follows from Lemma \ref{lem:doptenergy-star} if one considers the subsequences \((c_{k_j})_{j\in\mathbb N_0}\) given by \(k_j:=jm+r\) for \(r=1,\ldots,m\) and \(j\in\mathbb N_0\). Indeed, for \(r=1\), we have
%	\begin{align*}
	\begin{displaymath}
		k_jc_{k_j}  =\frac{\pi^2m^2j^2}{L^2}-\frac{\pi^2k_j^2}{L^2}%\\
		 =\frac{\pi^2}{L^2}\left[(k_j-1)^2-k_j^2\right]%\\
		 =\frac{\pi^2}{L^2}(-2k_j+1)
	\end{displaymath}
%	\end{align*}
	and, thus, \(c_{k_j}\rightarrow -\frac{2\pi^2}{L^2}\) as \(k_j\rightarrow\infty\). For \(1<r\leq m\), we have
	\begin{align*}
		k_jc_{k_j} =\frac{\pi^2m^2(j+\frac{1}{2})^2}{L^2}-\frac{\pi^2k_j^2}{L^2}%\\
		& =\frac{\pi^2}{L^2}\left[\left(k_j+\frac{m}{2}-r\right)^2-k_j^2\right] \\
		& =\frac{\pi^2}{L^2}\left[2k_j\left(\frac{m}{2}-r\right)+\left(\frac{m}{2}-r\right)^2\right]
	\end{align*}
	and, thus, \(c_{k_j}\rightarrow \frac{\pi^2(m-2r)}{L^2}\) as \(k_j\rightarrow\infty\). Note that, if \(m\) is even, the limit point in the second case coincides with the one in the first case for \(r=\frac{m}{2}+1\).
\end{proof}

\begin{remark}
\label{rem:second-weyl}
Proposition~\ref{prop:ck-1} also shows that, if we write
\begin{displaymath}
	\mu_k (\mathcal{S}_m) = \frac{\pi^2 k^2}{L^2} + c_k k,
\end{displaymath}
then the set of points of accumulation of $(c_k)_{k\in\N}$ is exactly \eqref{eq:ck-1}. This is an immediate consequence of the equality $\doptenergy[k,\infty] (\mathcal{S}_m) = \mu_k (\mathcal{S}_m)$ for all $k \geq 1$, as shown in Lemma~\ref{lem:doptenergy-star}. In particular, we have an explicit example for the non-existence of a second term in the Weyl asymptotics for $\mu_k$.
\end{remark}

We now consider the case of natural partitions.

\begin{lemma}\label{lem:noptenergy-star}
	For \(j\in\mathbb N_0\) we have
	\begin{align}\label{eq:noptenergy-mstar}
		\noptenergy[jm+r,\infty](\mathcal{S}_m) =\begin{cases}\displaystyle\frac{\pi^2m^2(j+\frac{1}{2})^2}{L^2},& r=1,\ldots,\displaystyle\big\lfloor\frac{m}{2}\big\rfloor,\\[0.2cm]
		\displaystyle\frac{\pi^2m^2(j+1)^2}{L^2},& r=\displaystyle\big\lfloor\frac{m}{2}\big\rfloor+1,\ldots,m.
		\end{cases}
	\end{align}	
	
\end{lemma}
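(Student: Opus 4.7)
I would mirror the proof of Lemma~\ref{lem:doptenergy-star}: supply explicit test partitions to obtain the upper bounds, and use an isoperimetric-type counting argument for the lower bounds. By monotonicity of $\noptenergy[k,p]$ in $p$ it suffices to treat $p=\infty$.

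\textbf{Upper bound.} In the first range $1\le r\le \lfloor m/2\rfloor$, the hypothesis $m\ge 2r$ lets us partition the $m$ edge-ends at the central vertex into $r$ groups of sizes $s_1,\ldots,s_r\ge 2$ summing to $m$. In each group I would form an equilateral $s_i$-star cluster at the centre with every edge of length $L/(m(2j+1))$, and partition the remaining portion of each edge of $\mathcal S_m$ into $j$ sub-intervals of equal length $2L/(m(2j+1))$. A short computation shows that each of the $r$ star clusters and each of the $mj$ interval clusters (with two Neumann endpoints) has $\mu_2$ exactly equal to $\pi^2 m^2(j+\frac{1}{2})^2/L^2$, and the total number of clusters is $jm+r$. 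In the second range $\lfloor m/2\rfloor+1\le r\le m$, I would instead fully sever the centre, then split $r$ of the $m$ resulting free edges into $j+1$ sub-intervals of length $L/(m(j+1))$ and the remaining $m-r$ edges into $j$ sub-intervals of length $L/(mj)$, giving the claimed upper bound $\pi^2 m^2(j+1)^2/L^2$.

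\textbf{Lower bound.} I would argue by contradiction. Setting $\alpha:=2L/(m(2j+1))$ in the first range and $\alpha:=L/(m(j+1))$ in the second, suppose some rigid $(jm+r)$-partition $\parti$ has every cluster's $\mu_2$ strictly smaller than $\pi^2/\alpha^2$. The key structural lemma is that any $s$-star cluster with $s\ge 2$ and $\mu_2<\pi^2/\alpha^2$ must contain at least one edge of length exceeding $\alpha/2$; this follows from a case analysis of the secular equation $\sum_\ell\tan(kb^{(\ell)})=0$ (whose smallest positive root lies past $\pi/(2b_{\max})$) together with the antisymmetric eigenvalues $\pi^2/(4b^2)$ arising from any pair of edges of common length. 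Combining this with Nicaise's inequality for singleton centre clusters and interval clusters yields, in the first range, the per-group bound $\sum_{\ell\in \text{group }i}n_\ell\le s_ij-1$ on the number of tip-side interval clusters; summing over the $c$ centre-meeting groups gives a total cluster count of at most $mj$, which is less than $jm+r$ whenever $r\ge 1$, a contradiction. In the second range only the singleton groups give the tighter estimate $n_\ell\le j-1$, so the total cluster count is at most $mj+(c-c_1)$, where $c_1$ counts singleton groups; combining the forced inequality $c-c_1\ge r$ with the structural constraint $c_1\ge 2c-m$ yields $r\le m-r$, which contradicts $r>\lfloor m/2\rfloor$.

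\textbf{Main obstacle.} The most delicate step is the structural lemma for non-equilateral star clusters: the condition $\mu_2(\mathcal{H})<\pi^2/\alpha^2$ must be converted into the pointwise existence of an edge longer than $\alpha/2$, even though $\mu_2$ depends on the full tuple of edge lengths. Its proof requires a careful analysis of the roots of $\sum_\ell\tan(kb^{(\ell)})=0$, subdivided according to the pattern of equalities among the $b^{(\ell)}$; everything else in the argument is then routine counting.
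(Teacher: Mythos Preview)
Your proposal is correct and follows essentially the same strategy as the paper: explicit test partitions for the upper bound and the star bound $\mu_2\ge \pi^2/(4\ell_{\max}^2)$ for the lower bound. Two remarks are worth making. First, what you single out as the ``main obstacle'' --- your structural lemma that a star with $\mu_2<\pi^2/\alpha^2$ must have an arm longer than $\alpha/2$ --- is nothing other than the contrapositive of $\mu_2(\text{star})\ge \pi^2/(4\ell_{\max}^2)$, which the paper simply cites from \cite[Lemma~3.3]{AmiCoh18}; the elementary proof is that on $(0,\pi/(2b_{\max}))$ every term $\tan(kb_\ell)$ is positive so the secular sum has no root there, and any antisymmetric eigenvalue also needs $\cos(kb_\ell)=0$ for some $\ell$, hence $k\ge \pi/(2b_{\max})$. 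No case subdivision by equality pattern is required. Second, your lower-bound bookkeeping is organised as a direct contradiction-and-counting (each centre group contributes $\sum_{\ell\in S_i}n_\ell\le s_ij-1$ in the first range, yielding total $\le mj$), whereas the paper first equalises the interval clusters on each edge and then argues by dichotomy on whether some $\ell_i$ is short; the two are equivalent, and your version has the mild advantage of avoiding the paper's slightly delicate step ``$k_i\ell_i>j\alpha$''. Your upper-bound construction, grouping all $m$ edge-ends into $r$ star clusters with $s_i\ge 2$, is a valid variant of the paper's construction (which takes $s_i=2$ for $r$ pairs and splits the remaining $m-2r$ edges separately).
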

\begin{proof}
	We set \(k=jm+r\). We first show that \(\nenergy[\infty]\) is indeed bounded from below by the terms appearing on the right-hand-side of \eqref{eq:noptenergy-mstar} respectively. For an arbitrary \(k\)-partition  \(\mathcal P\) of \(\mathcal{S}_m\), let \(\mathcal P'\) denote the set of clusters in \(\mathcal P\) that intersect at least two edges of \(\mathcal{S}_m\) and, for each edge \(\me_i\) of \(\mathcal{S}_m\), let \(\mathcal P_i\) denote the set of clusters in \(\mathcal P\) that only intersect \(\me_i\). Furthermore, let \(k'=|\mathcal P'|\) and \(k_i:=|\mathcal P_i|\). By choice of \(k'\) and \(k_i\), we have \(k=k'+\sum_{i=1}^mk_i\) and \(k'\leq \frac{m}{2}\), where the latter holds, since each edge of \(\mathcal{S}_m\) intersects at most one of the clusters in \(\mathcal P'\). All clusters in \(\mathcal P_i\) are intervals, so we may assume that each element of \(\mathcal P_i\) has the same length \(\ell_i\). (Note that we only decrease \(\nenergy[\infty]\) if we adjust the length of the single intervals, so that all off them have the same length.) In particular, we have \(\mu_2(\mathcal G_i)=\frac{\pi^2}{\ell_i^2}\) for all \(\mathcal G_i\in\mathcal P_i\).
	
	Now, let us first consider the case \(1\leq r\leq \frac{m}{2}\). Without loss of generality, we may assume that \(\ell_i>\frac{L}{m(j+\frac{1}{2})}\) holds for \(i=1,\ldots,m\) -- otherwise, \(\nenergy[\infty](\mathcal P)\geq \frac{\pi^2m^2(j+\frac{1}{2})^2}{L^2}\) would obviously be satisfied. We obtain
		\[\frac{L}{m}\geq\sum_{\mathcal G_i\in\mathcal P_i}|\mathcal G_i|=k_i\ell_i>\frac{k_iL}{m(j+\frac{1}{2})}\]
and, thus, \(k_i\leq j\) for \(i=1,\ldots m\). This, in turn, implies
	\[k'=k-\sum_{i=1}^mk_i\geq jm+r-jm=r\geq 1,\]
i.e. \(\mathcal P'\) is non-empty. We consider an arbitrary element \(\mathcal G'\in \mathcal P'\). For \(i=1,\ldots,m\) with \(|\me_i\cap\mathcal G'|>0\) we have
	\[|\me_i\cap\mathcal G'|=\frac{L}{m}-k_i\ell_i<\frac{L}{m}-\frac{jL}{m(j+\frac{1}{2})}=\frac{L}{2m(j+\frac{1}{2})}.\]
Thus, \(\mathcal G'\) is a metric star whose maximum length \(\ell_{\mathrm{max}}(\mathcal G')\) is bounded from above by \(\frac{L}{2m(j+\frac{1}{2})}\). We obtain
	\[\nenergy[\infty](\parti)\geq \mu_2(\mathcal G')\geq \frac{\pi^2}{4\ell_\mathrm{max}(\mathcal G')^2}>\frac{\pi^2m^2(j+\frac{1}{2})^2}{L^2},\]
where the second step follows from \cite[Lemma 3.3]{AmiCoh18}.\\
Next, we consider the case \(\frac{m}{2}<r\leq m\). First note that \(\nenergy[\infty](\mathcal P)\geq \frac{\pi^2m^2(j+1)^2}{L^2}\) is obviously satisfied if \(\ell_i\leq \frac{L}{m(j+1)}\) holds. On the other hand, the case \(\ell_i>\frac{L}{m(j+1)}\) for all \(i\) does not occure, since then following the argumentation of the first case yields \(k'\geq r >\frac{m}{2}\), which is a contradiction to \(k'\leq \frac{m}{2}\), as we stated in the the beginning of the proof.

Altogether, we have seen that \(\noptenergy[k,\infty](\mathcal{S}_m)\) is indeed bounded from below by the terms appearing on the right-hand-side. To show equality, we simply present \(k\)-partitions with Neumann energy equal to the right-hand-side -- obviously, these partitions are spectral minimal partitions. In the case \(1\leq r<\frac{m}{2}\), we make a choice of \(r\) pairs of edges and consider their respective unions \(\me_1\cup\me_2,\ldots, \me_{2r-1}\cup\me_{2r}\); each of these unions is an Eulerian path in \(\mathcal{S}_m\). Now let \(\mathcal P\) be the partition where each of these unions is decomposed into \(2j+1\) intervals of equal length \(\frac{L}{m(j+\frac{1}{2})}\) and every other edge \(\me_i,~i>2r\) is decomposed into \(j\) intervals of length \(\frac{L}{mj}\) (see the decomposition on the left in Figure \ref{fig:noptenergy-3star}). This partition has Neumann energy \(\nenergy[\infty](\mathcal P)=\frac{\pi^2m^2(j+\frac{1}{2})^2}{L^2}\). In the case \(\frac{m}{2}<r\leq m\), we consider the \(jm+r\)-partition that decomposes the first \(r\) edges into \(j+1\) intervals of length \(\frac{L}{m(j+1)}\) and the latter \(m-r\) edges into \(j\) intervals of length \(\frac{L}{mj}\) (see the two decompositions on the right in Figure \ref{fig:noptenergy-3star}). Again, this partition has the desired Neumann energy.
\end{proof}
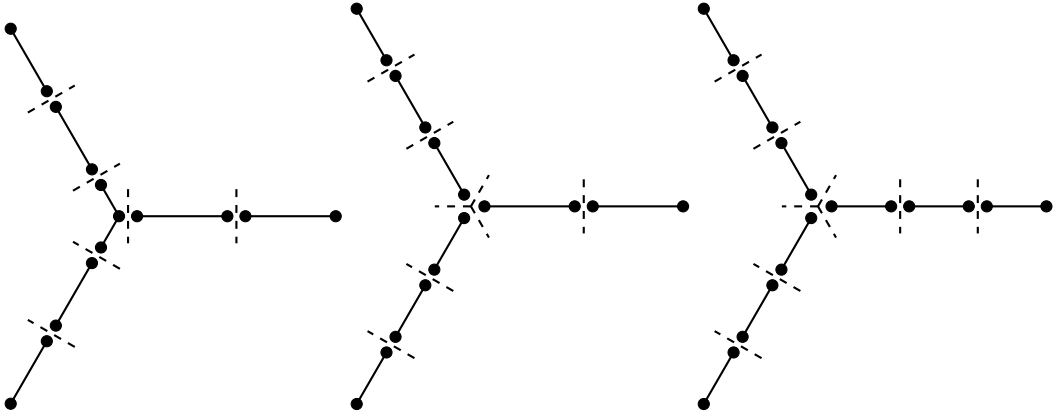
\begin{figure}[H]
\centering

\begin{tikzpicture}[scale=1.2]
\coordinate (c) at (0,0);
\foreach \i in {1,2} {
\coordinate (v\i) at (120*\i:0.4);
\coordinate (u\i) at (120*\i:0.6);
\coordinate (w\i) at (120*\i:1.4);
\coordinate (x\i) at (120*\i:1.6);
\coordinate (y\i) at (120*\i:2.4);
%\coordinate (h\i) at ({60+120*\i}:.4);
\coordinate (f\i) at ($(120*\i:.5)+({120*\i+90}:.3)$);
\coordinate (g\i) at ($(120*\i:.5)+({120*\i-90}:.3)$);
\coordinate (d\i) at ($(120*\i:1.5)+({120*\i+90}:.3)$);
\coordinate (e\i) at ($(120*\i:1.5)+({120*\i-90}:.3)$);
\draw[thick] (c) -- (v\i) (u\i)--(w\i) (x\i)--(y\i);
\draw[dashed, thick] (f\i) -- (g\i) (d\i) -- (e\i);
\draw[fill] (y\i) circle (1.75pt);
\draw[fill] (v\i) circle (1.75pt) (u\i) circle (1.75pt) (w\i) circle (1.75pt) (x\i) circle (1.75pt);}
\foreach \i in {3} {
\coordinate (u\i) at (120*\i:0.2);
\coordinate (w\i) at (120*\i:1.2);
\coordinate (x\i) at (120*\i:1.4);
\coordinate (y\i) at (120*\i:2.4);
%\coordinate (h\i) at ({60+120*\i}:.4);
\coordinate (f\i) at ($(120*\i:.1)+({120*\i+90}:.3)$);
\coordinate (g\i) at ($(120*\i:.1)+({120*\i-90}:.3)$);
\coordinate (d\i) at ($(120*\i:1.3)+({120*\i+90}:.3)$);
\coordinate (e\i) at ($(120*\i:1.3)+({120*\i-90}:.3)$);
\draw[thick] (u\i)--(w\i) (x\i)--(y\i);
\draw[dashed, thick] (f\i) -- (g\i) (d\i) -- (e\i);
\draw[fill] (y\i) circle (1.75pt);
\draw[fill] (u\i) circle (1.75pt) (w\i) circle (1.75pt) (x\i) circle (1.75pt);}
\draw[fill] (c) circle (1.75pt);
\end{tikzpicture}
\begin{tikzpicture}[scale=1.2]
\coordinate (c) at (0,0);
\foreach \i in {1,2} {
\coordinate (v\i) at (120*\i:.15);
\coordinate (w\i) at (120*\i:.81);
\coordinate (u\i) at (120*\i:1.01);
\coordinate (x\i) at (120*\i:1.67);
\coordinate (y\i) at (120*\i:1.87);
\coordinate (z\i) at (120*\i:2.53);
\coordinate (h\i) at ({60+120*\i}:.4);
\coordinate (f\i) at ($(120*\i:.91)+({120*\i+90}:.3)$);
\coordinate (g\i) at ($(120*\i:.91)+({120*\i-90}:.3)$);
\coordinate (d\i) at ($(120*\i:1.77)+({120*\i+90}:.3)$);
\coordinate (e\i) at ($(120*\i:1.77)+({120*\i-90}:.3)$);
\draw[thick] (v\i) -- (w\i) (u\i) -- (x\i) (y\i) -- (z\i);
\draw[dashed, thick] (f\i) -- (g\i) (d\i) -- (e\i);
\draw[thick,dashed] (c) -- (h\i);
\draw[fill] (z\i) circle (1.75pt);
\draw[fill] (v\i) circle (1.75pt) (u\i) circle (1.75pt) (w\i) circle (1.75pt) (x\i) circle (1.75pt) (y\i) circle (1.75pt);
}
\foreach \i in {3} {
\coordinate (v\i) at (120*\i:.15);
\coordinate (w\i) at (120*\i:1.15);
\coordinate (u\i) at (120*\i:1.35);
\coordinate (x\i) at (120*\i:2.35);
%\coordinate (y\i) at (120*\i:2.15);
%\coordinate (z\i) at (120*\i:2.55);
\coordinate (h\i) at ({60+120*\i}:.4);
\coordinate (f\i) at ($(120*\i:1.25)+({120*\i+90}:.3)$);
\coordinate (g\i) at ($(120*\i:1.25)+({120*\i-90}:.3)$);
%\coordinate (d\i) at ($(120*\i:2.05)+({120*\i+90}:.3)$);
%\coordinate (e\i) at ($(120*\i:2.05)+({120*\i-90}:.3)$);
\draw[thick] (v\i) -- (w\i) (u\i) -- (x\i);
\draw[dashed, thick] (f\i) -- (g\i);
\draw[thick,dashed] (c) -- (h\i);
\draw[fill] (x\i) circle (1.75pt);
\draw[fill] (v\i) circle (1.75pt) (u\i) circle (1.75pt) (w\i) circle (1.75pt);
}
\end{tikzpicture}
\begin{tikzpicture}[scale=1.2]
\coordinate (c) at (0,0);
\foreach \i in {1,2,3} {
\coordinate (v\i) at (120*\i:.15);
\coordinate (w\i) at (120*\i:.81);
\coordinate (u\i) at (120*\i:1.01);
\coordinate (x\i) at (120*\i:1.67);
\coordinate (y\i) at (120*\i:1.87);
\coordinate (z\i) at (120*\i:2.53);
\coordinate (h\i) at ({60+120*\i}:.4);
\coordinate (f\i) at ($(120*\i:.91)+({120*\i+90}:.3)$);
\coordinate (g\i) at ($(120*\i:.91)+({120*\i-90}:.3)$);
\coordinate (d\i) at ($(120*\i:1.77)+({120*\i+90}:.3)$);
\coordinate (e\i) at ($(120*\i:1.77)+({120*\i-90}:.3)$);
\draw[thick] (v\i) -- (w\i) (u\i) -- (x\i) (y\i) -- (z\i);
\draw[dashed, thick] (f\i) -- (g\i) (d\i) -- (e\i);
\draw[thick,dashed] (c) -- (h\i);
\draw[fill] (z\i) circle (1.75pt);
\draw[fill] (v\i) circle (1.75pt) (u\i) circle (1.75pt) (w\i) circle (1.75pt) (x\i) circle (1.75pt) (y\i) circle (1.75pt);
}
\end{tikzpicture}
\vspace{.5em}
\caption{The optimal \(7\)-, \(8\)- and \(9\)-partitions of the \(3\)-star in the proof of Lemma \ref{lem:noptenergy-star}.}\label{fig:noptenergy-3star}
\end{figure}
\begin{remark}
	Note that the spectral minimal partitions in the proof of Lemma \ref{lem:noptenergy-star} are not unique. For example, another optimal \(jm+1\)-partition -- whose topology differs from the one presented in the proof -- is obtained by decomposing \(\mathcal S_m\) into one equilateral \(m\)-star of total length \(\frac{L}{2j+1}\) and \(jm\) intervals of length \(\frac{L}{m(j+\frac{1}{2})}\) (see Figure \ref{fig:alternative-noptenergy-3star}). In fact, this choice seems to be more natural, since each cluster has the same Neumann energy.
\end{remark}
\begin{figure}[ht]
\centering

\begin{tikzpicture}[scale=1.2]
\coordinate (c) at (0,0);
\foreach \i in {1,2,3} {
\coordinate (v\i) at (120*\i:0.4);
\coordinate (u\i) at (120*\i:0.6);
\coordinate (w\i) at (120*\i:1.4);
\coordinate (x\i) at (120*\i:1.6);
\coordinate (y\i) at (120*\i:2.4);
%\coordinate (h\i) at ({60+120*\i}:.4);
\coordinate (f\i) at ($(120*\i:.5)+({120*\i+90}:.3)$);
\coordinate (g\i) at ($(120*\i:.5)+({120*\i-90}:.3)$);
\coordinate (d\i) at ($(120*\i:1.5)+({120*\i+90}:.3)$);
\coordinate (e\i) at ($(120*\i:1.5)+({120*\i-90}:.3)$);
\draw[thick] (c) -- (v\i) (u\i)--(w\i) (x\i)--(y\i);
\draw[dashed, thick] (f\i) -- (g\i) (d\i) -- (e\i);
\draw[fill] (y\i) circle (1.75pt);
\draw[fill] (v\i) circle (1.75pt) (u\i) circle (1.75pt) (w\i) circle (1.75pt) (x\i) circle (1.75pt);}
\end{tikzpicture}

\vspace{.5em}
\caption{A different optimal \(7\)-partitions of the \(3\)-star.}\label{fig:alternative-noptenergy-3star}
\end{figure}
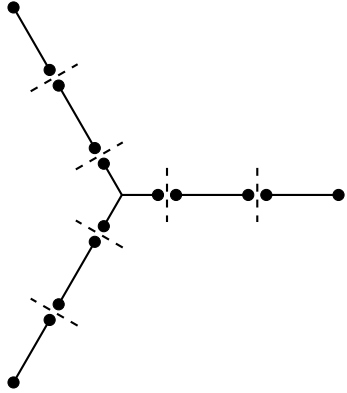
\begin{remark}
	The \(m\)-star \(\mathcal{S}_m\) can be covered with \(\frac{m}{2}\) Eulerian paths, if \(m\) is even, and \(\frac{m+1}{2}\) Eulerian paths, if \(m\) is odd. Therefore, Theorem \ref{thm:nopt-upper-bound} yields the upper bounds
		\[\noptenergy[k,\infty](\mathcal{S}_m)\leq\begin{cases}
			\displaystyle\frac{\pi^2(k+\frac{m}{2}-1)^2}{L^2}, & \text{if } m \text{ is even,}\\[.2cm]
			\displaystyle\frac{\pi^2(k+\frac{m+1}{2}-1)^2}{L^2}, & \text{if } m \text{ is odd.}
		\end{cases}\]
	Lemma \ref{lem:noptenergy-star} shows that these bounds are actually sharp if \(m\) is even and \(k=mj+1\), or \(m\) is odd and \(k=mj+\frac{m+1}{2}\) for \(j\in\N_0\) respectively.
\end{remark}
\begin{proposition}
	The limit set of the sequence \((c_k)_{k\in\mathbb N}\) with
			\[c_k:=\frac{\noptenergy[k,\infty](\mathcal{S}_m)-\frac{\pi^2k^2}{L^2}}{k},\quad k\in\mathbb N,\]
	is
		\[\{0\}\cup\left\{\frac{2\pi^2s}{L^2}~\big|~s=1,\ldots,\frac{m}{2}\right\},\]
	if \(m\) is even, and
		\[\{0\}\cup\left\{\frac{2\pi^2s}{L^2}~\big|~s=1,\ldots,\frac{m-1}{2}\right\}\cup\left\{\frac{2\pi^2(t-\frac{1}{2})}{L^2}~\big|~t=1,\ldots,\frac{m-1}{2}\right\}\]
	if \(m\) is odd.
	In particular, \((c_k)_{k\in\mathbb N}\) has \(\frac{m}{2}\) limit points if \(m\) is even and \(m\) limit points if \(m\) is odd.
\end{proposition}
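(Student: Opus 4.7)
The plan is to mirror the proof of Proposition~\ref{prop:ck-1}: decompose $\N$ into the $m$ arithmetic progressions $k_j^{(r)} := mj + r$ for $r\in\{1,\dots,m\}$ and $j\in\N_0$, compute the limit of $c_{k_j^{(r)}}$ as $j\to\infty$ using the explicit formula from Lemma~\ref{lem:noptenergy-star}, and observe that the set of accumulation points of $(c_k)_{k\in\N}$ is exactly the (finite) union of these $m$ subsequence limits.

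The key calculation is to rewrite, in both cases of Lemma~\ref{lem:noptenergy-star}, the energy in the form
\[
\noptenergy[k_j^{(r)},\infty](\mathcal S_m) = \frac{\pi^2}{L^2}\bigl(k_j^{(r)} + a_r\bigr)^2,
\]
where $a_r$ is the $j$-independent shift
\[
a_r = \tfrac{m}{2} - r \quad\text{if } 1 \le r \le \lfloor\tfrac{m}{2}\rfloor, \qquad a_r = m - r \quad\text{if } \lfloor\tfrac{m}{2}\rfloor < r \le m.
\]
Expanding the square gives $k_j^{(r)} c_{k_j^{(r)}} = \frac{\pi^2}{L^2}\bigl(2 a_r k_j^{(r)} + a_r^2\bigr)$, so that $c_{k_j^{(r)}} \to \frac{2\pi^2 a_r}{L^2}$ as $j\to\infty$. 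In particular every accumulation point of $(c_k)$ must lie in the set $\bigl\{\frac{2\pi^2 a_r}{L^2} : r=1,\dots,m\bigr\}$, and conversely every element of this set is an accumulation point.

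It then remains to enumerate and count these values by parity of $m$. If $m$ is even, the first range of $r$ produces $a_r \in \{0,1,\dots,\tfrac{m}{2}-1\}$ and the second range produces exactly the same set of integers, so after removing duplicates we obtain $\tfrac{m}{2}$ distinct limit points. If $m$ is odd, the first range produces the half-integers $\bigl\{\tfrac{1}{2},\tfrac{3}{2},\dots,\tfrac{m-2}{2}\bigr\}$ (coming from $r=1,\dots,\tfrac{m-1}{2}$), while the second range produces the integers $\bigl\{0,1,\dots,\tfrac{m-1}{2}\bigr\}$ (coming from $r=\tfrac{m+1}{2},\dots,m$). These two families are disjoint by parity, yielding $\tfrac{m-1}{2}+\tfrac{m+1}{2}=m$ distinct limit points.

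The only genuine subtlety is the even case, where both branches of Lemma~\ref{lem:noptenergy-star} contribute the same integer shifts and must not be double-counted; after this observation the argument is entirely routine. (Note: the count forces the even-$m$ limit set to be $\{\frac{2\pi^2 s}{L^2}:s=0,1,\dots,\tfrac{m}{2}-1\}$, which is what one obtains from the method above; the upper endpoint $s=\tfrac{m}{2}$ in the stated formula appears to be a typographical slip, the correct upper index being $\tfrac{m}{2}-1$ so as to match the claim of $\tfrac{m}{2}$ limit points.)
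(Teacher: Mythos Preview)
Your argument is essentially identical to the paper's own proof: the paper also splits into the subsequences $k_j=jm+r$, $r=1,\dots,m$, cites the analogous calculation from Proposition~\ref{prop:ck-1} to obtain $c_{k_j}\to\frac{\pi^2(m-2r)}{L^2}$ for $r\le\lfloor m/2\rfloor$ and $c_{k_j}\to\frac{\pi^2(2m-2r)}{L^2}$ for $r>\lfloor m/2\rfloor$ (exactly your $\frac{2\pi^2 a_r}{L^2}$), and then notes that for even $m$ the two families coincide via $r\mapsto r+\tfrac{m}{2}$. Your observation about the typographical slip in the even case is correct: the $m/2$ limit points are $\{0\}\cup\{\frac{2\pi^2 s}{L^2}:s=1,\dots,\tfrac{m}{2}-1\}$, not $s=1,\dots,\tfrac{m}{2}$ as printed, since the latter set together with $\{0\}$ would contain $\tfrac{m}{2}+1$ elements.
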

\begin{proof}
	This immediately follows from Lemma \ref{lem:doptenergy-star} if one considers the subsequences \((c_{k_j})_{j\in\mathbb N_0}\) given by \(k_j:=jm+r\) for \(r=1,\ldots,m\) and \(j\in\mathbb N_0\). Indeed, calculations entirely analogous to the ones in the proof of Proposition~\ref{prop:ck-1} show that \(c_{k_j}\rightarrow \frac{\pi^2(m-2r)}{L^2}\) as \(k_j\rightarrow\infty\) for \(r=1,\ldots,\big\lfloor\frac{m}{2}\big\rfloor\), while \(c_{k_j}\rightarrow \frac{\pi^2(2m-2r)}{L^2}\) for \(r=\big\lfloor\frac{m}{2}\big\rfloor+1,\ldots,m\). Finally, we remark that if \(m\) is even, then the limit points in the two cases coincide (replace \(r\) with \(r+\frac{m}{2}\)), whereas they are distinct if \(m\) is odd.
\end{proof}

\subsection{Two disjoint intervals with rationally independent lengths}
\label{sec:irrational}

Let $\mathcal G_a= I_1 \sqcup I_a$ be the disjoint union of the intervals
\[
I_1:= [0,1], \qquad I_a:=[0,a]
\]
for some $a>0$.  Recall that
\begin{equation}\label{eq:estimate-rational-ind}
\frac{\pi^2}{(a+1)^2} k^2 \le \noptenergy[k,\infty] (\mathcal G_a) \le \frac{\pi^2}{(a+1)^2}k^2 + \frac{2\pi^2}{(a+1)^2} k + \frac{\pi^2}{(a+1)^2}
\end{equation}
holds for $k\ge 2$ by Theorem~\ref{thm:nopt-upper-bound}. As before, we are interested in %we mentioned before, we would like to study the first order term in the asymptotic expansion of $\noptenergy[k,\infty] (\mathcal G_a)$ as \(k\rightarrow \infty\), i.e.\,we want to study 
the set of points of accumulation of the sequence \((c_k)_{k\geq 2}\) given by
\begin{equation}\label{eq:defck}
c_k=\frac{\noptenergy[k,\infty](\mathcal G_a) - \frac{\pi^2 k^2}{(a+1)^2}}{k}, \quad k\geq 2.
\end{equation}
First note that we have
\begin{equation}\label{eq:bounds-ck-rational-ind}
	0\leq c_k\leq \frac{2\pi^2}{(a+1)^2}
\end{equation}
for \(k\geq 2\) by \eqref{eq:estimate-rational-ind}. In fact, we will see that the limit set of \((c_k)_{k\geq 2}\) is the whole interval $[0, \frac{2\pi^2}{(a+1)^2}]$ if \(a\) is irrational. In order to show this, let us first compute the minimal energy \(\noptenergy[k,\infty](\mathcal G_a)\) for \(k\geq 2\).
Of course, for given $i\in \{1,\ldots , k-1\}$, an optimal $k$-partition of the form $\parti= (\Graph_1, \ldots, \Graph_i, \Graph_{i+1}, \ldots, \Graph_k)$  for \(\nenergy[\infty]\) with
\begin{align*}
\Graph_1, \ldots, \Graph_i& \subset I_1, & \Graph_{i+1}, \ldots, \Graph_{k}\subset I_a.
\end{align*}
is obtained by taking each cluster in \(I_1\) of equal length \(\frac{1}{i}\) and each cluster in \(I_a\) of equal length \(\frac{a}{k-i}\), that is,
\begin{equation}\label{eq:noptenergyequival}
\noptenergy[k,\infty](\mathcal G_a) = \min_{1\leq i\leq k-1} \max\left \{\pi^2 i^2, \frac{\pi^2(k-i)^2}{a^2}\right \}.
\end{equation}
Let us further investigate \eqref{eq:noptenergyequival}. One easily sees that
\begin{equation}
\max\left \{\pi^2 i^2, \dfrac{\pi^2 (k-i)^2}{a^2}\right \} = \begin{cases}
\dfrac{\pi^2 (k-i)^2}{a^2},&\quad i \le \left \lfloor \dfrac{k}{a+1}\right \rfloor \\ \ \\
\pi^2 i^2, &\quad i \ge \left \lceil \dfrac{k}{a+1}\right \rceil. 
\end{cases}
\end{equation}
In particular, we have
\begin{equation}\label{eq:noptenergyexform}
\begin{split}
\noptenergy[k,\infty](\mathcal G_a) &=\min_{1\leq i\leq k-1} \max\left \{\pi^2 i^2, \frac{\pi^2(k-1)^2}{a^2}\right \}\\
&= \min \left \{ \min_{1\le i\le \lfloor \frac{k}{a+1}\rfloor} \frac{\pi^2 (k-i)^2}{a^2}, \min_{\lceil \frac{k}{a+1}\rceil\leq i\leq k-1 } \pi^2 i^2\right \}\\
&= \min\left \{ \frac{\pi^2 \lceil \frac{a}{a+1} k\rceil^2}{a^2}, \pi^2 \left ( \left \lceil \frac{k}{a+1}\right \rceil \right )^2 \right \} .
\end{split}
\end{equation}
We can treat the asymptotics via study of the orbit of the rotation map $T_{\alpha}: \mathbb R/\mathbb Z \to [0,1)$, which is defined via
\begin{equation}
T_{\alpha} x= x+ \alpha \, \operatorname{mod} 1.
\end{equation}
It is a well-known fact that the orbits of the map $T_{\alpha}$ are dense in \([0,1]\) if and only if $\alpha \in \mathbb R \setminus \mathbb Q$ (see \cite[Theorem~3.13]{Dev89}). 
\begin{theorem}\label{thm:exampleintheend}
Let $c_k$, $k \geq 2$, be defined as in \eqref{eq:defck}. If $a\in \mathbb Q$, then $(c_k)_{k\geq 2}$ has a finite limit set; if $a\in \mathbb R\setminus \mathbb Q$, then the limit set of $(c_k)_{k\geq 2}$ is the whole interval $[0, \frac{2\pi^2}{(a+1)^2}]$.
\end{theorem}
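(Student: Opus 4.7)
The plan is to rewrite the explicit formula \eqref{eq:noptenergyexform} in terms of the rotation orbit $(\{k\alpha\})_{k\ge 1}$, where $\alpha := 1/(a+1) \in (0,1)$, and then read off the two cases from the classical dichotomy for such orbits. Note that $a \in \mathbb{Q}$ if and only if $\alpha \in \mathbb{Q}$, so the two regimes of the theorem correspond exactly.

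Using $\tfrac{a}{a+1} = 1-\alpha$ and $\tfrac{1}{a^2} = \tfrac{\alpha^2}{(1-\alpha)^2}$, set $u_k := \{k\alpha\} \in [0,1)$ and record the elementary identities $\lceil k\alpha \rceil = k\alpha + (1-u_k)$ and $\lceil k(1-\alpha) \rceil = k(1-\alpha) + u_k$, valid whenever $u_k > 0$. These recast \eqref{eq:noptenergyexform} as
\[
\noptenergy[k,\infty](\mathcal G_a) = \pi^2\alpha^2 \min\!\bigl\{(k + X_k)^2,\; (k + Y_k)^2\bigr\},
\]
with $X_k := u_k/(1-\alpha)$ and $Y_k := (1-u_k)/\alpha$; the boundary case $u_k = 0$ (i.e.\ $k\alpha \in \mathbb{Z}$) gives $\noptenergy[k,\infty](\mathcal G_a) = \pi^2\alpha^2 k^2$, consistent with setting $X_k = Y_k = 0$ there. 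Since $X_k, Y_k$ are uniformly bounded, substituting into \eqref{eq:defck} and using $(k+Z)^2-k^2 = 2kZ + Z^2$ produces
\[
c_k = 2\pi^2\alpha^2\, g(u_k) + \mathcal O(k^{-1}) \qquad \text{as } k\to\infty,
\]
where $g:[0,1]\to[0,1]$ is the continuous piecewise-linear tent function $g(u) := \min\{u/(1-\alpha),\,(1-u)/\alpha\}$, which attains its maximum $g(1-\alpha)=1$ and vanishes at both endpoints; in particular $g$ is surjective onto $[0,1]$.

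The conclusion then splits cleanly along the dichotomy for $(u_k)$. If $\alpha = p/q$ is rational (in lowest terms), then $(u_k)$ is $q$-periodic and takes at most $q$ distinct values in $\{0, 1/q, \ldots, (q-1)/q\}$, so $(g(u_k))$ and hence $(c_k)$ both have finite accumulation sets. If instead $\alpha$ is irrational, the theorem on the denseness of orbits of $T_\alpha$ quoted in the preamble to this section implies that $(u_k)$ is dense in $[0,1]$; by continuity of $g$ together with $g([0,1]) = [0,1]$, the sequence $(g(u_k))$ is dense in $[0,1]$, so the $\mathcal O(k^{-1})$-approximation above forces $(c_k)$ to accumulate densely on $[0, 2\pi^2\alpha^2] = [0, 2\pi^2/(a+1)^2]$. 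Combined with the a priori upper bound \eqref{eq:bounds-ck-rational-ind} and the fact that limit sets are closed, the accumulation set is exactly this interval.

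No step poses a serious obstacle: the only real computation is the elementary re-expression of the ceilings in terms of $u_k$, and the only external input is the classical periodicity/equidistribution dichotomy for circle rotations, already cited in the text.
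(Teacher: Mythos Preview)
Your proof is correct and follows essentially the same approach as the paper's: both rewrite $c_k$ in terms of the orbit of the circle rotation by $1/(a+1)$ and then invoke the periodicity/density dichotomy. Your packaging via the single variable $u_k=\{k\alpha\}$ and the tent function $g$ is slightly more compact than the paper's version, which keeps the two rotations $T_{1/(a+1)}$ and $T_{a/(a+1)}$ separate before using $T_{1/(a+1)}^k(0)+T_{a/(a+1)}^k(0)=1$ to reduce to one, but the underlying computation and argument are identical.
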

\begin{proof}
Due to \eqref{eq:noptenergyexform}, we have
\begin{equation}\label{eq:ckrecharact}
c_k = \min\left \{ \dfrac{\pi^2 \left (\left \lceil \frac{k}{a+1} \right \rceil^2 - \frac{k^2}{(a+1)^2}\right )  }{k},  \dfrac{\pi^2 \left (\frac{1}{a}\left \lceil \frac{ak}{a+1} \right \rceil^2 - \frac{k^2}{(a+1)^2}\right )  }{k} \right \}.
\end{equation}
We compute
\begin{equation}\label{eq:longcalculationck}
\begin{split}
\dfrac{\pi^2 \left (\left \lceil \frac{k}{a+1} \right \rceil^2 - \frac{k^2}{(a+1)^2}\right )  }{k}&= \frac{\pi^2}{k} \left ( \left \lceil \frac{k}{a+1} \right \rceil - \frac{k}{a+1}\right ) \left ( \left \lceil \frac{k}{a+1} \right \rceil + \frac{k}{a+1} \right )\\
&= \left ( \left \lceil \frac{k}{a+1} \right \rceil - \frac{k}{a+1}\right ) \left ( \frac{2\pi^2}{a+1}+ \frac{\pi^2}{k} \left (\left \lceil \frac{k}{a+1} \right \rceil - \frac{k}{a+1}\right ) \right ) \\
&= T_{\frac{a}{a+1}}^k(0) \left ( \frac{2\pi^2}{a+1} + \frac{\pi^2}{k} T_{\frac{a}{a+1}}^k(0) \right )\\
&= \frac{2\pi^2}{a+1} T^k_{\frac{a}{a+1}}(0) + o(1) \, \text{as } k \to \infty
%&= \frac{k}{(a+1)^2} \, \operatorname{mod} 1,
\end{split}
\end{equation}
and
\begin{equation}\label{eq:longcalculationck2}
\begin{split}
&\dfrac{\pi^2 \left (\frac{1}{a}\left \lceil \frac{ak}{a+1} \right \rceil^2 - \frac{k^2}{(a+1)^2}\right )  }{k}= \frac{\pi^2}{a^2 k} \left ( \left \lceil \frac{ak}{a+1} \right \rceil - \frac{ak}{a+1}\right ) \left ( \left \lceil \frac{ak}{a+1} \right \rceil + \frac{ak}{a+1} \right )\\
&\qquad \qquad\qquad = \left ( \left \lceil \frac{ak}{a+1} \right \rceil - \frac{ak}{a+1}\right ) \left ( \frac{2\pi^2}{a(a+1)}+ \frac{\pi^2}{a^2 k} \left (\left \lceil \frac{k}{a+1} \right \rceil - \frac{k}{a+1}\right ) \right ) \\
&\qquad \qquad\qquad = T_{\frac{1}{a+1}}^k(0) \left ( \frac{2\pi^2}{a(a+1)} + \frac{\pi^2}{a^2k} T_{\frac{1}{a+1}}^k(0) \right )\\
&\qquad \qquad\qquad= \frac{2\pi^2}{a(a+1)} T^{k}_{\frac{1}{a+1}}(0)+ o(1) \, \text{as } k \to \infty.
%&= \frac{k}{(a+1)^2} \, \operatorname{mod} 1.
\end{split}
\end{equation}
Since the orbits of $T_{\frac{1}{a+1}}$ and $T_{\frac{a}{a+1}}$ are periodic if and only if $a\in \mathbb Q$, we deduce that $a$ has a finite limit set if and only if $a\in \mathbb Q$.  Suppose $a\in \mathbb R\setminus \mathbb Q$, then
\begin{equation}
\begin{split}
T_{\frac{1}{a+1}}^k(0)+ T_{\frac{a}{a+1}}^k(0)&=\frac{k}{a+1} - \left \lfloor \frac{k}{a+1} \right \rfloor + \frac{ak}{a+1} - \left \lfloor \frac{ak}{a+1}\right \rfloor \\
&= k- \left \lfloor \frac{k}{a+1} \right \rfloor  - \left \lfloor \frac{ak}{a+1}\right \rfloor= \left \lceil \frac{ak}{a+1} \right \rceil - \left \lfloor \frac{ak}{a+1}\right \rfloor=1.
\end{split}
\end{equation}
Let $x\in[0,1]$. Suppose $(k_n)_{n\in\mathbb N}$ is a strictly increasing sequence with
\begin{equation}
\lim_{n\to \infty} T^{k_n}_{\frac{1}{a+1}}(0) = x,
\end{equation}
then with \eqref{eq:ckrecharact}, \eqref{eq:longcalculationck} and \eqref{eq:longcalculationck2} for all $k\in \mathbb N$ we infer  
\begin{equation}
\begin{split}
\lim_{n\to \infty} c_{k_n} &= \min\left \{\frac{2\pi^2 (1-x)}{a+1}, \frac{2\pi^2  x}{a(a+1)}\right \}\\
&= \begin{cases}
\dfrac{2\pi^2 x}{a(a+1)}, &\quad \hbox{for }x\le \dfrac{a}{a+1},\\
\ \\ 
\dfrac{2\pi^2(1-x)}{a+1}, &\quad  \hbox{for }x> \dfrac{a}{a+1},
\end{cases}
\end{split}
\end{equation}
and hence the limit set of $(c_k)_{k\geq 2}$ is dense in $[0, \frac{2\pi^2}{(a+1)^2}]$. Since the limit set is clearly closed, we conclude that it equals $[0, \frac{2\pi^2}{(a+1)^2}]$.
\end{proof}

In the Dirichlet case, we may similarly consider the limit set of the sequence \((c_k)_{k\geq 2}\) given by
\begin{equation}\label{eq:cktildedef}
c_k = \frac{\doptenergy[k,\infty](\mathcal G_a) - \frac{\pi^2 k^2}{(1+a)^2}}{k},\quad k\geq 2.
\end{equation}
On an interval $I=[0, \ell]$ we have
\begin{equation}\label{eq:relationintervals}
\doptenergy[k+1,\infty](I)= \noptenergy[k,\infty](I),\quad k\geq 2,
\end{equation}
which directly gives us the following result.
\begin{theorem}
Let $c_k$, $k\geq 2$, be defined as in \eqref{eq:cktildedef}. If $a\in \mathbb Q$, then $(c_k)_{k\geq 2}$ has a finite limit set; if $a\in \mathbb R\setminus \mathbb Q$, then the limit set of $(c_k)_{k\geq 2}$ is the interval $[-\frac{4\pi^2}{(a+1)^2}, -\frac{2\pi^2}{(a+1)^2}]$.
\end{theorem}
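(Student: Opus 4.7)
The strategy is to reduce the Dirichlet statement to Theorem~\ref{thm:exampleintheend} by lifting the interval identity \eqref{eq:relationintervals} from a single interval to $\mathcal G_a = I_1 \sqcup I_a$, and then to read off the limit set as a translation of the one found in Theorem~\ref{thm:exampleintheend}.

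The first step, which is the main technical content, is to prove the graph-level identity
\[
\doptenergy[k,\infty](\mathcal G_a) \;=\; \noptenergy[k-2,\infty](\mathcal G_a) \qquad \text{for all } k \geq 4.
\]
Because $\mathcal G_a$ has two connected components and every cluster must be connected, every admissible $k$-partition consists of $i$ clusters inside $I_1$ and $k-i$ clusters inside $I_a$ for some $i$. For the Dirichlet problem each cluster must contain at least one cut point (so that $\lambda_1$ is well defined), which forces $i \geq 2$ and $k-i \geq 2$. Optimising the max-energy on each component separately and then applying \eqref{eq:relationintervals} on $I_1$ and $I_a$ individually gives
\[
\doptenergy[k,\infty](\mathcal G_a) \;=\; \min_{2 \leq i \leq k-2} \max\bigl\{\noptenergy[i-1,\infty](I_1),\,\noptenergy[k-i-1,\infty](I_a)\bigr\},
\]
and the substitution $j := i-1$ makes the right-hand side identical to the splitting that defines $\noptenergy[k-2,\infty](\mathcal G_a)$ via \eqref{eq:noptenergyequival}.

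With this identity in hand, setting $\tilde c_{k-2} := (\noptenergy[k-2,\infty](\mathcal G_a) - \pi^2 (k-2)^2/(a+1)^2)/(k-2)$, I would compute
\[
c_k \;=\; -\frac{4\pi^2}{(a+1)^2} + \Bigl(1 - \frac{2}{k}\Bigr)\tilde c_{k-2} + \frac{4\pi^2}{k(a+1)^2},
\]
so that $c_k - \tilde c_{k-2} \to -4\pi^2/(a+1)^2$ as $k \to \infty$. Since $(\tilde c_k)$ is bounded by \eqref{eq:bounds-ck-rational-ind}, the index shift and the $O(1/k)$ correction do not affect accumulation points, and the limit set of $(c_k)$ is exactly that of $(\tilde c_k)$ translated by $-4\pi^2/(a+1)^2$. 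Theorem~\ref{thm:exampleintheend} then yields both conclusions: finiteness of the limit set for $a \in \mathbb Q$, and the interval $[0,\,2\pi^2/(a+1)^2]$ translated to $[-4\pi^2/(a+1)^2,\,-2\pi^2/(a+1)^2]$ for $a \in \mathbb R \setminus \mathbb Q$. The only delicate point is the careful bookkeeping of boundary conditions when deriving the graph-level identity, in particular the fact that the ``end'' subintervals on each $I_j$ (one Neumann and one Dirichlet endpoint, first eigenvalue $\pi^2/(4\ell^2)$) combine with the ``interior'' subintervals (two Dirichlet endpoints, first eigenvalue $\pi^2/\ell^2$) in precisely the way that reproduces the Neumann equipartition optimum on each component; the degenerate small values $k=2,3$, where no admissible Dirichlet partition of $\mathcal G_a$ exists, play no role in the asymptotic statement.
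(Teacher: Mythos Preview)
Your proposal is correct and follows essentially the same approach as the paper: both establish the graph-level identity $\doptenergy[k,\infty](\mathcal G_a)=\noptenergy[k-2,\infty](\mathcal G_a)$ from \eqref{eq:relationintervals}, compute the resulting relation between the two $c_k$-sequences, and invoke Theorem~\ref{thm:exampleintheend}. Your write-up is in fact more careful than the paper's in justifying the identity via the splitting $i\geq 2$, $k-i\geq 2$ and the edgewise application of \eqref{eq:relationintervals}; the paper simply asserts it.
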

\begin{proof}
Using \eqref{eq:relationintervals} yields
\begin{equation}
\doptenergy[k+2,\infty](\mathcal G_a) = \noptenergy[k, \infty](\mathcal G_a)
\end{equation}
and, thus,
\begin{equation*}
\begin{split}
	\frac{\doptenergy[k+2,\infty](\mathcal G_a) - \frac{\pi^2 (k+2)^2}{(1+a)^2}}{k+2} &=\frac{k}{k+2}\frac{\noptenergy[k,\infty](\mathcal G_a) - \frac{\pi^2 k^2}{(a+1)^2}}{k}\\
	&\qquad -4\frac{\pi^2}{(a+1)^2}+o(1)\qquad \text{as } k \to \infty.
	\end{split}
\end{equation*}
The assertion now follows immediately from Theorem~\ref{thm:exampleintheend}.
\end{proof}

\appendix

\section{Partitions of metric graphs}
\label{sec:appart}

In this appendix we collect the basic definitions and notions of partitions of metric graphs, introduced in \cite{KenKurLen20}, which are needed throughout, as well as a few basic properties. Everything in this appendix is taken or adapted from \cite{KenKurLen20}. Throughout, we assume that all our metric graphs $\Graph = (\VertexSet,\EdgeSet)$ are finite, compact and connected, {\it i.e.} the vertex set is finite, the edge set is finite, each edge has finite length and there is a continuous path (for the natural Euclidean metric) connecting any two points on the graph.

Formally, we identify each edge $e \in \EdgeSet$ with a compact interval in $\R$, and each vertex $v \in \VertexSet(\Graph)$ with a \emph{set} $V$ of endpoints of intervals, namely those intervals whose associated edges are incident with $v$.

\begin{definition}\label{def:urg}
We call any equivalence class of metric graphs with respect to the equivalence relation defined by isometric isomorphisms between graphs an \emph{ur-graph}.
\end{definition}

Any two metric graphs which differ by the presence of a finite number of vertices of degree two (dummy vertices) are representatives of the same ur-graph. Formally speaking, we work at the level of ur-graphs, although in practice we may suppress this technicality and work with any representative as convenient; in particular we take any finite set of points in the graph to be vertices as convenient.

\begin{definition}
\label{def:graph-cut-new1}
Let $\Graph,\Graph'$ be ur-graphs. Then $\Graph'$ is called a \emph{cut} of $\Graph$ if there exist a representative $\widehat\Graph$ of $\Graph$ and a representative $\widehat\Graph'$ of $\Graph'$ with vertex sets $\VertexSet(\widehat\Graph) = \{v_1,\ldots,v_{\cardV}\}$ and $\VertexSet(\widehat\Graph') = \{v_1',\ldots,v_{{\cardV}'}'\}$ and edge sets $\EdgeSet(\widehat\Graph)$ and  $\EdgeSet(\widehat\Graph')$, respectively, such that
\begin{enumerate}
\item[(a)] $\EdgeSet(\widehat\Graph) = \EdgeSet(\widehat\Graph')$,
\item[(b)] ${\cardV}'\geq {\cardV}$, and
\item[(c)] for all $n' = 1,\ldots,{{\cardV}'}$, keeping the identification of each vertex $v$ as a set $V$ of interval endpoints, we have
\begin{displaymath}
	{V}_{n'}'(\widehat\Graph') \subset V_n(\widehat\Graph)
\end{displaymath}
for some $n=1,\ldots,{\cardV}$.
\end{enumerate}
\end{definition}

\begin{definition}[Partitions of a graph]
\label{def:partition}
Let $k\geq 1$ and let $\Graph$ be an ur-graph.
\begin{enumerate}
\item We call any set of $k$ metric graphs
\[
	\parti := \{ \Graph_{1},\ldots,\Graph_{k}\}
\]
a \emph{$k$-partition} of $\Graph$ if there is a cut $\Graph' = \bigsqcup_{j=1}^{k_0}\Graph_{i_j}$ of $\Graph$, $k_0 \geq k$, such that for all $j=1,\ldots,k$ there exists some $i_j$ such that $\Graph_{i_j}=\Graph_j$, with $i_{j_1} \neq i_{j_2}$ for $j_1\neq j_2$. In this case, we refer to the components $\Graph_{1},\ldots,\Graph_{k}$ as the \emph{clusters} of the partition $\parti$ (\textit{arising} from the cut $\Graph'$).
\item If in (1) there exists a cut $\Graph'$ of $\Graph$ such that $\Graph' = \bigsqcup_{i=1}^k \Graph_i$, then we say the partition $\parti = \{ \Graph_1,\ldots, \Graph_k\}$ is \emph{exhaustive}.
\end{enumerate}
\end{definition}

%We will always take our partitions to be exhaustive without further comment.

\begin{definition}[Cluster supports]
\label{def:cluster-support}
Let $\Graph$ be an ur-graph and let $\parti = \{\Graph_1,\ldots,\Graph_k\}$ be a $k$-partition of $\Graph$, arising from the cut $\Graph' = \bigsqcup_{i=1}^{k_0}\Graph_i$, $k_0\geq k$, of $\Graph$. 
We identify $\Graph$ and $\Graph'$ with any respective representatives satisfying the conditions of Definition~\ref{def:graph-cut-new1}, that is, in such a way that $\EdgeSet (\Graph) = \EdgeSet (\Graph')$.
\begin{enumerate}
\item For each $i=1,\ldots,k$, we denote by $\Omega_i$ the unique closed subset of $\Graph$ such that
\begin{displaymath}
\{ e \in \EdgeSet (\Graph'): e \subset \Graph_i\}= \left\{ e \in \EdgeSet (\Graph): e \subset \Omega_i \right\}
\end{displaymath}
and call the set $\Omega_i$ the \emph{cluster support} (associated with the cluster $\Graph_i$), or just \emph{support} for short.
\item We call the set
\begin{equation}
\label{eq:partition-support}
	\Omega := \bigcup\limits_{i=1}^k \Omega_i
\end{equation}
the \emph{support} of the partition $\parti$.
\end{enumerate}
\end{definition}

In practice we may simply (but somewhat imprecisely) identify the cluster $\Graph_i$ with the subset $\Omega_i$ of $\Graph$ to which it corresponds. We always assume we have chosen a representative of the ur-graph $\Graph$ in such a way that $\partial\Omega_i \subset \VertexSet(\Graph)$ for all $i$, that is, we only cut through vertices of (the representative of) $\Graph$ in order to form the clusters $\Graph_i$.

\begin{definition}
\label{def:neighbours}
Let $\parti = \{ \Graph_1, \ldots, \Graph_k \}$ be a $k$-partition of  $\Graph$, and denote by $\Omega_1,\ldots,\Omega_k$ the respective cluster supports. We say that $\Graph_i,\Graph_j$, $i,j=1,\ldots,k$, $i \neq j$, are  \emph{neighbours}, or \emph{neighbouring clusters}, if $\partial\Omega_i \cap \partial\Omega_j \neq \emptyset$.
\end{definition}

We conclude with a subset of the classification of partitions introduced in \cite[Definition~2.12]{KenKurLen20}. In the current context we are only interested in these two types, as they are exactly the classes of partitions in which our spectral functionals admit minimisers.

\begin{definition}[Classification of partitions]
\label{def:classification}
Let $\Graph$ be an ur-graph.
\begin{enumerate}
\item Any partition $\parti = \{\Graph_1, \ldots, \Graph_k\}$ of $\Graph$ satisfying Definition~\ref{def:partition} will be called \textit{connected}, if in addition each cluster $\Graph_1,\ldots,\Graph_k$ is connected. We denote the set of all exhaustive connected $k$-partitions of $\Graph$ by $\mathfrak{C}_k (\Graph)$.
\item A connected partition $\parti$ of $\Graph$ will be called \emph{rigid} if we only cut vertices on the boundary of $\Omega_i$ to create the graph $\Graph_i$. We denote the set of all exhaustive rigid $k$-partitions of $\Graph$ by $\mathfrak{R}_k (\Graph)$.
\end{enumerate}
\end{definition}

\section{Isoperimetric inequalities}
\label{sec:appii}

The first isoperimetric inequality for metric graphs was discovered by Nicaise 35 years ago; it is sharp, as shown by Friedlander 20 years later. However, it has been observed by several authors that special classes of graphs allow for improved isometric inequalities: to help keep the paper more self-contained and since we make use of some of them repeatedly, we list here the most relevant.

\begin{proposition}
\label{thm:nicaise}
Let $\Graph$ be any compact connected metric graph. Then the following assertions hold.
\begin{enumerate}[(1)]
\item We have
\begin{equation}
\label{eq:nicaise}
	\lambda_1 (\Graph) \geq \frac{\pi^2}{4|\Graph|^2} \qquad \text{and} \qquad \mu_2 (\Graph) \geq \frac{\pi^2}{|\Graph|^2},
\end{equation}
where in the first case $\Graph$ is equipped with at least one Dirichlet vertex. Equality in either inequality implies that $\Graph$ is a path graph (interval) of length $|\Graph|$, with a Dirichlet vertex at exactly one endpoint and a natural (Neumann) condition at the other in the first case, and natural conditions at both endpoints in the second case.

\item 
If additionally (possibly upon identifying all Dirichlet vertices)  $\Graph$ is doubly connected, then we have 
\begin{equation}
\label{eq:nicaise-2}
	\lambda_1 (\Graph) \geq \frac{\pi^2}{|\Graph|^2} \qquad \text{and} \qquad \mu_2 (\Graph) \geq \frac{4\pi^2}{|\Graph|^2},
\end{equation}
In this case, equality is attained only by 2-regular pumpkin chains (second case), or 2-regular pumpkin chains with two edges of equal length attached to one of the endpoints and the degenerate case of an interval with two Dirichlet endpoints (\emph{caterpillar graphs}, first case).
\end{enumerate}
\end{proposition}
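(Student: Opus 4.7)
The inequalities in part (1) are the classical isoperimetric bounds due to Nicaise, with the equality characterisation carried out by Friedlander; the improved bounds in part (2) for doubly connected graphs are more recent contributions (Band--L\'evy for $\mu_2$, Kennedy--Kurasov--Malenova--Mugnolo and related work for $\lambda_1$). My plan is therefore to establish the assertion by invoking these references, but a self-contained sketch would proceed as follows.

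For part (1), I would first establish the Dirichlet bound $\lambda_1 \geq \pi^2/(4|\Graph|^2)$ by a transplantation argument: given a Dirichlet vertex $v_0$ and the maximising point $x_M$ of a positive eigenfunction $\psi$, a shortest path joining $v_0$ and $x_M$ provides a one-dimensional domain of length at most $|\Graph|$ on which the restriction of $\psi$ is an admissible test function for the first mixed Dirichlet--Neumann eigenvalue of an interval, whose sharp value is $\pi^2/(4|\Graph|^2)$. The bound on $\mu_2$ would then follow from a nodal domain argument: an eigenfunction for $\mu_2$ has mean zero, hence at least two nodal domains $\Omega_\pm$ with $\min(|\Omega_+|,|\Omega_-|) \leq |\Graph|/2$, each satisfying Dirichlet conditions on its boundary; applying the preceding bound to the smaller nodal domain yields $\mu_2 \geq \pi^2/(4(|\Graph|/2)^2) = \pi^2/|\Graph|^2$. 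Equality in either step forces the corresponding object to be an interval with the stated endpoint conditions.

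For part (2), the improved factor of $4$ reflects the fact that double connectedness is the graph-theoretic analogue of the periodic setting. I would prove $\mu_2 \geq 4\pi^2/|\Graph|^2$ via a symmetric decreasing rearrangement of the eigenfunction onto a cycle of length $|\Graph|$, exploiting that every super-level set of $\psi$ disconnects the graph into at least two components (by $2$-edge-connectedness), which yields a Rayleigh quotient comparison with the sharp Poincar\'e constant $4\pi^2/|\Graph|^2$ on the cycle. The Dirichlet bound of (2) would then follow by a doubling trick: after first identifying all Dirichlet vertices to a single point $v_0$, one glues two isometric copies of $\Graph$ along $v_0$ to obtain a doubly connected graph $\widehat\Graph$ of total length $2|\Graph|$; an odd extension of the Dirichlet eigenfunction is an admissible test function for $\mu_2(\widehat\Graph)$, and the already established Neumann bound on $\widehat\Graph$ yields $\lambda_1(\Graph) \geq \pi^2/|\Graph|^2$. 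The main anticipated obstacle is the rigidity analysis for the equality cases in (2): tracing through when the rearrangement is an isometry and when the doubled graph collapses to a cycle leads, after careful unwinding, to the characterisation as $2$-regular pumpkin chains, respectively to the caterpillar variant with a pendant pair of Dirichlet edges of equal length.
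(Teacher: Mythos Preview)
Your primary approach---establishing the proposition by citation---is exactly what the paper does: it simply refers to Nicaise and Friedlander for part (1), and to Band--L\'evy and Berkolaiko--Kennedy--Kurasov--Mugnolo (not Kennedy--Kurasov--Malenov\'a--Mugnolo, as you write) for part (2). So at that level there is nothing to add.

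However, your supplementary sketch for the Dirichlet bound in part (1) has a genuine gap. Restricting the eigenfunction $\psi$ to a shortest path $P$ from the Dirichlet vertex $v_0$ to the maximum point $x_M$ does yield a function in $H^1_0(P;\{v_0\})$, and hence
\[
\frac{\int_P |\psi'|^2}{\int_P |\psi|^2}\;\geq\;\frac{\pi^2}{4|P|^2}\;\geq\;\frac{\pi^2}{4|\Graph|^2},
\]
but there is no reason why the left-hand side should be bounded above by $\lambda_1(\Graph)=\int_\Graph |\psi'|^2/\int_\Graph |\psi|^2$: both numerator and denominator decrease when one passes from $\Graph$ to $P$, and the ratio can go either way. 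The correct argument (Nicaise, Friedlander) proceeds instead by symmetric decreasing rearrangement of $\psi$ onto an interval of length $|\Graph|$, using the co-area formula to compare Dirichlet integrals; this is in fact the same mechanism you correctly invoke for the doubly connected case in part (2). Your nodal-domain reduction of the Neumann bound to the Dirichlet bound in (1), and your doubling argument for $\lambda_1$ in (2), are both sound once the Dirichlet bound is available.
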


The inequalities in (1) may be found in \cite[Th\'eor\`eme~3.1]{Nic87}. For the characterisation of equality, see for example \cite[Theorem~1]{Fri05}. For the inequalities in (2) we refer to~\cite[Theorem~2.1]{BanLev17}	 and~\cite[Theorem~3.4 and Lemma~4.3]{BerKenKur17}.

\end{document}